 \newtheorem{thm}{Theorem}[section]
 \newtheorem{cor}[thm]{Corollary}
 \theoremstyle{definition}
 \numberwithin{equation}{section}
 \newcommand \del \partial
\newcommand{\Rmnum}[1]{\expandafter\@slowromancap\romannumeral #1@}
\newtheorem{theorem}{Theorem}[section]
\newtheorem{lemma}[theorem]{Lemma}
\newtheorem{proposition}[theorem]{Proposition}
\theoremstyle{definition}
\newtheorem{definition}[theorem]{Definition}
\newtheorem{remark}[theorem]{Remark}
\begin{document}
\title{The Global Nonlinear Stability
of Self-Gravitating Irrotational Chaplygin Fluids in a FRW Geometry}
\author{Philippe G. LeFLOCH\footnotemark[1] \hskip.15cm and
 Changhua WEI\footnotemark[2]}
\renewcommand{\thefootnote}{\fnsymbol{footnote}}

\footnotetext[1]{Laboratoire Jacques-Louis Lions \& Centre National de la Recherche Scientifique, Universit\'e Pierre et Marie Curie (Paris 6), 4 Place Jussieu, 75252 Paris, France. Email: contact@philippelefloch.org}

\footnotetext[2]{Corresponding author: School of Mathematical Sciences, Fudan University,
Shanghai,  200433,  China. Email: changhuawei1986@gmail.com.
\newline
{\sl Key words and phrases}: FRW cosmology; generalized Chaplygin gas; conformal transformation; wave coordinates.}

\date{}
\maketitle
\allowdisplaybreaks[4]

\begin{abstract}
We analyze the global nonlinear stability of FRW (Friedmann-Robertson-Walker) spacetimes in presence of an irrotational perfect fluid. We assume that the fluid is governed by the so-called (generalized) Chaplygin equation of state $p=-\frac{A^{2}}{\rho^{\alpha}}$ relating the pressure to the mass-energy density, in which $A>0$ and $\alpha\in (0,1]$ are constant. We express the Einstein equations in wave gauge as a systems of coupled nonlinear wave equations and by performing a suitable conformal transformation, we are able to analyze the global behavior of solutions in future timelike directions. We establish that the $(3+1)$-spacetime metric and the mass density and velocity vector describing the evolution of the fluid remain globally close to a reference FRW solution, under small initial data perturbations. Our analysis provides also the precise asymptotic behavior of the perturbed solutions in the future directions.
\end{abstract}

\setcounter{tocdepth}{6}
\tableofcontents
 
\section{Introduction}\label{section:1}

\subsection{Main objective}\label{section:1.1}

Recent cosmological observations predict that our Universe is currently enjoying a phase of  accelerated expansion, which could be described for instance by introducing the notion of dark energy. An analysis based on the Big Bang model reveals that our Universe is spatially flat and consists of 70 percents of dark energy (with negative pressure), while the remaining 30 percents consist of dust matter (i.e. cold dark matter plus baryons), as well as negligible radiation. It has been predicted that the dark energy may be responsible for the present acceleration of our Universe. These predictions from physics rely on several possible theories of dark energy. One possibility is to include a cosmological constant in the Einstein equations, while another quite interesting approach models the matter content as a Chaplygin gas or, more generally a generalized Chaplygin gas (GCG) or a modified Chaplygin gas (MCG). In the past decade, these models have been studied extensively by elementary methods of analysis and via numerical simulations. For a more detailed description of these models, we refer to \cite{B,B-B,D-B,G-K,HF,K-M,K-P,P-J,Y} and the references therein.

In this paper, we consider the nonlinear future stability of self-gravitating fluids governed by the (generalized) Chaplygin equation of state and, therefore, analyze the global existence problem for the Einstein-Euler system.

\begin{definition} A {\bf generalized Chaplygin gas,} by definition, is a perfect fluid governed by the equation of state
\begin{equation}\label{1.1}
p =-\frac{A^{2}}{\rho^{\alpha}},
\end{equation}
relating the pressure $p=p(\rho)$ to the mass-energy density $\rho \geq 0$ of this fluid,
where $A$ is a positive constant and $\alpha \in (0,1]$. This is known as a {\bf Chaplygin gas} when $\alpha=1$.
\end{definition}

The Einstein-Euler system for a generalized Chaplygin gas read as follows:
\begin{equation}\label{1.2}
\aligned
\widetilde{G}^{\mu\nu}& =\widetilde{T}^{\mu\nu},\\
\widetilde{\nabla}_{\mu}\widetilde{T}^{\mu\nu}& =0,
\endaligned
\end{equation}
where
$\widetilde{G}^{\mu\nu}=\widetilde{R}ic^{\,\mu\nu}-\frac{1}{2}\widetilde{R}\widetilde{g}^{\mu\nu}$ is Einstein's curvature tensor of an unknown metric
$\widetilde{g}=\widetilde{g}_{\mu\nu}dx^{\mu}dx^{\nu}$, 
while 
$\widetilde{R}ic_{\mu\nu}$ and $\widetilde{R}$ are the Ricci and scalar curvature of $\widetilde{g}$, respectively, and $\widetilde{\nabla}_{\mu}$ denotes the covariant derivative of $\widetilde{g}$.
Here,
$\widetilde{T}^{\mu\nu}$ denotes the stress energy tensor
\begin{equation}\label{1.4}
\widetilde{T}^{\mu\nu}=(\rho+p)\widetilde{u}^{\mu}\widetilde{u}^{\nu}+p\widetilde{g}^{\mu\nu}\quad 
\end{equation}
where $\rho$ denotes the energy density, $p=p(\rho)$ denotes the pressure and is given by \eqref{1.1}, $\widetilde{u}=(\widetilde{u}^{0},\cdots,\widetilde{u}^{3})$ denotes the unit, future-directed, timelike 4-velocity, $\widetilde{g}^{\mu\nu}$ is the inverse of $\widetilde{g}_{\mu\nu}$.
As is standard, we use Einstein's summation convention, i.e. we sum over repeated lower and upper indices.

We are interested in irrotational full fluids, namely, under the equation of state \eqref{1.1}, fluid such that there exists a potential function $\Psi$ allowing us to express the stress energy tensor in form
\begin{equation}\label{1.5}
\aligned
\widetilde{T}^{\mu\nu}
& =(\rho+p)\widetilde{u}^{\mu}\widetilde{u}^{\nu}+p\widetilde{g}^{\mu\nu}
\\
& =A^{\frac{2}{\alpha+1}}I^{-\frac{\alpha}{\alpha+1}}\left[\frac{\widetilde{h}^{\frac{1-\alpha}{2\alpha}}
\widetilde{\del}^{\mu}\Psi\widetilde{\del}^{\nu}\Psi}{(I-\widetilde{h}^{\frac{\alpha+1}{2\alpha}})^{\frac{1}{\alpha+1}}}
-(I-\widetilde{h}^{\frac{\alpha+1}{2\alpha}})^{\frac{\alpha}{\alpha+1}}\widetilde{g}^{\mu\nu}\right],
\endaligned
\end{equation}
where $I$ is a constant depending only on the initial data of the state variables and $\widetilde{h}=-\widetilde{g}_{\mu\nu}\widetilde{\del}^{\mu}\Psi\widetilde{\del}^{\nu}\Psi\geq 0$. The  derivation of \eqref{1.5} from \eqref{1.1} will be presented in Section \ref{section:2.2}, below.

It is well-known that there exists a particular family of solutions to \eqref{1.2}, that is, the
Friedmann-Robertson-Walker cosmological spacetimes, and
our main result in the present work can be simply described as follows.

\begin{theorem}[The nonlinear future stability of Friedmann-Robertson-Walker spacetimes. Preliminary version]
\label{theorem:1.2}
The FRW solutions to the Einstein-Euler system for a generalized Chaplygin gas are
nonlinearly stable toward the future, when the initial data set prescribed on an initial hypersurface is a small perturbation of this FRW solution.
\end{theorem}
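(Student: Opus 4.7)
The plan is to recast the Einstein-Euler system with Chaplygin equation of state \eqref{1.1} as a coupled quasilinear wave system, and then exploit a conformal transformation to turn the future stability problem on the expanding FRW background into a bounded-time Cauchy problem that can be closed by weighted energy estimates plus a bootstrap argument.

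First, using the irrotational ansatz, the fluid conservation law $\widetilde{\nabla}_\mu\widetilde{T}^{\mu\nu}=0$ applied to the representation \eqref{1.5} reduces to a single second-order quasilinear wave equation for the potential $\Psi$, whose principal symbol is governed by $\widetilde{h}$ and whose associated sound speed satisfies $c_s^2=-\alpha p/\rho\in(0,1]$ for $\alpha\in(0,1]$, keeping the equation strictly hyperbolic in a neighborhood of the FRW state. In parallel, imposing wave coordinates $\widetilde{\Box}_{\widetilde{g}}x^\mu=0$ reduces the Einstein equations \eqref{1.2} to the standard quasilinear wave system $\widetilde{g}^{\alpha\beta}\del_\alpha\del_\beta\widetilde{g}_{\mu\nu}=F_{\mu\nu}(\widetilde{g},\del\widetilde{g})+2\widetilde{T}_{\mu\nu}-\mathrm{tr}(\widetilde{T})\widetilde{g}_{\mu\nu}$, the propagation of the gauge and Einstein constraints being a standard verification.

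The central step is the conformal rescaling $\widetilde{g}=a(\tau)^2 g$, where $a$ is the FRW scale factor and $\tau$ the conformal time. In this gauge the reference spacetime becomes (close to) Minkowskian on a bounded conformal-time interval $[\tau_0,\tau_\star)$, while the cosmic expansion is transferred to explicit $\tau$-dependent coefficients which, in the energy identities, play the role of a favorable friction. Writing $g=g_{\mathrm{FRW}}+h$ and $\Psi=\Psi_{\mathrm{FRW}}+\psi$, one derives the perturbation system for $(h,\psi)$ and constructs weighted high-order Sobolev energies $E_k(\tau)$ incorporating suitable powers of $a(\tau)$ together with all translation-commuted derivatives. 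The aim is to prove a differential inequality of the form $\frac{d}{d\tau}E_k^2\le -c(\tau)E_k^2+C\,E_k^3$, so that initial smallness propagates to all $\tau<\tau_\star$; a standard bootstrap then yields global existence and sharp decay, which translates back into the precise asymptotic behavior in the original cosmological time.

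The main obstacle will be to choose the conformal factor, the weights, and the commuting vector fields so that the leading-order energy identity has a sign compatible simultaneously with (i) the slow expansion regime, in which the dispersive mechanism is marginal—most critically in the pure Chaplygin case $\alpha=1$, where $c_s\to 1$ and the usual null structure is lost—and (ii) the quasilinear coupling between $h$ and $\psi$ through the stress-energy source, whose most sensitive contributions are quadratic in $\del\Psi$ and require a delicate cancellation in order to avoid a derivative loss. Once this algebraic structure is pinned down, the remainder of the argument follows the now-classical scheme of local well-posedness in high-order Sobolev spaces plus continuation.
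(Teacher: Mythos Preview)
Your high-level plan---wave gauge for the metric, scalar wave equation for the irrotational fluid potential, conformal rescaling to a finite interval---matches the paper's architecture. However, the technical closure and, more importantly, your diagnosis of the obstacles diverge from what the paper actually does.

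\textbf{Where the paper differs.} The paper does not run a weighted-energy bootstrap of the form $\frac{d}{d\tau}E_k^2\le -c(\tau)E_k^2+CE_k^3$. Instead, after the conformal change $g_{\mu\nu}=\tau^2\widetilde g_{\mu\nu}$ and a carefully designed wave gauge $Z^\mu=0$ (not the naive harmonic gauge, but one adapted so that the background $\eta$ satisfies $Z^\mu\equiv 0$), it introduces new variables---in particular $\textbf{u}^{0\mu}=(g^{0\mu}-\eta^{0\mu})/(2\tau)$, the densitized $3$-metric $\textbf{g}^{ij}$, and $\textbf{q}$---so that the full Einstein--fluid system becomes a symmetric hyperbolic system of the Oliynyk type $B^\mu\partial_\mu\textbf{u}=\tau^{-1}\textbf{B}\textbf{P}\textbf{u}+H(\tau,\textbf{u})$, where $\textbf{P}$ is a projection and the singular $\tau^{-1}$ term contributes with the good sign in the energy identity. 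Global existence up to $\tau=0$ and the precise asymptotics then follow from a general black-box proposition (their Proposition~\ref{pro:3.16}) rather than from a bespoke bootstrap.

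\textbf{Where your obstacle analysis is off.} First, the expansion is not slow: from \eqref{3.13} one has $w=\dot a/a$ bounded above and below by positive constants, so $a(t)$ grows exponentially; the Chaplygin pressure tends to $-\rho$ and acts asymptotically like a positive cosmological constant. The dispersive mechanism is therefore strong, not marginal. Second, the case $\alpha=1$ ($c_s\to 1$) is not the delicate one here; when the sound cone coincides with the light cone the fluid wave equation has the same principal part as the metric equations, which if anything simplifies the coupling. Third, there is no derivative loss in the irrotational setting: both subsystems are genuine quasilinear wave equations. The actual difficulty the paper isolates is the \emph{degeneracy of the enthalpy}: $\widetilde h=-\widetilde g_{\mu\nu}\widetilde\partial^\mu\Psi\widetilde\partial^\nu\Psi$ vanishes like $\tau^{6\alpha}$ on the background, and the fluid principal symbol and source terms contain negative powers of $\widetilde h$. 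The cure is a second conformal rescaling specific to the potential, $\widetilde\partial^\mu\Psi=\tau^{\lambda}\partial^\mu\Psi$ with $\lambda=3\alpha+1$, chosen precisely so that the rescaled $h$ is nondegenerate and the dangerous $\tau^{-1}\partial^\tau\Psi$ term in the fluid wave equation cancels (see \eqref{4.36}). Without identifying this weight your scheme would not close: the fluid coefficients would become singular at $\tau=0$ in a way not absorbed by any energy weight.

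In short, your outline is on the right track but the main technical point---the $\tau^{3\alpha+1}$ rescaling of $\partial\Psi$ to defeat the enthalpy degeneracy, together with the Oliynyk-type singular symmetric hyperbolic reformulation---is missing, and the obstacles you anticipate (slow expansion, loss of null structure, derivative loss) are not the ones that actually arise.
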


A more technical version of the above theorem will be provided below after we will introduce some necessary notations. At this junction, we want to recall some eariler work on the Einstein-Euler system, especially for the expanding spacetimes of interest in the present work.

\subsection{Background on the problem}\label{section:1.2}

One natural problem one may ask is whether or not our universe dominated by above models is stable. Due to the importance of this problem in mathematics and physics, it attracts a lot of attention in general relativity and great improvements have been made on the Einstein-Euler equations with positive cosmological constant in recent years. We will describe these improvements clearly below. While we still know nothing on the future nonlinear stability of our universe dominated by Chaplygin gas (or GCG, MCG) in $1+3$ spacetime dimensions, that is one motivation for us to study this problem.

The Einstein-Euler equations with positive cosmological constant read
\begin{equation}\label{1.6}
\aligned
\widetilde{G}^{\mu\nu}&=\widetilde{T}^{\mu\nu}-\Lambda\widetilde{g}^{\mu\nu},\\
\widetilde{\nabla}_{\mu}\widetilde{T}^{\mu\nu}&=0,
\endaligned
\end{equation}
where $\Lambda>0$ denotes the positive cosmological constant.
One advance for equation \eqref{1.6} with metric \eqref{1.3} and stress energy tensor \eqref{1.4} is focused on the fluid with equation of state
\begin{equation}\label{1.7}
p=C_{s}^{2}\rho,
\end{equation}
where $C_{s}\geq 0$ denotes the speed of sound. When $C_{s}=0$, it is called the dust universe, and when $C_{s}=\sqrt{\frac{1}{3}}$, it is used to describe the radiation universe.

There are mainly two approaches to deal with the Einstein-Euler system \eqref{1.6} mathematically: working with the given spacetime metric, or alternatively working with a conformally equivalent metric. Both approaches play an important role in the mathematical theory of general relativity. The well-known family of Friedmann-Lema\^{\i}tre-Robertson-Walker (FLRW) solutions to \eqref{1.6} represents a homogeneous, fluid filled universe that is undergoing accelerated expansion. Rodnianski and Speck \cite{R-S} established the future global stability of a class of FLRW solutions under the assumption of zero vorticity and  $0<C_{s}<\sqrt{\frac{1}{3}}$. The so-called wave gauge approach  in \cite{R-S} was first used by Ringstr\"{o}m in \cite{R} who treated scalar fields without positive cosmological constant:  $\widetilde{T}^{\mu\nu}=\widetilde{\del}^{\mu}\Psi\widetilde{\del}^{\nu}\Psi-[\frac{1}{2}\widetilde{g}_{\mu\nu}\widetilde{\del}^{\mu}\Psi\widetilde{\del}^{\nu}\Psi+V(\Psi)]\widetilde{g}^{\mu\nu}$. The presence of $V(\Psi)$ plays an analogue role as $\Lambda$, under the assumption that $V(0)>0,\, V^{'}(0)=0,\,V^{''}(0)>0$. The main observation in these papers is two-fold: on one hand, the Einstein-nonlinear scalar field system can be formulated as a system of nonlinear wave equations provided one introduces generalized wave coordinates, inspired by the standard wave coordinates used earlier (for instance in Lindblad and Rodnianski \cite{L-R} for the vacuum case, revisited recently in LeFloch and Ma \cite{LM});
on the other hand, the problem under consideration describes the expansion of the universe, and the expansion provides one dispersive terms, which leads to exponential decay for solutions. Later, Speck (and collaborator) \cite{H-S,S} proved that above nonlinear stability result remains true even for a fluid with non-vanishing vorticity when $0\leq C_{s}< \sqrt{\frac{1}{3}}$. When $C_{s}=\sqrt{\frac{1}{3}}$, L\"{u}bbe and Valiente Kroon \cite{L-V} have shown the
desired stability property by relying on Friedrich's conformal method \cite{F1,F2} ---an approach entirely different from the method in \cite{H-S,R-S,S}. More recently, a very efficient method was proposed by Oliynyk \cite{O}, which combine the conformal method with wave coordinates in order to handle the case $0<C_{s}\leq \sqrt{\frac{1}{3}}$ with non-vanishing vorticity. One  advantage of the latter method is that, under a conformal transformation, the whole Einstein-Euler system can be turned into a symmetric hyperbolic system (with singular terms) and solutions defined on finite interval of time. The singular terms enjoy good positivity properties and, by a standard energy estimate, one can then get the global nonlinear stability of a family of  FLRW solutions and established the asymptotic behavior of perturbed solutions in the far future.  Finally, we recall that, in the regime $C_{s}>\sqrt{\frac{1}{3}}$, Rendall \cite{Ren} has found some evidence for instability.

\subsection{Nonlinear stability property in wave coordinates}\label{section:1.3}

We thus study the Einstein-Euler system \eqref{1.2} with the stress energy tensor \eqref{1.5}.
In order to describe our ideas and main results clearly, we must fix our notations at first. The spacetime that we consider are of the form $(0,1]\times\mathbb T^{3}$. For the coordinates, we use $x^{i}$ (i=1,2,3) to denote the spacial coordinates and use $x^{0}=\tau$ to denote the time coordinate. We always use the Greek indices to denote the spacetime coordinates that run form 0 to 3 and Latin indices to denote the spacial coordinates that run from 1 to 3. In this system of coordinates, when we say the fluid velocity is future directed, we mean that
\begin{equation}\label{1.8}
\widetilde{u}^{0}<0.
\end{equation}
As Oliynyk in \cite{O}, we do not consider the original metric $\widetilde{g}$ directly, but instead consider the conformally transformed metric
\begin{equation}\label{1.9}
g_{\mu\nu}=e^{-2\Phi}\widetilde{g}_{\mu\nu},\quad \text{or}\quad g^{\mu\nu}=e^{2\Phi}\widetilde{g}^{\mu\nu},
\end{equation}
where
\begin{equation}\label{1.10}
\Phi=-\ln(\tau).
\end{equation}
Under the conformal transformation \eqref{1.9} and \eqref{1.10}, the equations \eqref{1.2} that we consider in this paper is the following Cauchy problem
\begin{equation}\label{1.11}
\aligned
G^{\mu\nu} &=T^{\mu\nu}:=e^{4\Phi}\widetilde{T}^{\mu\nu}+2(\nabla^{\mu}\nabla^{\nu}\Phi-\nabla^{\mu}\Phi\nabla^{\nu}\Phi)-(2\Box_{g}\Phi+|\nabla\Phi|_{g}^{2})g^{\mu\nu},\\
\nabla_{\mu}\widetilde{T}^{\mu\nu} &=-6\widetilde{T}^{\mu\nu}\nabla_{\mu}\Phi+g_{\kappa\lambda}\widetilde{T}^{\kappa\lambda}g^{\mu\nu}\nabla_{\mu}\Phi,\\
g^{\mu\nu}|_{\tau=1} &=g_{0}^{\mu\nu}(x),\quad \del_{\tau}g^{\mu\nu}|_{\tau=1}=g_{1}^{\mu\nu}(x),\quad \del_{\mu}\Psi|_{\tau=1}=m_{\mu}(x).
\endaligned
\end{equation}

\begin{remark}\label{remark:1.3}
Above initial data set $(g_{0}^{\mu\nu}(x),g_{1}^{\mu\nu}(x),m_{\mu}(x))$ can not be chosen arbitrarily. They must satisfy the Gauss-Codazzi equations, which are equivalent to $(G^{\mu0}-T^{\mu0})|_{\tau=1}=0$. Furthermore, they will also satisfy the wave coordinates condition $Z^{\mu}|_{\tau=1}=0$, the precise definition of $Z^{\mu}$ can be found in Section \ref{section:3.1}.
\end{remark}

We know that there exist a family of FRW solutions $(\widetilde{\eta},\overline{\Psi}(\tau))$ to the original Einstein-Euler equations of GCG \eqref{1.2}. $\widetilde{\eta}$ takes the following form
\begin{equation*}
\widetilde{\eta}=\frac{1}{\tau^{2}}\left(-\frac{1}{w^{2}(\tau)}d\tau^{2}+\sum_{i=1}^{3}(dx^{i})^{2}\right),
\end{equation*}
where
\begin{equation*}
\tau=\frac{1}{a(t)}, \quad w=\frac{1}{a(t)}\frac{da(t)}{dt}:=\frac{\dot{a}(t)}{a(t)},
\end{equation*}
for some scale factor $a(t)$ with $t\in[0,+\infty)$. Under conformal transformation, the conformal metric $g$ can be seen as small perturbations to the conformal background metric $\eta$ which is given by
\begin{equation*}
\eta=-\frac{1}{w^{2}}d\tau^{2}+\sum_{i=1}^{3}(dx^{i})^{2}.
\end{equation*}
Define the densitized three-metric
$
\textbf{g}^{ij}=det(\check{g}_{lm})^{\frac{1}{3}}g^{ij},
$
where
$
\check{g}_{lm}=(g^{lm})^{-1},
$
and introduce the variable
\begin{equation*}
\textbf{q}=g^{00}-\eta^{00}+\frac{\eta^{00}}{3}\ln(\det(g^{ij})).
\end{equation*}
With above notations, our main result can be described as follows.

\begin{theorem}[The nonlinear future stability of Friedmann-Robertson-Walker spacetimes.  Statement in wave coordinates]\label{theorem:1.4}
Suppose $k\geq3$, $g_{0}^{\mu\nu}(x)\in H^{k+1}(\mathbb T^{3})$, $g_{1}^{\mu\nu}(x),\,m_{\mu}(x)\in H^{k}(\mathbb T^{3})$ for all $x\in\mathbb T^{3}$.
Then there exists a small parameter $\epsilon>0$, such that if the initial data sets satisfy the constraint equations of Remark 1.3 and
$$
\|g_{0}^{\mu\nu}-\eta^{\mu\nu}(1)\|_{H^{k+1}}+\|g_{1}^{\mu\nu}-\del_{\tau}\eta^{\mu\nu}(1)\|_{H^{k}}+\|m_{\mu}(x)-\del_{\mu}\overline{\Psi}(1)\|_{H^{k}}<\epsilon,
$$
then there exists a unique classical solution $g^{\mu\nu},\,\Psi\in C^{2}((0,1]\times\mathbb T^{3})$ to the conformal Einstein-Euler system of GCG \eqref{1.11} and it has the following regularity
$$
g^{\mu\nu}\in C^{0}((0,1],H^{k+1}(\mathbb T^{3}))\cap C^{0}([0,1],H^{k}(\mathbb T^{3}))\cap C^{1}((0,1],H^{k}(\mathbb T^{3}))\cap C^{1}([0,1],H^{k-1}(\mathbb T^{3}))
$$
and
$$
\del_{\mu}\Psi\in C^{0}((0,1],H^{k}(\mathbb T^{3}))\cap C^{0}([0,1], H^{k-1}(\mathbb T^{3})).
$$
The solution also satisfies
$$
\|g^{\mu\nu}(\tau)-\eta^{\mu\nu}(\tau)\|_{H^{k+1}}+\|g_{\kappa}^{\mu\nu}(\tau)-\del_{\kappa}\eta^{\mu\nu}(\tau)\|_{H^{k}}
+\|\del_{\mu}\Psi(\tau)-\del_{\mu}\overline{\Psi}(\tau)\|_{H^{k}}<C\epsilon,
$$
for some positive constants $C$.
Moreover, there exists $\gamma^{\mu}\in H^{k-1}(\mathbb T^{3})$, such that the solution for all $\tau\in[0,1]$ satisfies
\begin{eqnarray*} 
\|\del_{\tau}g^{0\mu}(\tau)-2\tau^{-1}(g^{0\mu}(\tau)-\eta^{0\mu}(\tau))+\gamma^{\mu}\|_{H^{k-1}}&\leq&C\epsilon\tau,\\
\|\del_{\tau}g^{0\mu}(\tau)-\tau^{-1}(g^{0\mu}(\tau)-\eta^{0\mu}(\tau))\|_{H^{k-1}}+\|\del_{i}g^{0\mu}(\tau)\|_{H^{k-1}}&\leq&-C\epsilon\tau\ln{\tau},\\
\|q(\tau)-q(0)\|_{H^{k}}+\|\del_{\tau}q(\tau)\|_{H^{k-1}}&\leq&C\epsilon\tau,\\
\|\textbf{g}^{ij}(\tau)-\textbf{g}^{ij}(0)\|_{H^{k}}+\|\del_{\tau}\textbf{g}^{ij}(\tau)\|_{H^{k-1}}&\leq&C \tau,\\
\|\del_{\mu}\Psi(\tau)-\del_{\mu}\Psi(0)\|_{H^{k-1}}&\leq&C\epsilon\tau,\\
\|\del_{\tau}\overline{\Psi}(0)-\del_{\tau}\Psi(0)\|_{H^{k-1}}&\leq&C\epsilon,\\
\|\del_{i}\Psi(0)\|_{H^{k-1}}&\leq&C\epsilon,
\end{eqnarray*}
with, furthermore,
$$
p_{\rho}(0)= \alpha.
$$
\end{theorem}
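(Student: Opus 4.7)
The plan is to follow the combined conformal-method/wave-coordinates strategy introduced by Oliynyk \cite{O} and adapt it to the Chaplygin pressure law \eqref{1.1}. The conformal transformation \eqref{1.9}--\eqref{1.10} already compactifies the future of the physical spacetime to the finite interval $\tau\in(0,1]$, so global future stability of the FRW solution $(\widetilde\eta,\overline\Psi)$ becomes a statement about existence of a solution to \eqref{1.11} on $(0,1]\times\mathbb T^{3}$ that stays close to $(\eta,\overline\Psi)$ and admits a limit at $\tau=0$. The first step is to impose the wave gauge $Z^{\mu}=0$ from Section \ref{section:3.1}, substitute the irrotational stress energy tensor \eqref{1.5}, and rewrite \eqref{1.11} as a coupled system of nonlinear wave equations for the perturbations $h^{\mu\nu}=g^{\mu\nu}-\eta^{\mu\nu}$ and $\partial_{\mu}\psi=\partial_{\mu}\Psi-\partial_{\mu}\overline{\Psi}$. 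Since $\Phi=-\ln\tau$, each derivative of $\Phi$ contributes a factor of $\tau^{-1}$, so the reduced system has the schematic form
$$
g^{\kappa\lambda}\partial_{\kappa}\partial_{\lambda}u+\tau^{-1}\mathcal A(u)\partial_{\tau}u+\tau^{-2}\mathcal B(u)u=\mathcal N(u,\partial u),
$$
where $u$ collects the perturbations; this is precisely the singular symmetric-hyperbolic structure which the energy method is designed to exploit.

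A short preliminary step checks that the constraints $Z^{\mu}=0$ and the Gauss--Codazzi equations of Remark \ref{remark:1.3} propagate, because $Z^{\mu}$ satisfies a homogeneous linear wave equation with vanishing data at $\tau=1$. The heart of the proof is then a bootstrap on the energy
$$
\mathcal E_{k}(\tau)=\|h(\tau)\|_{H^{k+1}}^{2}+\|\partial_{\tau}h(\tau)\|_{H^{k}}^{2}+\|\partial_{\mu}\psi(\tau)\|_{H^{k}}^{2}.
$$
Assuming $\mathcal E_{k}(\tau)\le \delta^{2}$ on $[\tau_{*},1]$, commuting spatial derivatives, and integrating by parts, one derives a Gr\"onwall-type differential inequality. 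The genuinely delicate point, which I expect to be the main obstacle, is to show that after integration by parts the singular coefficients $\tau^{-1}\mathcal A$ produce a contribution of the good sign rather than a destabilizing one: this positivity property is the analogue in our setting of the role played by the cosmological constant in \cite{O,R-S,S} and is inherited from the accelerated expansion of the FRW background. It has to be checked carefully against the Chaplygin pressure, where $p_{\rho}=\alpha A^{2}\rho^{-\alpha-1}$ varies with $\rho$, unlike the constant sound speed in \cite{O,R-S,S}; one needs $p_{\rho}$ to be both nonnegative and bounded by $1$ on the background, and one needs the deviation of $p_{\rho}$ from its background value to be controlled by $\mathcal E_{k}$ itself through the bootstrap. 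Once positivity is established, the resulting inequality closed toward the past forces $\mathcal E_{k}(\tau)\le C\epsilon^{2}$ uniformly in $\tau$; combined with a standard local well-posedness result for symmetric hyperbolic systems, this yields global existence on $(0,1]\times\mathbb T^{3}$ together with the Sobolev bounds stated in the theorem.

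Finally, the sharp $\tau$-asymptotics are extracted by re-reading the wave equations in the light of the uniform energy bound. The evolution equations for $g^{0\mu}$ are first-order ODEs in $\tau$ with integrable source, so integrating them backward from $\tau=1$ produces the $O(\tau)$ and $O(-\tau\ln\tau)$ bounds and simultaneously identifies the constant $\gamma^{\mu}\in H^{k-1}$ as the limit of the leading singular piece of $\partial_{\tau}g^{0\mu}-2\tau^{-1}(g^{0\mu}-\eta^{0\mu})$. Integrability of $\partial_{\tau}\textbf{q}$ and $\partial_{\tau}\textbf{g}^{ij}$ near $0$ then provides the traces $\textbf{q}(0)$, $\textbf{g}^{ij}(0)$, $\partial_{\mu}\Psi(0)$ together with the stated Lipschitz-in-$\tau$ bounds. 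The identity $p_{\rho}(0)=\alpha$ follows by evaluating $p_{\rho}=\alpha A^{2}\rho^{-\alpha-1}$ at the limiting density $\rho(0)$ furnished by the background FRW solution, the constant $I$ in \eqref{1.5} being normalized precisely so that $A^{2}\rho(0)^{-\alpha-1}=1$. The remaining technical work, beyond the singular energy estimate, is essentially bookkeeping of commutators and Moser-type product estimates in $H^{k}$.
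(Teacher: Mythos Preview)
Your overall architecture---conformal compactification, wave gauge, reduction to a singular symmetric hyperbolic system, then an energy argument \`a la Oliynyk~\cite{O}---matches the paper's. Two concrete ingredients are missing, however, and they are not bookkeeping; they are what distinguishes the Chaplygin case from the linear-pressure case treated in \cite{O,R-S,S}.

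First, the fluid equation as you have written it degenerates at $\tau=0$. The enthalpy $\widetilde h=-\widetilde g_{\mu\nu}\widetilde\partial^{\mu}\Psi\widetilde\partial^{\nu}\Psi$ on the FRW background behaves like $\tau^{6\alpha}$ (Lemma~\ref{lemma:2.2}), so the coefficients $\widetilde h^{(1-\alpha)/2\alpha}$ and $\widetilde h^{(1+\alpha)/2\alpha}$ appearing in the reduced wave equation for $\Psi$ vanish or blow up as $\tau\to 0$. The paper handles this by introducing a \emph{second} conformal rescaling, this time of the fluid potential: one writes $\widetilde\partial^{\mu}\Psi=\tau^{\lambda}\partial^{\mu}\Psi$ and determines $\lambda$ so that the most singular term on the right-hand side of the fluid wave equation cancels. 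The unique choice is $\lambda=3\alpha+1$ (equation~\eqref{4.36} and the discussion after it), and with this choice the rescaled enthalpy $h=\tau^{-6\alpha}\widetilde h$ has a nonzero limit on the background, so the principal symbol $B^{\mu\nu}$ of the fluid equation stays uniformly Lorentzian down to $\tau=0$. Without this step your bootstrap cannot close: the matrix you call $\mathcal A$ will not be bounded, and the ``positivity'' you hope to verify is actually the positive-definiteness of $-B^{0}$, which fails if $h$ is allowed to degenerate.

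Second, your schematic system still carries a $\tau^{-2}\mathcal B(u)u$ term. The paper removes it by a specific first-order change of variables: one replaces $g^{0\mu}-\eta^{0\mu}$ by $\textbf{u}^{0\mu}=(g^{0\mu}-\eta^{0\mu})/(2\tau)$ together with the shifted derivative $\textbf{u}^{0\mu}_{0}=\partial_{\tau}(g^{0\mu}-\eta^{0\mu})-\tfrac{3}{2\tau}(g^{0\mu}-\eta^{0\mu})$, and replaces $g^{ij}$ by the densitized pair $(\textbf{g}^{ij},\textbf{q})$ of \eqref{4.9}--\eqref{4.10}. After this, the singular part becomes exactly $\tau^{-1}\textbf{B}\textbf{P}\textbf{u}$ with $\textbf{P}$ a constant symmetric projection commuting with $\textbf{B}$ and satisfying $\textbf{P}^{\perp}B^{0}\textbf{P}=0$; this algebraic structure, not a sign computation, is what makes the energy estimate close and is the hypothesis of Proposition~\ref{pro:3.16}. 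The asymptotic rates you quote (the $-\tau\ln\tau$ for the $g^{0\mu}$ block versus $\tau$ for $\textbf{q},\textbf{g}^{ij}$) then fall out of the values $\kappa=1$ and $\kappa=2$ associated to the two subprojections, not from integrating ODEs by hand. Finally, $p_{\rho}(0)=\alpha$ is not a normalization of $I$: it follows because $\widetilde h\to 0$ forces $\rho\to A^{2/(\alpha+1)}$ by \eqref{3.2}, whence $p_{\rho}=\alpha A^{2}\rho^{-\alpha-1}\to\alpha$.
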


\begin{remark}
1. The above results show that in the future of the Einstein-Euler equations of GCG, the speed of sound is $\sqrt{\alpha}$, especially for Chaplygin gas, it means that the speed of sound is equal to the speed of light. This phenomenon is very interesting and quite different from the problem of a fluid moving in a Universe with positive cosmological constant $\Lambda$.

2.  When $\alpha=1$, the Euler system is linearly degenerate, while when $0<\alpha<1$, this system is genuinely nonlinear and shocks generally form in finite time, no matter how small the perturbations are in the flat Minkowski spacetime. Our result shows that the spacetime expansion stabilize the fluid and prevent the formation of shocks.
\end{remark}

The main idea of the proof is turning system \eqref{1.11} into a symmetric hyperbolic system under appropriate wave coordinates $Z^{\mu}=0$. Here we would like to see some differences between our paper and Oliynyk's and others' mentioned above. Firstly, we consider the nonlinear future stability of the nontrivial FRW solutions, thus, we need to choose different coordinates, which can be seen as a generalization of Oliynyk's work; second, for irrotational fluids, we have to choose an appropriate conformal factor in order to solve the difficulty brought by the degeneracy of the enthalpy $\sqrt{\widetilde{h}}$. 

An outline of this paper is as follows. In Section \ref{section:2.1}, we give some preliminaries on the notations and norms used in this paper. Sections \ref{section:2.2} and \ref{section:2.3} are aimed at giving the detailed formulation of the problem and studying the properties of the FRW background solutions. Section \ref{section:3} is the main part of the whole paper, we present our choice of coordinates in Section \ref{section:3.1}. In Sections \ref{section:3.2}, \ref{section:3.3} and \ref{section:3.4}, we turn the whole system into a symmetric hyperbolic system and analyze the structure of this system. Sections \ref{section:3.5} and \ref{section:3.6} contain  the proof of the main results.


\section{Formulation of the problem}\label{section:2}

\subsection{Notation}\label{section:2.1}

Greek indices $\mu$ range from 0 to 3, while Latin indices $i$ range from 1 to 3. Repeated lower and upper indices means summation with their corresponding metric. For the metrics in this paper, we use $\widetilde{g}$ and $\widetilde{\eta}$ to denote the original metric and original background metric respectively; we also use $g$ and $\eta$ to denote the conformal metric and the conformal background metric. $\widetilde{\Gamma}$, $\overline{\widetilde{\Gamma}}$, $\Gamma$ and $\overline{\Gamma}$ denote the Christoffel symbols with respect to $\widetilde{g}$, $\widetilde{\eta}$, $g$ and $\eta$, respectively, similar conventions are used for the curvature tensors $\widetilde{R}$, $\overline{\widetilde{R}}$, $R$, $\overline{R}$ and the norms
$\widetilde{h}$, $\overline{\widetilde{h}}$, $h$ and $\overline{h}$, whose square root denotes the physical quantity ``enthalpy''.

For convenience, we use $A\sim B$ to denote the equivalence relationship between $A$ and $B$, which means that there exists a positive constant $C>1$, such that $\frac{A}{C}\leq B\leq CA$. $\widetilde{\del}_{\mu}=\widetilde{\del}_{x^{\mu}}$ and $\del_{\mu}=\del_{x^{\mu}}$ are the partial derivatives of the original spacetime and the conformal spacetime. Similar definitions are used for the covariant derivatives $\widetilde{\nabla}$ and $\nabla$.

For a function $u(t,x)$, we define the following standard sobolev norms
$$
\aligned
\|u(t,x)\|_{L^{2}(\mathbb T^{n})}:=& \left(\int_{\mathbb T^{n}}|u(t,x)|^{2}dx\right)^{\frac{1}{2}},
\\
\|u(t,x)\|_{H^{k}(\mathbb T^{n})}:=& \sum_{I=0}^{k}\|D^{I}u(t,x)\|_{L^{2}(\mathbb T^{n})},
\endaligned
$$
and
$
\|u(t,x)\|_{L^{\infty}(\mathbb T^{3})}:=\text{ess} \sup_{x\in\mathbb T^{3}}|u(t,x)|.
$

 \subsection{Stress energy tensor for irrotational fluids}\label{section:2.2}

 In this section, we focus on the derivation of the irrotational energy momentum tensor $\widetilde{T}^{\mu\nu}$.  
For isentropic fluids, the pressure is given by
\begin{equation}\label{3.1}
p=n \frac{\del \rho}{\del n}-\rho,
\end{equation}
where $n$ denotes the number of particles per unit volume. We also have 
\begin{equation*}
\widetilde{\nabla}_{\mu}(n \widetilde{u}^{\mu})=0.
\end{equation*}
In this paper, we consider the generalized Chaplygin gas, whose equation of state is given by
\begin{equation}
p=-\frac{A^{2}}{\rho^{\alpha}},
\end{equation}
where $A$ is a positive constant and $0<\alpha\leq 1$.

By solving a simple ODE \eqref{3.1}, we have
$
\rho^{\alpha+1}-A^{2}=J(1)n^{\alpha+1},
$
where $J(1)=\frac{\rho^{\alpha+1}(1)-A^{2}}{n^{\alpha+1}(1)}$ depends only on the initial data of the state variables $(\rho,n)$ at $\tau=1$.
Define the enthalpy in the original spacetime by
$$
\sqrt{\widetilde{h}}=\frac{\rho+p}{n}=J^{\frac{1}{\alpha+1}}(1)\left(1-\frac{A^{2}}{\rho^{\alpha+1}}\right)^{\frac{\alpha}{\alpha+1}},
$$
then
$$
\rho=\left[\frac{A^{2}J^{\frac{1}{\alpha}}(1)}{J^{\frac{1}{\alpha}}(1)-\widetilde{h}^{\frac{\alpha+1}{2\alpha}}}\right]^{\frac{1}{\alpha+1}}.
$$
Denote $J^{\frac{1}{\alpha}}(1)=I$, then
\begin{equation}\label{3.2}
\rho=\frac{A^{\frac{2}{\alpha+1}}I^{\frac{1}{\alpha+1}}}{(I-\widetilde{h}^{\frac{\alpha+1}{2\alpha}})^{\frac{1}{\alpha+1}}},
\end{equation}
and
\begin{equation}\label{3.3}
p=-\frac{A^{2}}{\rho^{\alpha}}=-A^{\frac{2}{\alpha+1}}I^{-\frac{\alpha}{\alpha+1}}(I-\widetilde{h}^{\frac{\alpha+1}{2\alpha}})^{\frac{\alpha}{\alpha+1}}.
\end{equation}
By above two equalities \eqref{3.2} and \eqref{3.3}, we can easily get
$$
p+\rho=\frac{A^{\frac{2}{\alpha+1}}I^{-\frac{\alpha}{\alpha+1}}\widetilde{h}^{\frac{\alpha+1}{2\alpha}}}
{(I-\widetilde{h}^{\frac{\alpha+1}{2\alpha}})^{\frac{1}{\alpha+1}}},
$$
and
$$
\rho-p=A^{\frac{2}{\alpha+1}}I^{-\frac{\alpha}{\alpha+1}}\frac{2I-\widetilde{h}^{\frac{\alpha+1}{2\alpha}}}{(I-\widetilde{h}^{\frac{\alpha+1}{2\alpha}})^{\frac{1}{\alpha+1}}}.
$$
Assume that the fluid is irrotational, then there locally exists a potential function $\Psi(\tau,x)$ such that
$$
\widetilde{u}^{\mu}=-\frac{\widetilde{\del}^{\mu}\Psi}{\sqrt{\widetilde{h}}}.
$$
Based on the  normalization condition
$
\widetilde{g}_{\mu\nu}\widetilde{u}^{\mu}\widetilde{u}^{\nu}=-1,\, \text{and}\, \widetilde{u}^{0}<0,
$
in our spacetime, we have
$$
\widetilde{h}=-\widetilde{g}_{\mu\nu}\widetilde{\del}^{\mu}\Psi\widetilde{\del}^{\nu}\Psi\geq0,\quad \widetilde{\del}^{\tau}\Psi\geq0.
$$
Thus, we can define our stress energy tensor as follows
\begin{equation}\label{3.4}
\widetilde{T}^{\mu\nu}=(\rho+p)\widetilde{u}^{\mu}\widetilde{u}^{\nu}+p\widetilde{g}^{\mu\nu}=A^{\frac{2}{\alpha+1}}I^{-\frac{\alpha}{\alpha+1}}\left[\frac{\widetilde{h}^{\frac{1-\alpha}{2\alpha}}
\widetilde{\del}^{\mu}\Psi\widetilde{\del}^{\nu}\Psi}{(I-\widetilde{h}^{\frac{\alpha+1}{2\alpha}})^{\frac{1}{\alpha+1}}}
-(I-\widetilde{h}^{\frac{\alpha+1}{2\alpha}})^{\frac{\alpha}{\alpha+1}}\widetilde{g}^{\mu\nu}\right].
\end{equation}
At the same time, we have
\begin{equation}\label{3.5}
\widetilde{T}=\widetilde{g}_{\mu\nu}\widetilde{T}^{\mu\nu}=-\rho-p+4p=A^{\frac{2}{\alpha+1}}I^{-\frac{\alpha}{\alpha+1}}\frac{-4I+3\widetilde{h}^{\frac{\alpha+1}{2\alpha}}}
{(I-\widetilde{h}^{\frac{\alpha+1}{2\alpha}})^{\frac{1}{\alpha+1}}}.
\end{equation}

\begin{remark}\label{remark:2.1}
1. It is easy to choose appropriate initial data for the state variables $(\rho,n)$ such that $I-\widetilde{h}^{\frac{\alpha+1}{2\alpha}}>0$ to ensure the hyperbolicity of the fluid equation.

2. For the existence of the potential function $\Psi$ and further results concerning analysis of irrotational fluids, we refer to
 \cite{LS, R-S,V}.
\end{remark}


\subsection{The class of Friedmann-Robertson-Walker spacetimes}\label{section:2.3}
In this section, we study some properties of the background FRW solutions. At first,
we are ready to give the detailed information of our spacetime. The metric endowed by the original spacetime $(0,1]\times\mathbb T^{3}$ is
\begin{equation}\label{3.6}
\widetilde{g}=\widetilde{g}_{\mu\nu}dx^{\mu}dx^{\nu},
\end{equation}
which can be seen as the perturbation of the following metric
\begin{equation}\label{3.7}
\widetilde{\eta}=\frac{1}{\tau^{2}}\left(-\frac{1}{w^{2}(\tau)}d\tau^{2}+\sum_{i=1}^{3}(dx^{i})^{2}\right),
\end{equation}
where $\tau=\frac{1}{a(t)}$ and $w(t)=\frac{\dot{a}(t)}{a(t)}$, for some scale factor $a(t)$ with $t\in[0,+\infty)$. Obviously
$$
d\tau=-\frac{1}{a^{2}(t)}\dot{a}(t)=-\tau w(t)dt.
$$
Note in passing that the metric \eqref{3.7} is equivalent to
$$
ds^{2}=-dt^{2}+a^{2}(t)\sum_{i=1}^{3}(dx^{i})^{2},
$$
which is the original model studied in physics.

As discussed in Section \ref{section:1.3}, we do not work with the spacetime metric \eqref{3.6} directly but instead use the conformally transformed metric
\begin{equation}\label{3.8}
g_{\mu\nu}=e^{-2\Phi}\widetilde{g}_{\mu\nu},
\end{equation}
where $\Phi=-\ln(\tau)$. We will consider above $g$ \eqref{3.8} as a small perturbation of the following metric
\begin{equation}\label{3.9}
\eta=-\frac{1}{w^{2}(\tau)}d\tau^{2}+\sum_{i=1}^{3}(dx^{i})^{2},
\end{equation}
which is the conformal transformation of \eqref{3.7}.
Under above conformal transformations, the Einstein equations are equivalent to
\begin{equation}\label{3.10}
\begin{cases}
G^{\mu\nu}=T^{\mu\nu}:=e^{4\Phi}\widetilde{T}^{\mu\nu}+2(\nabla^{\mu}\nabla^{\nu}\Phi-\nabla^{\mu}\Phi\nabla^{\nu}\Phi)-(2\Box_{g}\Phi+|\nabla\Phi|_{g}^{2})g^{\mu\nu},\\
\nabla_{\mu}\widetilde{T}^{\mu\nu}=-6\widetilde{T}^{\mu\nu}\nabla_{\mu}\Phi+g_{\kappa\lambda}\widetilde{T}^{\kappa\lambda}g^{\mu\nu}\nabla_{\mu}\Phi.
\end{cases}
\end{equation}
Expanding the first equation of \eqref{3.10}, we have by inserting $\widetilde{T}^{\mu\nu}$ defined by \eqref{3.4}
\begin{eqnarray}\label{3.11}
-2R^{\mu\nu}&=&-4\nabla^{\mu}\nabla^{\nu}\Phi+4\nabla^{\mu}\Phi\nabla^{\nu}\Phi
-2[\Box_{g}\Phi+2|\nabla\Phi|^{2}_{g}\\
&+&\frac{1}{2}A^{\frac{2}{\alpha+1}}I^{-\frac{\alpha}{\alpha+1}}\frac{2I-\widetilde{h}^{\frac{\alpha+1}{2\alpha}}}
{(I-\widetilde{h}^{\frac{\alpha+1}{2\alpha}})^{\frac{1}{\alpha+1}}}e^{2\Phi}]g^{\mu\nu}
-2e^{4\Phi}\frac{A^{\frac{2}{\alpha+1}}I^{-\frac{\alpha}{\alpha+1}}\widetilde{h}^{\frac{1-\alpha}{2\alpha}}\widetilde{\del}^{\mu}\Psi\widetilde{\del}^{\nu}\Psi}{
(I-\widetilde{h}^{\frac{\alpha+1}{2\alpha}})^{\frac{1}{\alpha+1}}},\nonumber
\end{eqnarray}
or equivalently
\begin{eqnarray}\label{3.12}
-2R_{\mu\nu}&=&-4\nabla_{\mu}\nabla_{\nu}\Phi+4\nabla_{\mu}\Phi\nabla_{\nu}\Phi
-2[\Box_{g}\Phi+2|\nabla\Phi|^{2}_{g}\\
&+&\frac{1}{2}A^{\frac{2}{\alpha+1}}I^{-\frac{\alpha}{\alpha+1}}\frac{2I
-\widetilde{h}^{\frac{\alpha+1}{2\alpha}}}{(I-\widetilde{h}^{\frac{\alpha+1}{2\alpha}})^{\frac{1}{\alpha+1}}}e^{2\Phi}]g_{\mu\nu}
-2\frac{A^{\frac{2}{\alpha+1}}I^{-\frac{\alpha}{\alpha+1}}\widetilde{h}^{\frac{1-\alpha}{2\alpha}}\widetilde{\del}_{\mu}\Psi\widetilde{\del}_{\nu}\Psi}{
(I-\widetilde{h}^{\frac{\alpha+1}{2\alpha}})^{\frac{1}{\alpha+1}}}.\nonumber
\end{eqnarray}

Now we consider the background FRW solution to \eqref{3.10}. Assume that there exists $\Psi(\tau,x)=\overline{\Psi}(\tau)$ satisfies \eqref{3.10} with background metric \eqref{3.9}. Then it is easy to see that
$$
\overline{\widetilde{h}}:=-\widetilde{\eta}_{00}\widetilde{\del}^{\tau}\overline{\Psi}\widetilde{\del}^{\tau}\overline{\Psi}=\frac{1}{\tau^{2}w^{2}}(\widetilde{\del}^{\tau}\overline{\Psi})^{2}.
$$
$$
\widetilde{T}^{00}=A^{\frac{2}{\alpha+1}}I^{-\frac{\alpha}{\alpha+1}}\left[\frac{\overline{\widetilde{h}}^{\frac{\alpha+1}{2\alpha}}\tau^{2}w^{2}}{(I-\overline{\widetilde{h}}
^{\frac{\alpha+1}{2\alpha}})^{\frac{1}{\alpha+1}}}
+(I-\overline{\widetilde{h}}^{\frac{\alpha+1}{2\alpha}})^{\frac{\alpha}{\alpha+1}}\tau^{2}w^{2}\right],
$$
$$
\widetilde{T}^{0i}=0,
$$
and
$$
\widetilde{T}^{ij}=A^{\frac{2}{\alpha+1}}I^{-\frac{\alpha}{\alpha+1}}
\left[-(I-\overline{\widetilde{h}}^{\frac{\alpha+1}{2\alpha}})^{\frac{\alpha}{\alpha+1}}\tau^{2}\right]\delta^{ij}.
$$
Furthermore, we have
$$
\overline{\Gamma}^{\lambda}_{\mu\nu}=\frac{1}{2}\eta^{00}(\del_{\tau}\eta_{00})\delta_{0}^{\lambda}\delta_{\mu}^{0}\delta_{\nu}^{0}
=-\frac{\del_{\tau}w}{w}\delta_{0}^{\lambda}\delta_{\mu}^{0}\delta_{\nu}^{0},
$$
then
$$
\overline{R}_{\mu\nu}=\del_{\alpha}\overline{\Gamma}^{\alpha}_{\mu\nu}-\del_{\mu}\overline{\Gamma}^{\alpha}_{\alpha \nu}+\overline{\Gamma}^{\alpha}_{\alpha \lambda}\overline{\Gamma}^{\lambda}_{\mu\nu}-\overline{\Gamma}^{\alpha}_{\mu\lambda}\overline{\lambda}^{\alpha}_{\alpha \nu}=0.
$$
With above preparations, we have the following important lemma, which are the represent formulas of the background FRW solutions.
\begin{lemma}\label{lemma:2.2}
There exists a family of solutions $(w(\tau),\overline{\Psi}(\tau))$ to \eqref{3.10}-\eqref{3.12}, and the solutions satisfy the following
\begin{equation}\label{3.13}
\aligned
\overline{\widetilde{h}} &=\frac{(IK)^{\frac{2\alpha}{\alpha+1}}\tau^{6\alpha}}{(1+K\tau^{3(1+\alpha)})^{\frac{2\alpha}{\alpha+1}}},\vspace{2mm}
\\
\widetilde{\del}^{\tau}\overline{\Psi} &=\frac{(IK)^{\frac{\alpha}{\alpha+1}}\tau^{3\alpha+1}w}{(1+K\tau^{3(1+\alpha)})^{\frac{\alpha}{\alpha+1}}},\vspace{2mm}\\
3w^{2} &=\frac{A^{\frac{2}{\alpha+1}}I^{\frac{1}{\alpha+1}}}{(I-\overline{\widetilde{h}}^{\frac{\alpha+1}{2\alpha}})^{\frac{1}{\alpha+1}}},\vspace{2mm}\\
\del_{t}w &=-\frac{A^{\frac{2}{\alpha+1}}I^{-\frac{\alpha}{\alpha+1}}\overline{\widetilde{h}}^{\frac{\alpha+1}{2\alpha}}}
{2(I-\overline{\widetilde{h}}^{\frac{\alpha+1}{2\alpha}})^{\frac{1}{\alpha+1}}},
\endaligned
\end{equation}
where $K$ is a constant depending only on the initial data.
\end{lemma}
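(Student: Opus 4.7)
I would exploit FRW symmetry to reduce the coupled PDE system \eqref{3.10}--\eqref{3.12} to a handful of ODEs. The homogeneous ansatz $\overline\Psi = \overline\Psi(\tau)$ on the background metric $\widetilde\eta = \widetilde\eta(\tau)$ is consistent with the field equations, and it is cleanest to pass temporarily back to the original (non-conformal) metric $\widetilde\eta = -dt^{2} + a^{2}(t)\sum_{i}(dx^{i})^{2}$, where all objects are standard FRW quantities; the conformal factor re-enters only at the end, when expressing the answers in $\tau = 1/a(t)$. The only genuine unknowns are then $w(\tau) = \dot a/a$ and $\rho(\tau)$, because the identity $\overline{\widetilde h} = (\widetilde\del^{\tau}\overline\Psi)^{2}/(\tau^{2} w^{2})$ derived just above Lemma \ref{lemma:2.2} expresses $\overline\Psi$ algebraically in terms of $w$ and $\overline{\widetilde h}$, while $\overline{\widetilde h}$ is determined by $\rho$ through \eqref{3.2}.

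The Einstein equations $\widetilde G^{\mu\nu} = \widetilde T^{\mu\nu}$ on the FRW background reduce to the classical Friedmann and Raychaudhuri pair
\[
3w^{2}=\rho,\qquad \del_{t}w=-\tfrac{1}{2}(\rho+p).
\]
Combined with the formula $\rho = A^{2/(\alpha+1)}I^{1/(\alpha+1)}/(I-\overline{\widetilde h}^{(\alpha+1)/(2\alpha)})^{1/(\alpha+1)}$ from \eqref{3.2}, the first of these is exactly the third claimed relation; substituting the expression for $\rho+p$ computed in Section \ref{section:2.2} into the second yields the fourth.

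It remains to find $\overline{\widetilde h}$ explicitly. The conservation law $\widetilde\nabla_{\mu}(n\widetilde u^{\mu})=0$ in FRW reduces to $\tfrac{d}{dt}(na^{3})=0$, so $na^{3}$ is a constant of motion. Substituting this into the algebraic relation $\rho^{\alpha+1}-A^{2} = J(1)\,n^{\alpha+1}$ derived in Section \ref{section:2.2}, using $a=1/\tau$, and absorbing all constants of integration into a single parameter $K>0$, yields
\[
\rho^{\alpha+1} = A^{2}\bigl(1 + K\tau^{3(\alpha+1)}\bigr).
\]
Inverting \eqref{3.2} for $\overline{\widetilde h}^{(\alpha+1)/(2\alpha)}$ then gives the first identity, and substituting into $\overline{\widetilde h} = (\widetilde\del^{\tau}\overline\Psi)^{2}/(\tau^{2} w^{2})$, with the positive square root fixed by the sign condition $\widetilde\del^{\tau}\Psi\ge 0$, produces the second. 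The only substantive step is this explicit integration: it works because the Chaplygin equation of state turns the continuity equation into a linear ODE in $y=\rho^{\alpha+1}$, namely $\dot y = -3(\alpha+1)\,H\,(y-A^{2})$, an integrability feature peculiar to the GCG family. Everything else is algebraic substitution.
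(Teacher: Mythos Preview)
Your argument is correct and arrives at the same formulas, but it follows a different route from the paper. The paper works entirely in the conformal frame: it computes the $0$-th component of the conformal fluid equation $\nabla_{\mu}\widetilde{T}^{\mu 0}$ using the Christoffel symbols of $\eta$, obtains the ODE $I\tau\,\del_{\tau}\overline{\widetilde{h}}=6\alpha\,\overline{\widetilde{h}}\,(I-\overline{\widetilde{h}}^{(\alpha+1)/(2\alpha)})$, and integrates it via the substitution $\xi=\overline{\widetilde{h}}^{(\alpha+1)/(2\alpha)}$; the Friedmann and Raychaudhuri relations are then extracted from $\overline{R}_{00}=\overline{R}_{ij}=0$ for the conformal Ricci tensor. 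You instead pass back to the physical FRW metric $-dt^{2}+a^{2}(t)\sum(dx^{i})^{2}$, quote the standard Friedmann pair $3w^{2}=\rho$, $\del_{t}w=-\tfrac{1}{2}(\rho+p)$, and integrate the matter sector via particle-number conservation $na^{3}=\text{const}$ (equivalently, the linear ODE $\dot y=-3(\alpha+1)H(y-A^{2})$ for $y=\rho^{\alpha+1}$). Your route is shorter and leans on familiar cosmological identities, which makes the Chaplygin integrability transparent; the paper's route has the advantage of verifying directly that the \emph{conformal} system \eqref{3.10}--\eqref{3.12} is satisfied, which is literally what the lemma claims, whereas you are implicitly relying on the equivalence of the conformal and physical systems stated above \eqref{3.10}.
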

\begin{proof}
 At first, we consider the $0$-th component of the fluid equations, we have by neglecting $A^{\frac{2}{\alpha+1}}I^{-\frac{\alpha}{\alpha+1}}$
\begin{eqnarray*}
\nabla_{\mu}\widetilde{T}^{\mu0}&=&\del_{\tau}\widetilde{T}^{00}+\del_{i}\widetilde{T}^{0i}+2\overline{\Gamma}^{0}_{00}\widetilde{T}^{00}\\
&=&\del_{\tau}\left(\frac{\tau^{2}w^{2}I}{(I-\overline{\widetilde{h}}^{\frac{\alpha+1}{2\alpha}})^{\frac{1}{\alpha+1}}}\right)-\frac{2\del_{\tau}w}{w}
\left(\frac{\tau^{2}w^{2}I}{(I-\overline{\widetilde{h}}^{\frac{\alpha+1}{2\alpha}})^{\frac{1}{\alpha+1}}}\right)\\
&=&\frac{6}{\tau}\left[\frac{\overline{\widetilde{h}}^{\frac{\alpha+1}{2\alpha}}\tau^{2}w^{2}}{(I-\overline{\widetilde{h}}
^{\frac{\alpha+1}{2\alpha}})^{\frac{1}{\alpha+1}}}
+(I-\overline{\widetilde{h}}^{\frac{\alpha+1}{2\alpha}})^{\frac{\alpha}{\alpha+1}}\tau^{2}w^{2}\right]\\
&&-\frac{w^{2}}{\tau}\left[\frac{\tau^{2}\overline{\widetilde{h}}^{\frac{\alpha+1}{\alpha}}}{(I-\overline{\widetilde{h}}^{\frac{\alpha+1}{\alpha}})^{\frac{1}{\alpha+1}}}
+4\tau^{2}(I-\overline{\widetilde{h}}^{\frac{\alpha+1}{2\alpha}})^{\frac{\alpha}{\alpha+1}}\right].
\end{eqnarray*}
Solving above ODE, we easily get
\begin{equation}\label{3.14}
I\tau\del_{\tau}\left(\frac{1}{(I-\overline{\widetilde{h}}^{\frac{\alpha+1}{2\alpha}})^{\frac{1}{\alpha+1}}}\right)
=\frac{3\overline{\widetilde{h}}^{\frac{\alpha+1}{2\alpha}}}{(I-\overline{\widetilde{h}}^{\frac{\alpha+1}{2\alpha}})^{\frac{1}{\alpha+1}}},
\end{equation}
\eqref{3.14} is equivalent to
\begin{equation*}
I\tau\del_{\tau}\overline{\widetilde{h}}=6\alpha\overline{\widetilde{h}}(I-\overline{\widetilde{h}}^{\frac{\alpha+1}{2\alpha}}).
\end{equation*}
By setting
$$
\xi=\overline{\widetilde{h}}^{\frac{\alpha+1}{2\alpha}}
$$
 we have
\begin{equation}\label{3.15}
\frac{I d\xi}{\xi(I-\xi)}=\frac{3(\alpha+1)d\tau}{\tau}.
\end{equation}
Solving \eqref{3.15}, we have
$$
\xi=\frac{IK\tau^{3(\alpha+1)}}{1+K\tau^{3(\alpha+1)}},
$$
where
$
K=\frac{\xi(1)}{I-\xi(1)}
$
and thus,
\begin{equation}\label{3.16}
\overline{\widetilde{h}}=\frac{(IK)^{\frac{2\alpha}{\alpha+1}}\tau^{6\alpha}}{(1+K\tau^{3(\alpha+1)})^{\frac{2\alpha}{\alpha+1}}}.
\end{equation}
Obviously, we have
\begin{equation}\label{3.17}
\widetilde{\del}^{\tau}\overline{\Psi}=(\tau^{2}w^{2}\overline{\widetilde{h}})^{\frac{1}{2}}
=\frac{(IK)^{\frac{\alpha}{\alpha+1}}\tau^{3\alpha+1}w}{(1+K\tau^{3(1+\alpha)})^{\frac{\alpha}{\alpha+1}}}.
\end{equation}
The $i$-th components of the fluid equations hold obviously.

 Now we consider the Einstein equations.

At first, from $\overline{R}_{ij}=0$, we have
\begin{equation}\label{3.18}
-\frac{w^{2}}{\tau^{2}}+\frac{w\del_{\tau}w}{\tau}-\frac{2w^{2}}{\tau^{2}}
+\frac{A^{\frac{2}{\alpha+1}}I^{-\frac{\alpha}{\alpha+1}}}{2\tau^{2}}\frac{2I-\overline{\widetilde{h}}^{\frac{\alpha+1}{2\alpha}}}
{(I-\overline{\widetilde{h}}^{\frac{\alpha+1}{2\alpha}})^{\frac{1}{\alpha+1}}}=0.
\end{equation}
Combining \eqref{3.18} with $\overline{R}_{00}=0$, we have
\begin{equation}\label{3.19}
-4(\frac{1}{\tau^{2}}-\frac{\del_{\tau}w}{\tau w})+\frac{4}{\tau^{2}}-\frac{2A^{\frac{2}{\alpha+1}}
I^{-\frac{\alpha}{\alpha+1}}\overline{\widetilde{h}}^{\frac{1-\alpha}{2\alpha}}\widetilde{\del}_{\tau}\overline{\Psi}\widetilde{\del}_{\tau}\overline{\Psi}}
{(I-\overline{\widetilde{h}}^{\frac{\alpha+1}{2\alpha}})^{\frac{1}{\alpha+1}}}=0.
\end{equation}
From \eqref{3.18}, \eqref{3.19} and $\del_{\tau}w=-\frac{\del_{t}w}{\tau w}$, we easily get
\begin{equation}\label{3.20}
\begin{cases}
3w^{2}-\frac{A^{\frac{2}{\alpha+1}}I^{\frac{1}{\alpha+1}}}{(I-\overline{\widetilde{h}}^{\frac{\alpha+1}{2\alpha}})^{\frac{1}{\alpha+1}}}=0,\vspace{2mm}\\
-\del_{t}w=\frac{A^{\frac{2}{\alpha+1}}I^{-\frac{\alpha}{\alpha+1}}\overline{\widetilde{h}}^{\frac{\alpha+1}{2\alpha}}}{2(I-\overline{\widetilde{h}}^{\frac{\alpha+1}{2\alpha}})^{\frac{1}{\alpha+1}}}.
\end{cases}
\end{equation}
The lemma holds by combing \eqref{3.16}, \eqref{3.17} and \eqref{3.20}.
\end{proof}
\begin{cor}\label{corollary:2.1}
By the above lemma and via a Taylor expansion, we see that when $\tau\rightarrow 0$,
\begin{equation}\label{3.21}
\begin{cases}
3w^{2}-A^{\frac{2}{\alpha+1}}\sim \tau^{3(\alpha+1)},\vspace{2mm}\\
-\del_{t}w\sim \tau^{3(\alpha+1)},\\
\del_{\tau}(\dot{w})\sim \tau^{3\alpha+2},\\
\del_{\tau}^{2}w\sim \tau^{3\alpha+1}.
\end{cases}
\end{equation}
\end{cor}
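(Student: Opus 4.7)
The plan is to reduce everything to explicit elementary functions of $\tau$ by first simplifying $I-\overline{\widetilde{h}}^{(\alpha+1)/(2\alpha)}$ using Lemma \ref{lemma:2.2}, and then reading off the asymptotics as $\tau\to 0$ by a standard Taylor expansion.

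\textbf{Step 1: Simplify the key building block.} From the first identity of Lemma \ref{lemma:2.2},
$$
\overline{\widetilde{h}}^{\frac{\alpha+1}{2\alpha}}=\frac{IK\tau^{3(\alpha+1)}}{1+K\tau^{3(\alpha+1)}},
\qquad
I-\overline{\widetilde{h}}^{\frac{\alpha+1}{2\alpha}}=\frac{I}{1+K\tau^{3(\alpha+1)}}.
$$
Substituting into the third formula of Lemma \ref{lemma:2.2} gives the closed form
$$
3w^{2}=A^{\frac{2}{\alpha+1}}\bigl(1+K\tau^{3(\alpha+1)}\bigr)^{\frac{1}{\alpha+1}},
$$
and inserting the same relations into the fourth formula of Lemma \ref{lemma:2.2} yields
$$
-\del_{t}w=\tfrac{1}{2}A^{\frac{2}{\alpha+1}}K\tau^{3(\alpha+1)}\bigl(1+K\tau^{3(\alpha+1)}\bigr)^{-\frac{\alpha}{\alpha+1}}.
$$

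\textbf{Step 2: Read off the first two asymptotics.} A first-order Taylor expansion of $(1+x)^{1/(\alpha+1)}$ at $x=0$ applied to the formula for $3w^{2}$ gives
$$
3w^{2}-A^{\frac{2}{\alpha+1}}=\frac{A^{\frac{2}{\alpha+1}}}{\alpha+1}K\tau^{3(\alpha+1)}+O\bigl(\tau^{6(\alpha+1)}\bigr)\sim\tau^{3(\alpha+1)}.
$$
The second asymptotic $-\del_{t}w\sim\tau^{3(\alpha+1)}$ is immediate from the closed-form expression above by letting $\tau\to 0$ in the factor $(1+K\tau^{3(\alpha+1)})^{-\alpha/(\alpha+1)}$, which tends to the positive constant $1$.

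\textbf{Step 3: Convert $\del_{t}$ to $\del_{\tau}$ and differentiate.} Using $d\tau=-\tau w\,dt$, i.e.\ $\del_{\tau}=-\tfrac{1}{\tau w}\del_{t}$, and the fact that $w$ tends to the positive constant $A^{1/(\alpha+1)}/\sqrt{3}$ as $\tau\to 0$ by Step 2, I obtain
$$
\del_{\tau}w=-\frac{\del_{t}w}{\tau w}\sim \tau^{3\alpha+2}.
$$
For $\del_{\tau}(\dot w)=\del_{\tau}(\del_{t}w)$, I differentiate the closed form for $-\del_{t}w$ directly in $\tau$; the leading term is $\tfrac{3(\alpha+1)}{2}A^{2/(\alpha+1)}K\tau^{3\alpha+2}$, hence $\del_{\tau}(\dot w)\sim\tau^{3\alpha+2}$. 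Differentiating the expression $\del_{\tau}w\sim C\tau^{3\alpha+2}$ once more (using that $w$ and $1/w$ are smooth and bounded away from zero near $\tau=0$) gives $\del_{\tau}^{2}w\sim\tau^{3\alpha+1}$.

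\textbf{Main obstacle.} There is no real obstacle once the algebraic simplification of $I-\overline{\widetilde{h}}^{(\alpha+1)/(2\alpha)}$ in Step 1 is carried out cleanly: after that, each of the four statements is either a Taylor expansion at $\tau=0$ or a direct differentiation of an explicit power-type expression. The only point requiring a small amount of care is that when passing between $\del_{t}$ and $\del_{\tau}$ one must check that $w$ and $1/w$ stay uniformly bounded near $\tau=0$, which is exactly what the first asymptotic provides.
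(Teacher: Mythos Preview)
Your proof is correct and follows essentially the same route as the paper: derive closed-form expressions for $3w^{2}$ and $\dot w$ from Lemma~\ref{lemma:2.2}, Taylor-expand for the first two asymptotics, differentiate the explicit formula for $\dot w$ for the third, and obtain the fourth by differentiating $\del_{\tau}w=-\dot w/(\tau w)$. The only cosmetic difference is that for $\del_{\tau}^{2}w$ the paper applies the quotient rule to $-\dot w/(\tau w)$ and then inserts the already-established asymptotics of $\dot w$ and $\del_{\tau}\dot w$, whereas you differentiate the fully explicit expression for $\del_{\tau}w$; both give the same leading term $\tfrac{\sqrt{3}}{2}A^{1/(\alpha+1)}K(3\alpha+2)\tau^{3\alpha+1}$.
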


The asymptotic behavior above plays very important roles in the analysis of the source terms in Section \ref{section:3.4}.

\begin{proof}
The first and second asymptotic behavior can be obtained directly from \eqref{3.13}. The third asymptotic behavior can be derived by differentiating $\dot{w}$ directly and use \eqref{3.16}. For the last, we have
$$
\del_{\tau}^{2}w=\del_{\tau}(-\frac{\dot{w}}{\tau w})=-\frac{\del_{\tau}\dot{w}}{\tau w}+
\frac{\dot{w}w}{(\tau w)^{2}}-\frac{(\dot{w})^{2}}{\tau^{2}w^{3}}\sim\tau^{3\alpha+1}.
$$
\end{proof}

\begin{remark}\label{remark:2.3}
Since we have confined $\tau\in(0,1]$, which means that we set initially $a(0)=1$. Thus, we have to prove that when $t\rightarrow \infty$, $a(t)\rightarrow +\infty$. From \eqref{3.13}, there must exist two positive constants $A_{1}$ and $A_{2}$ such that
$$
A_{1}\leq w(t)=\frac{\dot{a}(t)}{a(t)}\leq A_{2},
$$
then by comparison theorem of ODE,
$
e^{A_{1}t}\leq a(t)\leq e^{A_{2}t}.
$
It is obvious that $a(t)=+\infty$, when $t\rightarrow +\infty$.
\end{remark}


\section{Proof of the nonlinear stability property}\label{section:3}

\subsection{The wave gauge}\label{section:3.1}

Our first task is to introdude appropriate coordinates.
For the background metric $\widetilde{\eta}$, we have
\begin{equation*}
\overline{\widetilde{\Gamma}}^{0}_{00}=-\frac{1}{\tau}+\frac{\dot{w}}{\tau w^{2}},
\end{equation*}
and
\begin{equation*}
\overline{\widetilde{\Gamma}}^{0}_{ii}=-\frac{1}{2}\widetilde{\eta}^{00}\del_{\tau}\widetilde{\eta}_{ii}=-\frac{w^{2}}{\tau}.
\end{equation*}
Then, we obtain
\begin{equation*}
\overline{\widetilde{\Gamma}}^{0}=\widetilde{\eta}^{00}\overline{\widetilde{\Gamma}}^{0}_{00}+\widetilde{\eta}^{ii}\overline{\widetilde{\Gamma}}^{0}_{ii}=-2\tau w^{2}-\tau\dot{w},
\end{equation*}
and
\begin{equation}\label{4.1}
e^{2\Phi}\overline{\widetilde{\Gamma}}^{0}=-\frac{2w^{2}}{\tau}-\frac{\dot{w}}{\tau}.
\end{equation}
Direct calculations show that
\begin{equation}\label{4.2}
\overline{\widetilde{\Gamma}}^{i}=0,\quad(i=1,2,3).
\end{equation}
For the conformal metric $\eta$, we have
\begin{equation}\label{4.3}
\overline{\Gamma}^{\mu}=-\frac{\dot{w}}{\tau}\delta^{\mu}_{0}.
\end{equation}
On the other hand, under conformal transformation \eqref{3.8}
\begin{equation}\label{4.4}
\Gamma^{\mu}=g^{\alpha\beta}\Gamma_{\alpha\beta}^{\mu}=2\nabla^{\mu}\Phi+e^{2\Phi}\widetilde{\Gamma}^{\mu}.
\end{equation}
Define the wave coordinates as
\begin{equation}\label{4.5}
Z^{\mu}=\Gamma^{\mu}+Y^{\mu}=\Gamma^{\mu}+\frac{2}{\tau}\left(g^{\mu0}+(w^{2}+\frac{\dot{w}}{2})
\delta^{\mu}_{0}\right):=\Gamma^{\mu}+\frac{2}{\tau}\left(g^{\mu0}+\frac{\Lambda(\tau)}{3}\delta^{\mu}_{0}\right).
\end{equation}
In \eqref{4.5}, we have denoted $w^{2}+\frac{\dot{w}}{2}$ by $\frac{\Lambda(\tau)}{3}$ for convenience.

\begin{remark}\label{remark:3.1}
1. By results in Zengino\u{g}lu \cite{Z}, we see that if initially $Z^{\mu}=0$, then $Z^{\mu}\equiv0$ in the whole evolution of the Einstein-Euler system.

2.
From \eqref{4.5}, we can see that $Y^{\mu}=-2\nabla^{\mu}\Phi-e^{2\Phi}\overline{\widetilde{\Gamma}}^{\mu}$. From \eqref{4.1}-\eqref{4.4} we see that for the background metric $\eta$, $Z^{\mu}\equiv 0$. This fact is very important for the disappear of the linear parts of the conformal Einstein and conformal fluid equations \eqref{1.11}.

3.
When $w^{2}=\frac{\Lambda}{3}$ with $\Lambda$ a positive constant, this is exactly the case considered by Oliynyk \cite{O}.
\end{remark}


\subsection{The reduced conformal Einstein equations}\label{section:3.2}

With the wave coordinates $Z^{\mu}$ defined by \eqref{4.5}, we can consider the following equivalently reduced conformal Einstein equations by assuming $Z^{\mu}|_{\tau=1}=0$
\begin{eqnarray}\label{4.6}
-2R^{\mu\nu}&+&2\nabla^{(\mu}Z^{\nu)}+A_{\kappa}^{\mu\nu}Z^{\kappa}=-4\nabla^{\mu}\nabla^{\nu}\Phi+4\nabla^{\mu}\Phi\nabla^{\nu}\Phi\nonumber\\
&-&2\left[\Box_{g}\Phi+2|\nabla\Phi|^{2}_{g}+\frac{1}{2}A^{\frac{2}{\alpha+1}}I^{-\frac{\alpha}{\alpha+1}}\frac{2I-\widetilde{h}^{\frac{\alpha+1}{2\alpha}}}
{(I-\widetilde{h}^{\frac{\alpha+1}{2\alpha}})^{\frac{1}{\alpha+1}}}e^{2\Phi}\right]g^{\mu\nu}\nonumber\\
&-&2e^{4\Phi}\frac{A^{\frac{2}{\alpha+1}}I^{-\frac{\alpha}{\alpha+1}}\widetilde{h}^{\frac{1-\alpha}{2\alpha}}\widetilde{\del}^{\mu}\Psi\widetilde{\del}^{\nu}\Psi}
{(I-\widetilde{h}^{\frac{\alpha+1}{2\alpha}})^{\frac{1}{\alpha+1}}}.
\end{eqnarray}
With
$$
A^{\mu\nu}_{\kappa}=-\Gamma^{(\mu}\delta^{\nu)}_{\kappa}+Y^{(\mu}\delta^{\nu)}_{\kappa}\quad\text{and}\quad
\nabla^{(\mu}Z^{\nu)}=\frac{1}{2}(\nabla^{\mu}Z^{\nu}+\nabla^{\nu}Z^{\mu}),
$$
we have
\begin{eqnarray*}
2\nabla^{(\mu}Z^{\nu)}&=&2\nabla^{(\mu}\Gamma^{\nu)}+\nabla^{\mu}Y^{\nu}+\nabla^{\nu}Y^{\mu}\nonumber\\
&=&2\nabla^{(\mu}\Gamma^{\nu)}+\del_{\tau}\left(\frac{2\Lambda(\tau)}{3\tau}\right)(g^{i0}\delta^{\nu}_{0}+g^{\nu0}\delta^{\mu}_{0})
-\frac{2\Lambda(\tau)}{3\tau}
\del_{\tau}g^{\mu\nu}-4\nabla^{\mu}\nabla^{\nu}\Phi
\end{eqnarray*}
and
\begin{eqnarray*}
A_{k}^{\mu\nu}Z^{\kappa}&=&(-\Gamma^{(\mu}\delta^{\nu)}_{\kappa}+Y^{(\mu}\delta^{\nu)}_{\kappa})(\Gamma^{\kappa}+Y^{\kappa})\nonumber\\
&=&-\Gamma^{(\mu}\Gamma^{\nu)}+4\nabla^{\mu}\Phi\nabla^{\nu}\Phi-\frac{4\Lambda(\tau)}{3\tau}(\nabla^{\mu}\Phi\delta^{\nu}_{0}+\nabla^{\nu}\Phi\delta^{\mu}_{0})
+\frac{4\Lambda^{2}(\tau)}{9\tau^{2}}\delta^{\mu}_{0}\delta^{\nu}_{0}.
\end{eqnarray*}
The Einstein equations \eqref{3.11} become
\begin{eqnarray*}
-2R^{\mu\nu}&+&2\nabla^{(\mu}\Gamma^{\nu)}-\Gamma^{\mu}\Gamma^{\nu}=\frac{2\Lambda(\tau)}{3\tau}\del_{\tau}g^{\mu\nu}
-\frac{2\del_{\tau}\Lambda(\tau)}{3\tau}
(g^{\mu0}\delta^{\nu}_{0}+g^{\nu0}\delta^{\mu}_{0})\nonumber\\
&-&\frac{4\Lambda(\tau)}{3\tau^{2}}\left(g^{00}+\frac{\Lambda(\tau)}{3}\right)\delta^{\mu}_{0}\delta^{\nu}_{0}-
\frac{4\Lambda(\tau)}{3\tau^{2}}g^{0i}\delta_{0}^{(\mu}\delta^{\nu)}_{i}-\frac{2}{\tau^{2}}g^{\mu\nu}\left(g^{00}+\frac{\Lambda(\tau)}{3}\right)\nonumber\\
&+&\frac{2}{\tau^{2}}\left(\Lambda(\tau)-A^{\frac{2}{\alpha+1}}I^{-\frac{\alpha}{\alpha+1}}\frac{I-\frac{1}{2}
\widetilde{h}^{\frac{\alpha+1}{2\alpha}}}
{(I-\widetilde{h}^{\frac{\alpha+1}{2\alpha}})^{\frac{1}{\alpha+1}}}\right)g^{\mu\nu}\nonumber\\
&-&2e^{4\Phi}\frac{A^{\frac{2}{\alpha+1}}I^{-\frac{\alpha}{\alpha+1}}\widetilde{h}^{\frac{1-\alpha}{2\alpha}}\widetilde{\del}^{\mu}\Psi\widetilde{\del}^{\nu}\Psi}
{(I-\widetilde{h}^{\frac{\alpha+1}{2\alpha}})^{\frac{1}{\alpha+1}}}.
\end{eqnarray*}
Expanding the left hand of above and inserting $\frac{\Lambda(\tau)}{3}=w^{2}+\frac{\dot{w}}{2}$, we get
\begin{eqnarray*}
-g^{\kappa\lambda}\del_{\kappa}\del_{\lambda}g^{\mu\nu}&=&\frac{2w^{2}}{\tau}\del_{\tau}g^{\mu\nu}-\frac{4w^{2}}{\tau^{2}}(g^{00}+w^{2})
\delta^{\mu}_{0}\delta^{\nu}_{0}
-\frac{4w^{2}}{\tau^{2}}g^{0i}\delta_{0}^{(\mu}\delta^{\nu)}_{i}
\nonumber\\
&-&\frac{2}{\tau^{2}}g^{\mu\nu}(g^{00}+w^{2})
-\frac{2\dot{w}}{\tau^{2}}(g^{00}+w^{2}+\frac{\dot{w}}{2})\delta^{\mu}_{0}\delta^{\nu}_{0}
-\frac{\dot{w}}{\tau}\del_{\tau}g^{\mu\nu}
\nonumber\\
&-&\frac{2}{\tau^{2}}w^{2}\dot{w}\delta^{\mu}_{0}\delta^{\nu}_{0}
-\frac{2\dot{w}}{\tau^{2}}g^{0i}\delta_{0}^{(\mu}\delta^{\nu)}_{i}-\frac{\dot{w}}{\tau^{2}}g^{\mu\nu}
-\frac{2\del_{\tau}\Lambda(\tau)}{3\tau}(g^{\mu0}\delta^{\nu}_{0}+g^{\nu0}\delta^{\mu}_{0})\nonumber\\
&+&\frac{2}{\tau^{2}}\left(3w^{2}+\frac{3\dot{w}}{2}-
A^{\frac{2}{\alpha+1}}I^{-\frac{\alpha}{\alpha+1}}\frac{I-\frac{1}{2}\widetilde{h}^{\frac{\alpha+1}{2\alpha}}}
{(I-\widetilde{h}^{\frac{\alpha+1}{2\alpha}})^{\frac{1}{\alpha+1}}}\right)g^{\mu\nu}\nonumber\\
&-&2e^{4\Phi}\frac{A^{\frac{2}{\alpha+1}}I^{-\frac{\alpha}{\alpha+1}}\widetilde{h}^{\frac{1-\alpha}{2\alpha}}\widetilde{\del}^{\mu}\Psi\widetilde{\del}^{\nu}\Psi}
{(I-\widetilde{h}^{\frac{\alpha+1}{2\alpha}})^{\frac{1}{\alpha+1}}}+Q^{\mu\nu}(g,\del g)\\
&:=&\frac{2w^{2}}{\tau}\del_{\tau}g^{\mu\nu}-\frac{4w^{2}}{\tau^{2}}(g^{00}+w^{2})\delta^{\mu}_{0}\delta^{\nu}_{0}
-\frac{4w^{2}}{\tau^{2}}g^{0i}\delta_{0}^{(\mu}\delta^{\nu)}_{i}\nonumber\\
&-&\frac{2}{\tau^{2}}g^{\mu\nu}(g^{00}+w^{2})+M^{\mu\nu}
\nonumber
\end{eqnarray*}
or equivalently
\begin{eqnarray}\label{4.7}
-g^{\kappa\lambda}\del_{\kappa}\del_{\lambda}(g^{\mu\nu}-\eta^{\mu\nu})
&=&\frac{2w^{2}}{\tau}\del_{\tau}(g^{\mu\nu}-\eta^{\mu\nu})-\frac{4w^{2}}{\tau^{2}}(g^{00}+w^{2})\delta^{\mu}_{0}\delta^{\nu}_{0}\nonumber\\
&&-\frac{4w^{2}}{\tau^{2}}g^{0i}\delta_{0}^{(\mu}\delta^{\nu)}_{i}-\frac{2}{\tau^{2}}g^{\mu\nu}(g^{00}+w^{2})+\hat{M}^{\mu\nu}.
\end{eqnarray}
We have
\begin{eqnarray}\label{4.8}
\hat{M}^{\mu\nu}&=&M^{\mu\nu}+g^{\kappa\lambda}\del_{\kappa}\del_{\lambda}\eta^{\mu\nu}+2\frac{w^{2}}{\tau}\del_{\tau}\eta^{\mu\nu}\nonumber\\ 
&=&(g^{\kappa\lambda}-\eta^{\kappa\lambda})\del_{\kappa}\del_{\lambda}\eta^{\mu\nu}-\frac{2\dot{w}}{\tau^{2}}(g^{00}+w^{2})\delta^{\mu}_{0}\delta^{\nu}_{0}
-\frac{\dot{w}}{\tau}\del_{\tau}(g^{\mu\nu}-\eta^{\mu\nu})-\frac{2\dot{w}}{\tau^{2}}g^{0i}\delta_{0}^{(\mu}\delta^{\nu)}_{i}\nonumber\\
&&-\frac{\dot{w}}{\tau^{2}}(g^{\mu\nu}-\eta^{\mu\nu})-\frac{2\del_{\tau}\Lambda(\tau)}{3\tau}\left((g^{\mu0}-\eta^{\mu0})\delta^{\nu}_{0}+(g^{\nu0}-\eta^{\nu0})\delta^{\mu}_{0}\right)\nonumber\\
&&+\frac{2}{\tau^{2}}A^{\frac{2}{\alpha+1}}I^{-\frac{\alpha}{\alpha+1}}\left(\frac{I-\frac{1}{2}\widetilde{h}^{\frac{\alpha+1}{2\alpha}}}
{(I-\widetilde{h}^{\frac{\alpha+1}{2\alpha}})^{\frac{1}{\alpha+1}}}
-\frac{I-\frac{1}{2}\overline{\widetilde{h}}^{\frac{\alpha+1}{2\alpha}}}{(I-\overline{\widetilde{h}}^{\frac{\alpha+1}{2\alpha}})^{\frac{1}{\alpha+1}}}\right)g^{\mu\nu}\nonumber\\
&&-2e^{4\Phi}A^{\frac{2}{\alpha+1}}I^{-\frac{\alpha}{\alpha+1}}\left(\frac{\widetilde{h}^{\frac{1-\alpha}{2\alpha}}\widetilde{\del}^{\mu}\Psi\widetilde{\del}^{\nu}\Psi}
{(I-\widetilde{h}^{\frac{\alpha+1}{2\alpha}})^{\frac{1}{\alpha+1}}}-\frac{\overline{\widetilde{h}}^{\frac{1-\alpha}{2\alpha}}\widetilde{\del}^{\mu}\overline{\Psi}\widetilde{\del}^{\nu}\overline{\Psi}}
{(I-\overline{\widetilde{h}}^{\frac{\alpha+1}{2\alpha}})^{\frac{1}{\alpha+1}}}\right)\nonumber\\
&&+Q^{\mu\nu}(g,\del g)-Q^{\mu\nu}(\eta,\del\eta)
\end{eqnarray}

\begin{remark}\label{remark:3.2}
1. In above, $Q^{\mu\nu}(g,\del g)$ are quadratic in $\del g=(\del_{\kappa}g^{\mu\nu})$ and analytical in $g=(g^{\mu\nu})$.
Thus, in $\hat{M}^{\mu\nu}$, each term contains $g-\eta$ or $\widetilde{\del}\Psi-\widetilde{\del}\overline{\Psi}$, which will be proved later.

2. Equation \eqref{3.11} is equivalent to \eqref{4.7}, provided that $Z^{\mu}=0$. On the other hand, the background solution $(\eta,\overline{\Psi}(\tau))$ defined in Section \ref{section:2.3} satisfies \eqref{4.7} obviously, since $Z^{\mu}\equiv0$ for the metric $\eta$, which is also the reason for the disappear of the linear parts of \eqref{4.8}
\end{remark}
The main part of this paper is turning the Einstein-Euler system of GCG into a symmetric hyperbolic system, thus we make this process clear in the following.

Recall the densitized three metric defined in Section \ref{section:1.3}
\begin{equation}\label{4.9}
\textbf{g}^{ij}=det(\check{g}_{lm})^{\frac{1}{3}}g^{ij},
\end{equation}
where
$$
\check{g}_{lm}=(g^{lm})^{-1},
$$
and the variable
\begin{equation}\label{4.10}
\textbf{q}=g^{00}+w^{2}-\frac{w^{2}}{3}\ln(\det(g^{pq})).
\end{equation}
It is easy to check that
\begin{equation}\label{4.11}
\del_{\mu}\textbf{g}^{ij}=(\det(\check{g}_{pq}))^{\frac{1}{3}}\textbf{L}_{lm}^{ij}\del_{\mu}g^{lm},
\end{equation}
where
$$
\textbf{L}_{lm}^{ij}=\delta^{i}_{l}\delta^{j}_{m}-\frac{1}{3}\check{g}_{lm}g^{ij}.
$$
Obviously, $\textbf{L}_{lm}^{ij}$ is trace-free, i.e.,
$$
\textbf{L}_{lm}^{ij}g^{lm}=0.
$$
Define
\begin{eqnarray}\label{4.12}
\textbf{u}^{0\nu}&=&\frac{g^{0\nu}-\eta^{0\nu}}{2\tau},\\\label{4.13}
\textbf{u}_{0}^{0\nu}&=&\del_{\tau}(g^{0\nu}-\eta^{0\nu})-\frac{3(g^{0\nu}-\eta^{0\nu})}{2\tau},\\\label{4.14}
\textbf{u}_{i}^{0\nu}&=&\del_{i}(g^{0\nu}-\eta^{0\nu}),\\\label{4.15}
\textbf{u}^{ij}&=&\textbf{g}^{ij}-\delta^{ij},\\\label{4.16}
\textbf{u}_{\mu}^{ij}&=&\del_{\mu}\textbf{g}^{ij},\\\label{4.17}
\textbf{u}&=&\textbf{q},\\\label{4.18}
\textbf{u}_{\mu}&=&\del_{\mu}\textbf{q}.
\end{eqnarray} 
At first, we consider the equation satisfied by $g^{0\mu}-\eta^{0\mu}$.

 Utilizing \eqref{4.12}-\eqref{4.14} and inserting $g^{0\mu}-\eta^{0\mu}=2\tau \textbf{u}^{0\mu}$,
 $\del_{i}(g^{0\mu}-\eta^{0\mu})=\textbf{u}_{i}^{0\mu}$ and $\del_{\tau}(g^{0\mu}-\eta^{0\mu})=\textbf{u}_{0}^{0\mu}+3\textbf{u}^{0\mu}$ into \eqref{4.7}, we have
\begin{eqnarray*}
&&-g^{00}\del_{\tau}(\textbf{u}_{0}^{0\mu}+3\textbf{u}^{0\mu})-2g^{0i}\del_{i}(\textbf{u}_{0}^{0\mu}+3\textbf{u}^{0\mu})-g^{ij}\del_{j}\textbf{u}_{i}^{0\mu}\\
&=&\frac{2w^{2}}{\tau}(\textbf{u}_{0}^{0\mu}+3\textbf{u}^{0\mu})-\frac{4w^{2}}{\tau^{2}}(2\tau \textbf{u}^{00})\delta^{\mu}_{0}-\frac{4w^{2}}{\tau^{2}}(2\tau \textbf{u}^{0i})
\delta_{0}^{(\mu}\delta^{0)}_{i}-\frac{4}{\tau}g^{0\mu}\textbf{u}^{00}+\hat{M}^{0\mu}.
\end{eqnarray*}
Based on above, we get
\begin{eqnarray}\label{4.19}
&&-g^{00}\del_{\tau} \textbf{u}_{0}^{0\mu}-2g^{0i}\del_{i}\textbf{u}_{0}^{0\mu}-g^{ij}\del_{j}\textbf{u}_{i}^{0\mu}\nonumber\\
&=&3g^{00}\del_{\tau}\left(\frac{g^{0\mu}-\eta^{0\mu}}{2\tau}\right)+6g^{0i}\del_{i}\left(\frac{g^{0\mu}-\eta^{0\mu}}{2\tau}\right)
+\frac{2}{\tau}(2\tau \textbf{u}^{00}-g^{00})(\textbf{u}_{0}^{0\mu}+3\textbf{u}^{0\mu})\nonumber\\
&&-\frac{8}{\tau}(2\tau \textbf{u}^{00}-g^{00})\textbf{u}^{00}\delta_{0}^{\mu}-\frac{8}{\tau}(2\tau \textbf{u}^{00}-g^{00})\textbf{u}^{0i}\delta_{0}^{(\mu}\delta^{0)}_{i}
-\frac{4}{\tau}g^{0\mu}\textbf{u}^{00}+\hat{M}^{0\mu}\nonumber\\
&=&3g^{00}\frac{\textbf{u}_{0}^{0\mu}+3\textbf{u}^{0\mu}}{2\tau}-\frac{3g^{00}\textbf{u}^{0\mu}}{\tau}+6\textbf{u}^{0i}\textbf{u}_{i}^{0\mu}
+4\textbf{u}^{00}\textbf{u}_{0}^{0\mu}+12\textbf{u}^{00}\textbf{u}^{0\mu}\nonumber\\
&&-\frac{2}{\tau}g^{00}\textbf{u}_{0}^{0\mu}-\frac{6}{\tau}g^{00}\textbf{u}^{0\mu}-16\textbf{u}^{00}\textbf{u}^{00}\delta^{\mu}_{0}
+\frac{8}{\tau}g^{00}\textbf{u}^{00}\delta^{\mu}_{0}
-16\textbf{u}^{00}\textbf{u}^{0i}\delta_{0}^{(\mu}\delta^{0)}_{i}\nonumber\\
&&+\frac{8g^{00}}{\tau}\textbf{u}^{0i}\delta_{0}^{(\mu}\delta^{0)}_{i}-\frac{4}{\tau}g^{0\mu}\textbf{u}^{00}+\hat{M}^{0\mu}\nonumber\\
&=&\left\{\begin{array}{ll}
\frac{1}{\tau}[-\frac{g^{00}}{2}(\textbf{u}_{0}^{00}+\textbf{u}^{00})]+6\textbf{u}^{0i}\textbf{u}^{00}
+4\textbf{u}^{00}\textbf{u}_{0}^{00}-4\textbf{u}^{00}\textbf{u}^{00}+\hat{M}^{00},\quad \mu=0\\\vspace{2mm}
\frac{1}{\tau}[-\frac{g^{00}}{2}(\textbf{u}_{0}^{0k}+\textbf{u}^{0k})]+6\textbf{u}^{0i}\textbf{u}^{0k}
+4\textbf{u}^{00}\textbf{u}_{0}^{0k}-4\textbf{u}^{00}\textbf{u}^{0k}+\hat{M}^{0k},\quad \mu=k
\end{array}
\right.\nonumber\\
&=&\frac{1}{\tau}\left[-\frac{g^{00}}{2}(\textbf{u}_{0}^{0\mu}+\textbf{u}^{0\mu})\right]+6\textbf{u}^{0i}\textbf{u}^{0\mu}+4\textbf{u}^{00}\textbf{u}_{0}^{0\mu}
-4\textbf{u}^{00}\textbf{u}^{0\mu}+\hat{M}^{0\mu}.
\end{eqnarray}
Now we consider $\textbf{g}^{ij}-\delta^{ij}$, direct calculations give
\begin{eqnarray}\label{4.20}
-g^{\kappa\lambda}\del_{\kappa}\del_{\lambda}\textbf{g}^{ij}&=&-g^{\kappa\lambda}\del_{\kappa}[(det(\check{g}_{pq}))^{\frac{1}{3}}\textbf{L}_{lm}^{ij}\del_{\lambda}g^{lm}]\nonumber\\
&=&(det(\check{g}_{pq}))^{\frac{1}{3}}\textbf{L}_{lm}^{ij}(-g^{\kappa\lambda}\del_{\kappa}\del_{\lambda}g^{lm})-g^{\kappa\lambda}\del_{\kappa}[(det(\check{g}_{pq}))^{\frac{1}{3}}
\textbf{L}_{lm}^{ij}]\del_{\lambda}g^{lm}\nonumber\\
&=&(det(\check{g}_{pq}))^{\frac{1}{3}}\textbf{L}_{lm}^{ij}\left(\frac{2w^{2}}{\tau}\del{\tau}(g^{lm}-\eta^{lm})
-\frac{2}{\tau^{2}}g^{lm}(g^{00}+w^{2})+\hat{M}^{lm}\right)\nonumber\\
&&-g^{\kappa\lambda}\del_{\kappa}[(det(\check{g}_{pq}))^{\frac{1}{3}}
\textbf{L}_{lm}^{ij}]\del_{\lambda}g^{lm}\nonumber\\
&=&\frac{2w^{2}}{\tau}\del_{\tau}\textbf{g}^{ij}+(det(\check{g}_{pq}))^{\frac{1}{3}}\textbf{L}_{lm}^{ij}\hat{M}^{lm}-
g^{\kappa\lambda}\del_{\kappa}[(det(\check{g}_{pq}))^{\frac{1}{3}}
\textbf{L}_{lm}^{ij}]\del_{\lambda}g^{lm}\nonumber\\
&:=&\frac{2w^{2}}{\tau}\del_{\tau}\textbf{g}^{ij}+\acute{M}^{ij}.
\end{eqnarray}
We have
$$
\acute{M}^{ij}=(det(\check{g}_{pq}))^{\frac{1}{3}}\textbf{L}_{lm}^{ij}\hat{M}^{lm}-
g^{\kappa\lambda}\del_{\kappa}[(det(\check{g}_{pq}))^{\frac{1}{3}}
\textbf{L}_{lm}^{ij}]\del_{\lambda}g^{lm}.
$$
Then by \eqref{4.15}-\eqref{4.16}, we have
\begin{eqnarray}\label{4.21}
&&-g^{00}\del_{\tau}\textbf{u}_{0}^{ij}-2g^{0i}\del_{i}\textbf{u}_{0}^{ij}-g^{pq}\del_{p}\textbf{u}_{q}^{ij}\nonumber\\
&=&\frac{2}{\tau}(2\tau \textbf{u}^{00}-g^{00})\textbf{u}_{0}^{ij}+\acute{M}^{ij}\nonumber\\
&=&-\frac{2}{\tau}g^{00}\textbf{u}_{0}^{ij}+4\textbf{u}^{00}\textbf{u}_{0}^{ij}+\acute{M}^{ij}.
\end{eqnarray}
At last, we consider $\textbf{q}$, we have by \eqref{4.10}
\begin{equation}\label{4.22}
\del_{\lambda}\textbf{q}=\del_{\lambda}(g^{00}+w^{2})
-\frac{w^{2}}{3}g_{pq}\del_{\lambda}g^{pq}-\frac{2w\del_{\tau}w\delta^{\lambda}_{0}}{3}\ln(\det(g^{pq})).
\end{equation}
Then by \eqref{4.22}
\begin{eqnarray}\label{4.23}
\del_{\kappa}\del_{\lambda}\textbf{q}&=&\del_{\kappa}\del_{\lambda}(g^{00}+w^{2})-\frac{w^{2}}{3}g_{pq}\del_{\kappa}\del_{\lambda}g^{pq}-\frac{w^{2}}{3}\del_{\kappa}g_{pq}\del_{\lambda}g^{pq}\nonumber\\
&&-\frac{2w\del_{\tau}w\delta^{\kappa}_{0}}{3}g_{pq}\del_{\lambda}g^{pq}-\frac{2(\del_{\tau}w)^{2}}{3}\delta^{\lambda}_{0}\delta^{\kappa}_{0}\ln(\det(g^{pq}))\nonumber\\
&&-\frac{2w\del^{2}_{\tau}w}{3}\delta^{\lambda}_{0}\delta^{\kappa}_{0}\ln(\det(g^{pq}))-\frac{2w\del_{\tau}w}{3}\delta^{\lambda}_{0}g_{pq}\del_{\kappa}g^{pq}\nonumber\\
&:=&\del_{\kappa}\del_{\lambda}(g^{00}+w^{2})-\frac{w^{2}}{3}g_{pq}\del_{\kappa}\del_{\lambda}g^{pq}+R^{\textbf{q}},
\end{eqnarray}
where
\begin{eqnarray*}
R^{\textbf{q}}&=&-\frac{w^{2}}{3}\del_{\kappa}g_{pq}\del_{\lambda}g^{pq}
-\frac{2w\del_{\tau}w\delta^{\kappa}_{0}}{3}g_{pq}\del_{\lambda}g^{pq}-\frac{2(\del_{\tau}w)^{2}}{3}\delta^{\lambda}_{0}\delta^{\kappa}_{0}\ln(\det(g^{pq}))\nonumber\\
&&-\frac{2w\del^{2}_{\tau}w}{3}\delta^{\lambda}_{0}\delta^{\kappa}_{0}\ln(\det(g^{pq}))-\frac{2w\del_{\tau}w}{3}\delta^{\lambda}_{0}g_{pq}\del_{\kappa}g^{pq}.\nonumber\\
\end{eqnarray*}
Thus, we have by \eqref{4.17}-\eqref{4.18}
\begin{eqnarray}\label{4.24}
-g^{\kappa\lambda}\del_{\kappa}\del_{\lambda}\textbf{q}&=&-g^{\kappa\lambda}\del_{\kappa}\del_{\lambda}(g^{00}+w^{2})
+\frac{w^{2}}{3}g_{pq}g^{\kappa\lambda}
\del_{\kappa}\del_{\lambda}g^{pq}-g^{\kappa\lambda}R^{\textbf{q}}\nonumber\\
&=&\frac{2w^{2}}{\tau}\del_{\tau}(g^{00}+w^{2})-\frac{4w^{2}}{\tau^{2}}(g^{00}+w^{2})-\frac{2}{\tau^{2}}g^{00}(g^{00}+w^{2})+\hat{M}^{00}\nonumber\\
&&-\frac{w^{2}}{3}g_{pq}(\frac{2w^{2}}{\tau}\del_{\tau}g^{pq}-\frac{2}{\tau^{2}}g^{pq}(g^{00}+w^{2})+\hat{M}^{pq})-g^{\kappa\lambda}R^{\textbf{q}}\nonumber\\
&=&\frac{2w^{2}}{\tau}\del_{\tau}\textbf{q}-2\left(\frac{g^{00}+w^{2}}{\tau}\right)^{2}+\hat{M}^{00}-\frac{w^{2}}{3}g_{pq}\hat{M}^{pq}\nonumber\\
&&+\frac{4w^{3}\del_{\tau}w}{3\tau}\ln(\det{g^{pq}})-g^{\kappa\lambda}R^{\textbf{q}}\nonumber\\
&:=&\frac{2}{\tau}(2\tau \textbf{u}^{00}-g^{00})\del_{\tau}\textbf{q}-8(\textbf{u}^{00})^{2}+\hat{R}^{\textbf{q}}\nonumber\\
&=&-\frac{2}{\tau}g^{00}\del_{\tau}\textbf{q}+4\textbf{u}^{00}\del_{\tau}\textbf{q}-8(\textbf{u}^{00})^{2}+\hat{R}^{\textbf{q}}.
\end{eqnarray}
We have
$$
\hat{R}^{\textbf{q}}=\hat{M}^{00}-\frac{w^{2}}{3}g_{pq}\hat{M}^{pq}
+\frac{4w^{3}\del_{\tau}w}{3\tau}\ln(\det(g^{pq}))-g^{\kappa\lambda}R^{\textbf{q}}.
$$
From \eqref{4.24}, we have
\begin{eqnarray}\label{4.25}
&&-g^{00}\del_{\tau}\textbf{u}_{0}-2g^{0i}\del_{i}\textbf{u}_{0}-g^{ij}\del_{i}\textbf{u}_{j}\nonumber\\
&=&-\frac{2}{\tau}g^{00}\textbf{u}_{0}+4\textbf{u}^{00}\textbf{u}_{0}-8(\textbf{u}^{00})^{2}+\hat{R}^{\textbf{q}}.
\end{eqnarray}
From \eqref{4.19}, \eqref{4.21} and \eqref{4.25}, we easily transform the Einstein equations into the following symmetric hyperbolic system
\begin{eqnarray}\label{4.26}
A^{\kappa}\del_{\kappa}\left(\begin{array}{ll}
\textbf{u}^{0\mu}_{0}\\\textbf{u}^{0\mu}_{j}\\\textbf{u}^{0\mu}
\end{array}\right)=\frac{1}{\tau}\textbf{AP}\left(\begin{array}{ll}
\textbf{u}^{0\mu}_{0}\\\textbf{u}^{0\mu}_{j}\\\textbf{u}^{0\mu}
\end{array}\right)+F^{0\mu},
\end{eqnarray}
\begin{eqnarray}\label{4.27}
A^{\kappa}\del_{\kappa}\left(\begin{array}{ll}
\textbf{u}^{lm}_{0}\\ \textbf{u}^{lm}_{j}\\ \textbf{u}^{lm}
\end{array}\right)=\frac{1}{\tau}(-2g^{00})\Pi\left(\begin{array}{ll}
\textbf{u}^{lm}_{0}\\ \textbf{u}^{lm}_{j}\\ \textbf{u}^{lm}
\end{array}\right)+F^{lm},
\end{eqnarray}
and
\begin{eqnarray}\label{4.28}
A^{\kappa}\del_{\kappa}\left(\begin{array}{ll}
\textbf{u}_{0}\\ \textbf{u}_{j}\\ \textbf{u}
\end{array}\right)=\frac{1}{\tau}(-2g^{00})\Pi\left(\begin{array}{ll}
\textbf{u}_{0}\\ \textbf{u}_{j}\\ \textbf{u}
\end{array}\right)+F^{\textbf{q}},
\end{eqnarray}
where
\begin{align*}
A^{0}=\left(
\begin{array}{ccc}
-g^{00}& 0& 0\\
0& g^{ij}& 0\\
0&0& -g^{00}
\end{array}\right),\quad
A^{k}=\left(
\begin{array}{ccc}
-2g^{0k}& -g^{jk}& 0\\
-g^{ik}& 0& 0\\
0&0& 0
\end{array}\right),
\end{align*}
\begin{align*}
\textbf{P}=\left(
\begin{array}{ccc}
\frac{1}{2}& 0& \frac{1}{2}\\
0& \delta^{j}_{k}& 0\\
\frac{1}{2}&0& \frac{1}{2}
\end{array}\right),\quad
\textbf{A}=\left(
\begin{array}{ccc}
-g^{00}& 0& 0\\
0& \frac{3}{2}g^{jk}& 0\\
0&0& -g^{00}
\end{array}\right),
\end{align*}
\begin{align*}
\Pi=\left(
\begin{array}{ccc}
1& 0& 0\\
0& 0& 0\\
0&0& 0\end{array}\right),\quad
F^{0\mu}=\left(
\begin{array}{ccc}
&6\textbf{u}^{0i}\textbf{u}^{0\mu}+4\textbf{u}^{00}\textbf{u}_{0}^{0\mu}-4\textbf{u}^{00}\textbf{u}^{0\mu}+\hat{M}^{0\mu}\\
&0\\
&0
\end{array}\right),
\end{align*}
and
\begin{align*}
F^{ij}=\left(
\begin{array}{ccc}
&4\textbf{u}^{00}\textbf{u}_{0}^{ij}+\acute{M}^{ij}\\
&0\\
&\-g^{00}\textbf{u}_{0}^{lm}
\end{array}
\right),\quad
F^{\textbf{q}}=\left(
\begin{array}{ccc}
&4\textbf{u}^{00}\textbf{u}_{0}-8(\textbf{u}^{00})^{2}+\hat{R}^{\textbf{q}}\\
&0\\
&\-g^{00}\textbf{u}_{0}^{lm}
\end{array}
\right).
\end{align*}


\subsection{The conformal fluid evolution}\label{section:3.3}

In this section, we transform the conformal fluid equation into a symmetric hyperbolic system. At first, we choose an appropriate conformal factor for the potential function $\Psi(\tau,x)$.

Define
\begin{equation}\label{4.29}
\widetilde{\del}^{\mu}\Psi=e^{-\lambda\Phi}\del^{\mu}\Psi=\tau^{\lambda}\del^{\mu}\Psi,
\end{equation}
where $\lambda$ will be determined later. Then
\begin{equation}\label{4.30}
\widetilde{h}=-\widetilde{g}_{\mu\nu}\widetilde{\del}^{\mu}\Psi\widetilde{\del}^{\nu}\Psi=-\tau^{2(\lambda-1)}g_{\mu\nu}\del^{\mu}\Psi\del^{\nu}\Psi
:=\tau^{2(\lambda-1)}h.
\end{equation}
In terms of $h$ and $\del\Psi$, from \eqref{4.29}, \eqref{4.30}, we have by neglecting some unnecessary constants $A^{\frac{2}{\alpha+1}}I^{-\frac{\alpha}{\alpha+1}}$
\begin{equation}\label{4.31}
\widetilde{T}^{\mu\nu}=\frac{[\tau^{2(\lambda-1)}h]^{\frac{1-\alpha}{2\alpha}}\tau^{2\lambda}\del^{\mu}\Psi\del^{\nu}\Psi}{[I-(\tau^{2(\lambda-1)}h)
^{\frac{\alpha+1}{2\alpha}}]^{\frac{1}{\alpha+1}}}
-[I-(\tau^{2(\lambda-1)}h)^{\frac{\alpha+1}{2\alpha}}]^{\frac{\alpha}{\alpha+1}}\tau^{2}g^{\mu\nu}
\end{equation}
and so
\begin{equation}\label{4.32}
\widetilde{T}=3p-\rho=-\frac{4I-3[\tau^{2(\lambda-1)}h]^{\frac{\alpha+1}{2\alpha}}}{[I-(\tau^{2(\lambda-1)}h)^{\frac{\alpha+1}{2\alpha}}]^{\frac{1}{\alpha+1}}}.
\end{equation}
Then we have by \eqref{1.11} and \eqref{4.31}-\eqref{4.32}
\begin{eqnarray}\label{4.33}
\nabla_{\mu}\widetilde{T}^{\mu\nu}&=&\nabla_{\mu}\left(\frac{[\tau^{2(\lambda-1)}h]^{\frac{1-\alpha}{2\alpha}}\tau^{2\lambda}\del^{\mu}\Psi\del^{\nu}\Psi}{[I-(\tau^{2(\lambda-1)}h)
^{\frac{\alpha+1}{2\alpha}}]^{\frac{1}{\alpha+1}}}
-[I-(\tau^{2(\lambda-1)}h)^{\frac{\alpha+1}{2\alpha}}]^{\frac{\alpha}{\alpha+1}}\tau^{2}g^{\mu\nu}\right)\nonumber\\
&=&\nabla_{\mu}\left(\frac{(\tau^{2(\lambda-1)}h)^{\frac{1-\alpha}{2\alpha}}\tau^{2\lambda}\del^{\mu}\Psi}
{[I-(\tau^{2(\lambda-1)}h)^{\frac{\alpha+1}{2\alpha}}]^{\frac{1}{\alpha+1}}}\right)\del^{\nu}\Psi
+\frac{(\lambda-1)(\tau^{2(\lambda-1)}h)^{\frac{1-\alpha}{2\alpha}}\tau^{2\lambda-1}hg^{0\nu}}
{[I-(\tau^{2(\lambda-1)}h)^{\frac{\alpha+1}{2\alpha}}]^{\frac{1}{\alpha+1}}}\nonumber\\
&&-[I-(\tau^{2(\lambda-1)}h)^{\frac{\alpha+1}{2\alpha}}]^{\frac{\alpha}{\alpha+1}}2\tau g^{0\nu}\nonumber\\
&=&\frac{6}{\tau}\left(\frac{[\tau^{2(\lambda-1)}h]^{\frac{1-\alpha}{2\alpha}}\tau^{2\lambda}\del^{\tau}\Psi\del^{\nu}\Psi}{[I-(\tau^{2(\lambda-1)}h)
^{\frac{\alpha+1}{2\alpha}}]^{\frac{1}{\alpha+1}}}
-[I-(\tau^{2(\lambda-1)}h)^{\frac{\alpha+1}{2\alpha}}]^{\frac{\alpha}{\alpha+1}}\tau^{2}g^{0\nu}\right)\nonumber\\
&&+\tau g^{0\nu}\frac{4I-3[\tau^{2(\lambda-1)}h]^{\frac{\alpha+1}{2\alpha}}}{[I-(\tau^{2(\lambda-1)}h)^{\frac{\alpha+1}{2\alpha}}]^{\frac{1}{\alpha+1}}}.
\end{eqnarray}
From \eqref{4.33}, we get easily
\begin{eqnarray}\label{4.34}
&\;&\nabla_{\mu}\left(\frac{(\tau^{2(\lambda-1)}h)^{\frac{1-\alpha}{2\alpha}}\tau^{2\lambda}\del^{\mu}\Psi}
{[I-(\tau^{2(\lambda-1)}h)^{\frac{\alpha+1}{2\alpha}}]^{\frac{1}{\alpha+1}}}\right)\del^{\nu}\Psi\nonumber\\
&=&\frac{6}{\tau}\left(\frac{[\tau^{2(\lambda-1)}h]^{\frac{1-\alpha}{2\alpha}}\tau^{2\lambda}\del^{\tau}\Psi}{[I-(\tau^{2(\lambda-1)}h)
^{\frac{\alpha+1}{2\alpha}}]^{\frac{1}{\alpha+1}}}\right)\del^{\nu}\Psi+\tau g^{0\nu}\frac{(2-\lambda)(\tau^{2(\lambda-1)}h)^{\frac{\alpha+1}{2\alpha}}}{[I-(\tau^{2(\lambda-1)}h)^{\frac{\alpha+1}{2\alpha}}]^{\frac{1}{\alpha+1}}}.
\end{eqnarray}
Contracting \eqref{4.34} with $\del_{\nu}\Psi$ and using the fact $h=-\del^{\nu}\Psi\del_{\nu}\Psi\geq0$, we have
\begin{eqnarray}\label{4.35}
&&\nabla_{\mu}\left(\frac{(\tau^{2(\lambda-1)}h)^{\frac{1-\alpha}{2\alpha}}\tau^{2\lambda}\del^{\mu}\Psi}
{[I-(\tau^{2(\lambda-1)}h)^{\frac{\alpha+1}{2\alpha}}]^{\frac{1}{\alpha+1}}}\right)\nonumber\\
&=&\frac{6}{\tau}\left(\frac{[\tau^{2(\lambda-1)}h]^{\frac{1-\alpha}{2\alpha}}\tau^{2\lambda}\del^{\tau}\Psi}{[I-(\tau^{2(\lambda-1)}h)
^{\frac{\alpha+1}{2\alpha}}]^{\frac{1}{\alpha+1}}}\right)
-\frac{(2-\lambda)}{\tau}\frac{(\tau^{2(\lambda-1)}h)^{\frac{1-\alpha}{2\alpha}}\tau^{2\lambda}\del^{\tau}\Psi}
{[I-(\tau^{2(\lambda-1)}h)^{\frac{\alpha+1}{2\alpha}}]^{\frac{1}{\alpha+1}}}\nonumber\\
&=&\frac{4+\lambda}{\tau}\frac{(\tau^{2(\lambda-1)}h)^{\frac{1-\alpha}{2\alpha}}\tau^{2\lambda}\del^{\tau}\Psi}
{[I-(\tau^{2(\lambda-1)}h)^{\frac{\alpha+1}{2\alpha}}]^{\frac{1}{\alpha+1}}}
\end{eqnarray}
Expanding \eqref{4.35} directly gives
\begin{eqnarray}\label{4.36}
&&[I-(\tau^{2(\lambda-1)}h)^{\frac{1+\alpha}{2\alpha}}]\left(\Box_{g}\Psi-\frac{1-\alpha}{\alpha}\frac{\del^{\mu}\Psi\del^{\nu}\Psi}{h}
\nabla_{\mu}\nabla_{\nu}\Psi\right)\nonumber\\
&&-\frac{1}{\alpha}(\tau^{2(\lambda-1)}h)^{\frac{1-\alpha}{2\alpha}}\tau^{2(\lambda-1)}\del^{\mu}\Psi\del^{\nu}\Psi\nabla_{\mu}\nabla_{\nu}\Psi
+\frac{1}{\alpha}(\lambda-1)(\tau^{2(\lambda-1)}h)^{\frac{1-\alpha}{2\alpha}}\tau^{2\lambda-3}h\nonumber\\
&=&\frac{4-\lambda-2(\lambda-1)\frac{1-\alpha}{2\alpha}}{\tau}\del^{\tau}\Psi[I-(\tau^{2(\lambda-1)}h)^{\frac{1+\alpha}{2\alpha}}].
\end{eqnarray}
We choose $\lambda$ such that
$$4-\lambda-2(\lambda-1)\frac{1-\alpha}{2\alpha}=0,$$
and thus
 $\lambda=3\alpha+1.$

\begin{remark}\label{remark:3.3}
If $\lambda=3\alpha+1$, then $\widetilde{\del}\Psi$ has the same asymptotic behavior as the background solution $\overline{\Psi}(\tau)$ when $\tau\rightarrow0$. This property plays very important roles in the non-degeneracy of above wave equation.
\end{remark}
Define a function $\Theta(\tau,x)$ such that
\begin{equation}\label{4.37}
\widetilde{\del}^{\mu}\Psi-\widetilde{\del}^{\mu}\overline{\Psi}=\tau^{3\alpha+1}\del^{\mu}\Theta.
\end{equation}
Then by \eqref{4.29}
\begin{equation}\label{4.38}
\widetilde{\del}^{\tau}\Psi=\tau^{3\alpha+1}\del^{\tau}\Psi
=\tau^{3\alpha+1}\left(\del^{\tau}\Theta+\frac{(IK)^{\frac{\alpha}{\alpha+1}}w}{(1+K\tau^{3(\alpha+1)})^{\frac{\alpha}{\alpha+1}}}\right)
:=\tau^{3\alpha+1}(\del^{\tau}\Theta+f(\tau)).
\end{equation}
We have
\begin{equation}\label{4.39}
\widetilde{\del}^{i}\Psi=\tau^{3\alpha+1}\del^{i}\Theta.
\end{equation}
Based on \eqref{4.37}-\eqref{4.39}, we have
\begin{eqnarray}\label{4.40}
h&=&-g_{00}(\del^{\tau}\Psi)^{2}-2g_{0i}\del^{\tau}\Psi\del^{i}\Psi -g_{ij}\del^{i}\Psi\del^{j}\Psi\nonumber\\
&=&-g_{00}f^{2}(\tau)-2g_{0\mu}f(\tau)\del^{\mu}\Theta-g_{00}(\del^{\tau}\Theta)^{2}-2g_{0i}\del^{i}\Theta\del^{\tau}\Theta
-g_{ij}\del^{i}\Theta\del^{j}\Theta.
\end{eqnarray}
\begin{remark}\label{remark:3.4}
It is obvious that $(\widetilde{\eta},\widetilde{\del}^{\tau}\overline{\Psi})$ defined in Section \ref{section:2.3} is a solution to \eqref{4.36}, which means that $(\eta,\tau^{3\alpha+1}f(\tau))$ is the solution to \eqref{4.36}.
\end{remark}
Define
\begin{eqnarray*}
B^{00}&=&(I-(\tau^{2(\lambda-1)}h)^{\frac{1+\alpha}{2\alpha}})(g^{00}-\frac{1-\alpha}{\alpha}\frac{\del^{\tau}\Psi\del^{\tau}\Psi}{h})
-\frac{1}{\alpha}(\tau^{2(\lambda-1)}h)^{\frac{1-\alpha}{2\alpha}}\tau^{2(\lambda-1)}\del^{\tau}\Psi\del^{\tau}\Psi,\\
B^{0i}&=&(I-(\tau^{2(\lambda-1)}h)^{\frac{1+\alpha}{2\alpha}})(g^{0i}-\frac{1-\alpha}{\alpha}\frac{\del^{\tau}\Psi\del^{i}\Psi}{h})
-\frac{1}{\alpha}(\tau^{2(\lambda-1)}h)^{\frac{1-\alpha}{2\alpha}}\tau^{2(\lambda-1)}\del^{\tau}\Psi\del^{i}\Psi,\\
B^{jk}&=&(I-(\tau^{2(\lambda-1)}h)^{\frac{1+\alpha}{2\alpha}})(g^{jk}-\frac{1-\alpha}{\alpha}\frac{\del^{j}\Psi\del^{k}\Psi}{h})
-\frac{1}{\alpha}(\tau^{2(\lambda-1)}h)^{\frac{1-\alpha}{2\alpha}}\tau^{2(\lambda-1)}\del^{j}\Psi\del^{k}\Psi
\end{eqnarray*}
and similarly
\begin{eqnarray*}
\overline{B}^{00}&=&(I-(\tau^{2(\lambda-1)}\overline{h})^{\frac{1+\alpha}{2\alpha}})(\eta^{00}-\frac{1-\alpha}{\alpha}\frac{\del^{\tau}\overline{\Psi}\del^{\tau}\overline{\Psi}}{\overline{h}})
-\frac{1}{\alpha}(\tau^{2(\lambda-1)}\overline{h})^{\frac{1-\alpha}{2\alpha}}\tau^{2(\lambda-1)}\del^{\tau}\overline{\Psi}\del^{\tau}\overline{\Psi},\\
\overline{B}^{0i}&=&(I-(\tau^{2(\lambda-1)}\overline{h})^{\frac{1+\alpha}{2\alpha}})(\eta^{0i}-\frac{1-\alpha}{\alpha}\frac{\del^{\tau}\overline{\Psi}\del^{i}\overline{\Psi}}{\overline{h}})
-\frac{1}{\alpha}(\tau^{2(\lambda-1)}\overline{h})^{\frac{1-\alpha}{2\alpha}}\tau^{2(\lambda-1)}\del^{\tau}\overline{\Psi}\del^{i}\overline{\Psi},\\
\overline{B}^{jk}&=&(I-(\tau^{2(\lambda-1)}\overline{h})^{\frac{1+\alpha}{2\alpha}})(\eta^{jk}-\frac{1-\alpha}{\alpha}\frac{\del^{j}\overline{\Psi}\del^{k}\overline{\Psi}}{\overline{h}})
-\frac{1}{\alpha}(\tau^{2(\lambda-1)}\overline{h})^{\frac{1-\alpha}{2\alpha}}\tau^{2(\lambda-1)}\del^{j}\overline{\Psi}\del^{k}\overline{\Psi},
\end{eqnarray*}
where $\overline{h}=-\eta^{00}\del_{\tau}\overline{\Psi}\del_{\tau}\overline{\Psi}$.

Define
\begin{equation}\label{4.41}
\textbf{p}_{\mu}=\del_{\mu}\Theta=g_{0\nu}\del^{\nu}\Theta.
\end{equation}
Expanding \eqref{4.36} in terms of $\Theta$, we have
\begin{eqnarray}\label{4.42}
&&B^{00}\del_{\tau}(\del_{\tau}\Theta)+B^{00}\del_{\tau}(g_{00}f(\tau))+2B^{0i}\del_{i}\del_{\tau}\Theta
+2B^{0i}\del_{i}(g_{00}f(\tau))\nonumber\\
&&+B^{ij}\del_{i}\del_{j}\Theta+B^{ij}\del_{i}(g_{j0}f(\tau))+T=0
\end{eqnarray}
and 
\begin{eqnarray*}
T&=&\left[-\Gamma^{\kappa}(\del_{\kappa}\Theta+g_{\kappa0}f(\tau))+\frac{1-\alpha}{\alpha}
\frac{\del^{\mu}\Psi\del^{\nu}\Psi}{h}\Gamma^{\kappa}_{\mu\nu}(\del_{\kappa}\Theta+g_{\kappa0}f(\tau))\right]
[I-(\tau^{2(\lambda-1)}h)^{\frac{1+\alpha}{2\alpha}}]\nonumber\\
&&+\frac{1}{\alpha}(\tau^{2(\lambda-1)}h)^{\frac{1-\alpha}{2\alpha}}[\tau^{2(\lambda-1)}\del^{\mu}\Psi\del^{\nu}\Psi\Gamma^{\kappa}_{\mu\nu}(\del_{\kappa}\Theta+g_{\kappa0}f(\tau))
+(\lambda-1)\tau^{2\lambda-3}h].
\end{eqnarray*}
From \eqref{4.42}, we get
\begin{equation}\label{4.43}
B^{00}\del_{\tau}\textbf{p}_{0}+2B^{0i}\del_{i}\textbf{p}_{0}+B^{ij}\del_{i}\textbf{p}_{j}=\hat{T}-\hat{\overline{T}},
\end{equation}
where
\begin{eqnarray}\label{4.44}
\hat{T}&=&-T-B^{00}\del_{\tau}(g_{00}f(\tau))-2B^{0i}\del_{i}(g_{00}f(\tau))-B^{ij}\del_{i}(g_{j0}f(\tau))\nonumber\\
&=&\left[\Gamma^{\kappa}(\del_{\kappa}\Theta+g_{\kappa0}f(\tau))-\frac{1-\alpha}{\alpha}
\frac{\del^{\mu}\Psi\del^{\nu}\Psi}{h}\Gamma^{\kappa}_{\mu\nu}(\del_{\kappa}\Theta+g_{\kappa0}f(\tau))\right]
[I-(\tau^{2(\lambda-1)}h)^{\frac{1+\alpha}{2\alpha}}]\nonumber\\
&&-\frac{1}{\alpha}(\tau^{2(\lambda-1)}h)^{\frac{1-\alpha}{2\alpha}}[\tau^{2(\lambda-1)}\del^{\mu}\Psi\del^{\nu}\Psi\Gamma^{\kappa}_{\mu\nu}(\del_{\kappa}\Theta+g_{\kappa0}f(\tau))
+(\lambda-1)\tau^{2\lambda-3}h]\nonumber\\
&&-B^{00}\del_{\tau}(g_{00}f(\tau))-2B^{0i}\del_{i}(g_{00}f(\tau))-B^{ij}\del_{i}(g_{j0}f(\tau)).
\end{eqnarray}
When we use the background metric $\eta$ instead of $g$ and $\overline{\Psi}$ instead of $\Psi$ to calculate $\hat{T}$, we get $\hat{\overline{T}}$. Clearly $\hat{\overline{T}}=0$ since $(\eta,\overline{\Psi})$ is the solution to the fluid equation. From \eqref{4.36}, \eqref{4.43} and \eqref{4.44}, we can easily rewrite the fluid equation into the following symmetric hyperbolic system
\begin{eqnarray}\label{4.45}
B^{\kappa}\del_{\kappa}\left(
\begin{array}{ll}
\textbf{p}_{0}\\
\textbf{p}_{1}\\
\textbf{p}_{2}\\
\textbf{p}_{3}
\end{array}\right)=
\left(\begin{array}{cc}
&\hat{T}-\hat{\overline{T}}\\&0\\&0\\&0
\end{array}
\right),
\end{eqnarray}
where
\begin{align*}
B^{0}=\left(
\begin{array}{cccc}
B^{00}&0&0&0\\
0&-B^{11}&-B^{21}&-B^{31}\\
0&-B^{12}&-B^{22}&-B^{32}\\
0&-B^{13}&-B^{23}&-B^{33}
\end{array}
\right),\quad
B^{i}=\left(
\begin{array}{cccc}\label{4.61}
2B^{0i}&B^{i1}&B^{i2}&B^{i3}\\
B^{i1}&0&0&0\\
B^{i2}&0&0&0\\
B^{i3}&0&0&0
\end{array}
\right).
\end{align*}


\subsection{A key argument for the source terms}\label{section:3.4}

In order to analyze the structure of the symmetric hyperbolic system, in this section, we mainly focus on the source terms of above two subsections, especially the terms
$\hat{M}^{0\mu}$, $\hat{M}^{ij}$, $\acute{M}^{ij}$, $\hat{R}^{\textbf{q}}$ and $\hat{T}-\hat{\overline{T}}$.

We need the following basic lemmas. At first we have the following algebraic relationship between $g^{-1}$ and $g$.
\begin{lemma}\label{lemma:3.5}
Assume $g^{-1}=(g^{\mu\nu})$ is a symmetric $(1+3)\times(1+3)$ Lorentz metric with $g^{00}<0$ and $(g^{ij})$ positive definite, then
\begin{eqnarray}\label{4.46}
g_{00}&=&\frac{1}{g^{00}-d^{2}},\\\label{4.47}
g_{0i}&=&\frac{g_{ij}g^{0j}}{d^{2}-g^{00}},
\end{eqnarray}
where
$d^{2}=g_{ij}g^{0i}g^{0j}$.
\end{lemma}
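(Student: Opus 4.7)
The plan is to derive both identities by treating $(g^{\mu\nu})$ as a $(1+3)\times(1+3)$ block matrix and applying the Schur-complement form of the block-inverse formula. Write
\[
(g^{\mu\nu}) \;=\; \begin{pmatrix} g^{00} & (g^{0j}) \\ (g^{i0}) & (g^{ij}) \end{pmatrix},
\]
where the $3\times 3$ spatial block $D=(g^{ij})$ is positive definite by assumption, so its inverse $\check{g}_{ij}=(g^{ij})^{-1}$ is also positive definite and well defined (this is already the convention introduced earlier in the paper). The symmetry of the metric makes the row and column off-diagonal blocks transposes of each other.

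First, I would compute the Schur complement $S$ of $D$ in $(g^{\mu\nu})$, namely
\[
S \;=\; g^{00}-(g^{0i})\,\check{g}_{ij}\,(g^{j0}) \;=\; g^{00}-d^{2},
\]
using the definition $d^{2}=\check{g}_{ij}g^{0i}g^{0j}$ (with $g_{ij}$ in the lemma statement to be identified with $\check{g}_{ij}$, the inverse of the spatial block, consistent with the usage elsewhere in this section). The block-inverse formula then gives immediately $g_{00}=S^{-1}=1/(g^{00}-d^{2})$, establishing \eqref{4.46}. For the off-diagonal block of the inverse, the same formula yields
\[
g_{0j} \;=\; -S^{-1}\,(g^{0k})\,\check{g}_{kj} \;=\; \frac{\check{g}_{jk}\,g^{0k}}{d^{2}-g^{00}},
\]
which is \eqref{4.47}. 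This is the main (and essentially only) computation.

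To justify that all denominators are nonzero, I would observe that $\check{g}_{ij}$ is positive definite (as the inverse of a positive definite matrix), so $d^{2}\geq 0$; combined with the hypothesis $g^{00}<0$ this gives $g^{00}-d^{2}<0$, hence $d^{2}-g^{00}>0$. The expressions in \eqref{4.46}--\eqref{4.47} are therefore meaningful. The main ``obstacle'' here is purely notational rather than conceptual: one must reconcile the symbol $g_{ij}$ appearing on the right-hand side of \eqref{4.47} with the spatial-block inverse $\check{g}_{ij}$. Once that identification is in place, the entire proof reduces to the standard Schur-complement identity for a block-partitioned matrix, with no analytic ingredient required.
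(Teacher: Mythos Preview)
Your argument is correct: the Schur-complement/block-inverse computation is exactly how these identities are derived, and you have correctly flagged that $g_{ij}$ on the right-hand side must be read as $\check g_{ij}=(g^{ij})^{-1}$, consistent with the paper's own convention in \eqref{4.52}. The paper itself does not prove the lemma but simply cites Lemmas~1 and~2 of Ringstr\"om \cite{R}; your self-contained derivation therefore provides strictly more than the paper does here, and by the standard route.
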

\begin{proof}
The proof can be found in Lemmas 1 and 2 of \cite{R}.
\end{proof}
The following two lemmas will be repeatedly used in this section.
\begin{lemma}\label{lemma:3.6}
Suppose that $a_{i}\;(i=1,\cdots,n)$ and $\overline{a}_{i}\,(i=1,\cdots,n)$ are smooth functions, then we have
\begin{equation}\label{4.48}
\prod_{i=1}^{n}a_{i}-\prod_{i=1}^{n}\overline{a}_{i}=\sum_{j=1}^{n}F^{j}(\overline{a}_{i},a_{i}-\overline{a}_{i}),
\end{equation}
here
$$
F^{j}(a_{i},a_{i}-\overline{a}_{i})=\prod_{k=1}^{j}\prod_{l=j+1}^{n}(a_{i_{k}}-\overline{a}_{i_{k}})\overline{a}_{i_{l}},
$$
where $i_{k}\in({1,\cdots,n })$.
\end{lemma}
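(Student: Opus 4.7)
The proof is the standard telescoping identity, for which I would use either induction on $n$ or a direct ``one-swap-at-a-time'' rewriting. The direct approach is cleaner, so I would present that.

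First I would introduce the hybrid products
\[
P_j = \Bigl(\prod_{i=1}^{j-1} a_i\Bigr)\Bigl(\prod_{i=j}^{n}\overline{a}_i\Bigr), \qquad j=1,\dots,n+1,
\]
with the usual convention that an empty product equals $1$, so that $P_1 = \prod_i \overline{a}_i$ and $P_{n+1}=\prod_i a_i$. Then I would write the telescoping identity
\[
\prod_{i=1}^{n} a_i - \prod_{i=1}^{n}\overline{a}_i \;=\; P_{n+1}-P_1 \;=\; \sum_{j=1}^{n}(P_{j+1}-P_j),
\]
and observe that each summand factors as
\[
P_{j+1}-P_j \;=\; \Bigl(\prod_{i=1}^{j-1} a_i\Bigr)\,(a_j-\overline{a}_j)\,\Bigl(\prod_{i=j+1}^{n}\overline{a}_i\Bigr),
\]
because $P_{j+1}$ and $P_j$ differ only in their $j$th factor ($a_j$ versus $\overline{a}_j$), while the factors with indices $<j$ and $>j$ coincide. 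This is exactly $F^{j}(\overline{a}_i, a_i-\overline{a}_i)$ in the notation of the lemma (up to the authors' index convention), so summing over $j$ gives the claimed formula.

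The only thing to be careful about is matching the authors' definition of $F^{j}$ to the telescoping summand, since their displayed formula mixes $a_{i_k}$ and $\overline{a}_{i_l}$ and looks slightly nonstandard. I would therefore state the identity in the form above and then simply read off that it agrees term-by-term with their $F^{j}$, so no other content is needed. There is no real analytic obstacle here; the lemma is a purely algebraic identity with a one-line proof, and the reason it is recorded as a lemma is that it will be applied many times in Section~3.4 to bound differences like $\widetilde h^{s} - \overline{\widetilde h}^{s}$ and $(I-\widetilde h^{(\alpha+1)/2\alpha})^{s} - (I-\overline{\widetilde h}^{(\alpha+1)/2\alpha})^{s}$ by extracting a factor of $\widetilde h - \overline{\widetilde h}$ (equivalently $\del\Psi-\del\overline\Psi$) out of each source term in $\hat M^{\mu\nu}$, $\acute M^{ij}$, $\hat R^{\mathbf q}$, and $\hat T - \hat{\overline T}$.
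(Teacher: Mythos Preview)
Your proof is correct, but the decomposition you obtain is not literally the one the paper records. The paper argues by induction on $n$: assuming the identity for $n-1$ factors, it writes
\[
\prod_{i=1}^{n}a_{i}-\prod_{i=1}^{n}\overline{a}_{i}
=(a_{n}-\overline{a}_{n})\Bigl(\prod_{i=1}^{n-1}a_{i}-\prod_{i=1}^{n-1}\overline{a}_{i}\Bigr)
+\overline{a}_{n}\Bigl(\prod_{i=1}^{n-1}a_{i}-\prod_{i=1}^{n-1}\overline{a}_{i}\Bigr)
+(a_{n}-\overline{a}_{n})\prod_{i=1}^{n-1}\overline{a}_{i}
\]
and inserts the inductive hypothesis into the first two brackets. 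The effect is the full multinomial expansion of $\prod(\overline{a}_i+(a_i-\overline{a}_i))-\prod\overline{a}_i$, with $F^{j}$ collecting all terms carrying exactly $j$ difference factors $(a_{i}-\overline{a}_{i})$ and $n-j$ factors $\overline{a}_{i}$. Your telescoping sum instead produces $n$ summands, each with a \emph{single} difference factor $(a_j-\overline{a}_j)$ but with genuine $a_i$'s (not $\overline{a}_i$'s) for $i<j$; so your $j$th summand is not the paper's $F^{j}$, despite your remark that they agree ``up to index convention''. Of course one passes from your form to theirs by further expanding each $a_i=\overline{a}_i+(a_i-\overline{a}_i)$ for $i<j$. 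For the uses in Section~3.4 only the fact that every summand contains at least one difference factor matters, so your telescoping version is a legitimate and somewhat shorter substitute; just do not claim it matches the paper's $F^{j}$ term-by-term.
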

\begin{proof}
We can get this result by induction. At first, for $i=1$, it holds obviously. Assume that \eqref{4.48} holds for $i=n-1$, namely
\begin{equation*}
\prod_{i=1}^{n-1}a_{i}-\prod_{i=1}^{n-1}\overline{a}_{i}=\sum_{j=1}^{n-1}F^{j}(\overline{a}_{i},a_{i}-\overline{a}_{i}),
\end{equation*}
then for $i=n$, we have
\begin{eqnarray*}
\prod_{i=1}^{n}a_{i}-\prod_{i=1}^{n}\overline{a}_{i}&=&a_{n}\prod_{i=1}^{n-1}a_{i}-\overline{a}_{n}\prod_{i=1}^{n-1}\overline{a}_{i}\\
&=&(a_{n}-\overline{a}_{n})(\prod_{i=1}^{n-1}a_{i}-\prod_{i=1}^{n-1}\overline{a}_{i})+\overline{a}_{n}(\prod_{i=1}^{n-1}a_{i}-\prod_{i=1}^{n-1}\overline{a}_{i})
+
(a^{n}-\overline{a}_{n})\prod_{i=1}^{n-1}\overline{a}_{i}\\
&=&(a_{n}-\overline{a}_{n})(\sum_{j=1}^{n-1}F^{j}(\overline{a}_{i},a_{i}-\overline{a}_{i}))+\overline{a}_{n}(\sum_{j=1}^{n-1}F^{j}(\overline{a}_{i},a_{i}-\overline{a}_{i}))
+(a^{n}-\overline{a}_{n})\prod_{i=1}^{n-1}\overline{a}_{i}\\
&=&\sum_{j=1}^{n}F^{j}(\overline{a}_{i},a_{i}-\overline{a}_{i}).
\end{eqnarray*}
\end{proof}

The proof of the following result is straighforward

\begin{lemma}\label{lemma:3.7}
Let $f(x)$ be analytical in the neighborhood of a point $\overline{x}$, and assume that $f^{'}(\overline{x})\neq0$, then there exists a small parameter $\delta$, such that when $x\in[\overline{x}-\delta,\overline{x}+\delta]$, we have
\begin{equation}\label{4.49}
f(x)-f(\overline{x})\sim f^{'}(\overline{x})(x-\overline{x}).
\end{equation}
\end{lemma}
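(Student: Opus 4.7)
The plan is to appeal directly to the convergent Taylor expansion around $\overline{x}$, which is available because $f$ is assumed analytic in a neighborhood of $\overline{x}$. Writing
\begin{equation*}
f(x) = f(\overline{x}) + f'(\overline{x})(x-\overline{x}) + \sum_{n=2}^{\infty} \frac{f^{(n)}(\overline{x})}{n!}(x-\overline{x})^{n},
\end{equation*}
we obtain the decomposition $f(x) - f(\overline{x}) = f'(\overline{x})(x-\overline{x}) + R(x)$, where the remainder $R(x)$ satisfies $|R(x)| \leq C_{0}(x-\overline{x})^{2}$ for $|x-\overline{x}|$ within the radius of convergence, with $C_{0}$ a constant depending only on $f$ and $\overline{x}$.

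Next I would form the ratio
\begin{equation*}
\frac{f(x) - f(\overline{x})}{f'(\overline{x})(x-\overline{x})} = 1 + \frac{R(x)}{f'(\overline{x})(x-\overline{x})},
\end{equation*}
which is well defined since $f'(\overline{x}) \neq 0$ and we work on $[\overline{x}-\delta,\overline{x}+\delta]\setminus\{\overline{x}\}$. The quadratic bound on $R$ yields
\begin{equation*}
\left|\frac{R(x)}{f'(\overline{x})(x-\overline{x})}\right| \leq \frac{C_{0}}{|f'(\overline{x})|}|x - \overline{x}|.
\end{equation*}

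Choosing $\delta > 0$ so that $\frac{C_{0}}{|f'(\overline{x})|}\delta \leq \frac{1}{2}$, the ratio stays in $[\tfrac{1}{2}, \tfrac{3}{2}]$ for all $x \in [\overline{x}-\delta,\overline{x}+\delta]$, $x \neq \overline{x}$, and the case $x=\overline{x}$ is trivial. Taking $C = 2$ in the definition of the $\sim$ relation then establishes the desired equivalence $f(x) - f(\overline{x}) \sim f'(\overline{x})(x-\overline{x})$.

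There is no real obstacle here: the statement is a standard linearization lemma and the argument is a direct consequence of analyticity together with $f'(\overline{x})\neq 0$. The only subtlety is bookkeeping the constants so as to produce an explicit $\delta$ (depending on $\overline{x}$, $f'(\overline{x})$ and the second-order Taylor coefficient of $f$ at $\overline{x}$), which is straightforward.
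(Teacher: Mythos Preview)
Your argument is correct and is precisely the standard Taylor-expansion computation the paper has in mind; in fact the paper omits the proof entirely, stating only that it is ``straightforward,'' so your write-up is more detailed than what appears there.
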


Based on Lemmas \ref{lemma:3.5}-\ref{lemma:3.7}, we have the following important estimates in terms of the unknowns.

\begin{lemma}\label{lemma:3.8}
In terms of the unknowns $(\textbf{u}^{0\mu},\textbf{u}^{0\mu}_{\nu},\textbf{u}^{ij},\textbf{u}^{ij}_{\mu},\textbf{q},\textbf{q}_{\mu},\textbf{p}_{\mu})$, we have
\begin{eqnarray*}
g^{ij}-\eta^{ij}&\sim&\textbf{u}^{ij}+\frac{\eta^{ij}}{w^{2}}(2\tau \textbf{u}^{00}-\textbf{q}),\\
\del_{\lambda}(g^{ij}-\eta^{ij})&\sim& \textbf{u}_{\lambda}^{ij}
+\frac{2\dot{w}\eta^{ij}\delta_{\lambda}^{0}}{\tau w^{4}}(2\tau\textbf{u}^{00}-\textbf{q})\\
&&+\frac{\eta^{ij}}{w^{2}}(2\textbf{u}^{00}\delta_{\lambda}^{0}+2\tau \textbf{u}^{00}_{i}\delta_{\lambda}^{i}+(\textbf{u}_{0}^{00}+\textbf{u}^{00})\delta^{0}_{\lambda}-\textbf{q}_{\lambda}),\\
g_{ij}-\eta_{ij}&\sim&\sum_{k=1}^{3}F^{k}(\eta^{ij},g^{ij}-\eta^{ij})+(\eta^{ij})^{*}\frac{3(2\tau \textbf{u}^{00}-\textbf{q})}{w^{2}},\\
g_{00}-\eta_{00}&\sim&\frac{-2\tau \textbf{u}^{00}+\sum_{k=1}^{3}F^{k}(\eta,\eta^{-1},(g_{ij}-\eta_{ij}),(g^{0i}-\eta^{0i}))}{(\eta^{00})^{2}},\\
g_{0i}-\eta_{0i}&\sim& \frac{\sum_{k=1}^{3}F^{k}(\eta,\eta^{-1},(g_{ij}-\eta_{ij}),(g^{0i}-\eta^{0i}),(g^{00}-\eta^{00}))}{(\eta^{00})^{2}},\\
\del^{\tau}\Psi-\del^{\tau}\overline{\Psi}&=&2\tau \textbf{u}^{0\nu}\textbf{p}_{\nu}+\eta^{0\nu}\textbf{p}_{\nu},\\
\del^{i}\Psi-\del^{i}\overline{\Psi}&\sim& (g^{i\nu}-\eta^{i\nu})\textbf{p}_{\nu}+\eta^{i\nu}\textbf{p}_{\nu},\\
h-\overline{h}&=&\sum_{k=1}^{3}F^{k}(\eta,\del^{\mu}\overline{\Psi},(g_{\mu\nu}-\eta_{\mu\nu}),(\del^{\mu}\Psi-\del^{\mu}\overline{\Psi})),
\end{eqnarray*}
provided that $\|(\textbf{u}^{0\mu},\textbf{u}^{0\mu}_{\nu},\textbf{u}^{ij},\textbf{u}^{ij}_{\mu},\textbf{q},\textbf{q}_{\mu},\textbf{p}_{\mu})\|_{L^{\infty}(\mathbb T^{3})}$ is sufficiently small. In above, $(\eta^{ij})^{*}$ denotes the cofactor of $\eta^{ij}$ of the positive definite matrix $(\eta^{ij})$.
\end{lemma}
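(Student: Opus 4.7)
The plan is to treat each of the seven stated identities by direct algebraic manipulation of the defining relations \eqref{4.9}--\eqref{4.18}, using Lemmas~\ref{lemma:3.5}--\ref{lemma:3.7} to linearize around the FRW background $(\eta,\overline{\Psi})$. The smallness hypothesis on $\|(\mathbf{u}^{0\mu},\ldots,\mathbf{p}_\mu)\|_{L^\infty}$ ensures every Taylor expansion and every inverse that appears below is well-defined with remainder of the correct higher order, so that each relation $\sim$ is an equality modulo terms absorbable into the $F^k$ sums or into the equivalence constant.

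First I would treat $g^{ij}-\eta^{ij}$ and its spacetime derivatives. Solving the definition \eqref{4.10} for the logarithm, and combining with $g^{00}-\eta^{00}=2\tau\mathbf{u}^{00}$, gives $\det(g^{pq})^{1/3}=\exp\bigl((2\tau\mathbf{u}^{00}-\mathbf{q})/w^2\bigr)$, so that $g^{ij}=\det(g^{pq})^{1/3}\mathbf{g}^{ij}=\exp\bigl((2\tau\mathbf{u}^{00}-\mathbf{q})/w^2\bigr)(\delta^{ij}+\mathbf{u}^{ij})$. Since $\eta^{ij}=\delta^{ij}$, a first-order Taylor expansion (Lemma~\ref{lemma:3.7} applied to $e^{x}$ at $x=0$) yields the claimed formula for $g^{ij}-\eta^{ij}$. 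Differentiating this product representation, and using the elementary identities $\partial_\tau(1/w^{2})=2\dot{w}/(\tau w^{4})$, $\partial_\tau(g^{00}-\eta^{00})=\mathbf{u}_{0}^{00}+3\mathbf{u}^{00}$, and $\partial_{i}(g^{00}-\eta^{00})=\mathbf{u}_{i}^{00}$ that are implicit in \eqref{4.12}--\eqref{4.14}, produces the derivative identity; the remaining contribution $\partial_\lambda\mathbf{u}^{ij}=\mathbf{u}^{ij}_{\lambda}$ comes directly from \eqref{4.16}.

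Next I would turn to the lower-index identities via Lemma~\ref{lemma:3.5}. Since the background satisfies $\eta^{0i}=0$ so that $\overline{d}^{\,2}=0$ and $\eta_{00}=1/\eta^{00}$, $\eta_{0i}=0$, one obtains
\[
g_{00}-\eta_{00}=\frac{-(g^{00}-\eta^{00})+d^{2}}{(g^{00}-d^{2})\eta^{00}},
\]
whose denominator is $(\eta^{00})^{2}$ to leading order; writing $-(g^{00}-\eta^{00})=-2\tau\mathbf{u}^{00}$ and expanding the triple product $d^{2}=\check{g}_{ij}g^{0i}g^{0j}$ around its vanishing background via Lemma~\ref{lemma:3.6} gives the stated formula, and the case of $g_{0i}-\eta_{0i}=g_{0i}$ is entirely analogous. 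For $g_{ij}-\eta_{ij}$ one first notes that $g_{ij}$ coincides with $\check{g}_{ij}=\mathrm{cof}(g^{ij})/\det(g^{pq})$ up to a term quadratic in $g^{0k}$ (absorbable into $\sum_{k}F^{k}$), then expands the cofactor polynomial around its background by Lemma~\ref{lemma:3.6} and linearizes the inverse-determinant factor using Step~1; the linearization of $\det(g^{pq})$ is precisely what produces the isolated contribution $(\eta^{ij})^{*}\cdot 3(2\tau\mathbf{u}^{00}-\mathbf{q})/w^{2}$.

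Finally the $\Psi$-identities are immediate: since $\overline{\Theta}\equiv 0$, \eqref{4.37}--\eqref{4.39} give $\partial^{\mu}\Psi-\partial^{\mu}\overline{\Psi}=\partial^{\mu}\Theta=g^{\mu\nu}\mathbf{p}_\nu$, and decomposing $g^{\mu\nu}=\eta^{\mu\nu}+(g^{\mu\nu}-\eta^{\mu\nu})$ together with $g^{0\nu}-\eta^{0\nu}=2\tau\mathbf{u}^{0\nu}$ yields the two stated formulas; for $h-\overline{h}$ one expands the triple product $h=-g_{\mu\nu}\partial^{\mu}\Psi\partial^{\nu}\Psi$ directly via Lemma~\ref{lemma:3.6} with factors $(g_{\mu\nu},\partial^{\mu}\Psi,\partial^{\nu}\Psi)$ around $(\eta_{\mu\nu},\partial^{\mu}\overline{\Psi},\partial^{\nu}\overline{\Psi})$. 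The main obstacle is the $g_{ij}$ identity of Step~3: one must simultaneously linearize the block-matrix inversion formula and the cofactor-determinant expansion while isolating cleanly the combination $2\tau\mathbf{u}^{00}-\mathbf{q}$ (which enters only through the overall determinant factor identified in Step~1) from the genuine tensorial products of $g^{ij}-\eta^{ij}$ that populate the $\sum F^{k}$ sum.
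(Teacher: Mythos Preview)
Your proposal is correct and follows essentially the same route as the paper: you recover the exponential representation $g^{ij}=e^{(2\tau\mathbf{u}^{00}-\mathbf{q})/w^{2}}\mathbf{g}^{ij}$ from \eqref{4.9}--\eqref{4.10}, differentiate it, invoke Lemma~\ref{lemma:3.5} for the $g_{0\mu}$ components, and expand the remaining products via Lemma~\ref{lemma:3.6}, exactly as the paper does. Your treatment of $g_{ij}-\eta_{ij}$ is in fact slightly more careful than the paper's, since you explicitly note that the spatial block $g_{ij}$ of the $4\times4$ inverse differs from $\check{g}_{ij}=(g^{ij})^{-1}$ by terms quadratic in $g^{0k}$ (which are absorbable into the $F^{k}$ sum), whereas the paper's equation \eqref{4.52} tacitly identifies the two.
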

\begin{proof}
At first, from the definition of $\textbf{g}^{ij}$ and \textbf{q}, i.e. \eqref{4.9}-\eqref{4.10}, we have
\begin{equation}\label{4.50}
det({g^{ij}})=e^{\frac{3(2\tau \textbf{u}^{00}-\textbf{q})}{w^{2}}},
\end{equation}
and
\begin{equation}\label{4.51}
g^{ij}=e^{\frac{2\tau \textbf{u}^{00}-\textbf{q}}{w^{2}}}\textbf{g}^{ij}.
\end{equation}
Then we have by \eqref{4.51}
\begin{eqnarray*}
g^{ij}-\eta^{ij}&=&e^{\frac{2\tau \textbf{u}^{00}-\textbf{q}}{w^{2}}}\textbf{g}^{ij}-\eta^{ij}\\
&=&e^{\frac{2\tau \textbf{u}^{00}-\textbf{q}}{w^{2}}}(\textbf{g}^{ij}-\eta^{ij})+\eta^{ij}(e^{\frac{2\tau \textbf{u}^{00}-\textbf{q}}{w^{2}}}-1)\\
&\sim&\textbf{u}^{ij}+\eta^{ij}\frac{2\tau \textbf{u}^{00}-\textbf{q}}{w^{2}}.
\end{eqnarray*}
And we have
\begin{eqnarray*}
\del_{\lambda}(g^{ij}-\eta^{ij})&=&\del_{\lambda}(e^{\frac{2\tau \textbf{u}^{00}-\textbf{q}}{w^{2}}}\textbf{g}^{ij}-\eta^{ij})\\
&\sim& \textbf{u}_{\lambda}^{ij}+\frac{2\dot{w}\eta^{ij}\delta_{\lambda}^{0}}{\tau w^{4}}(2\tau\textbf{u}^{00}-\textbf{q})+\frac{\eta^{ij}}{w^{2}}(2\textbf{u}^{00}\delta_{\lambda}^{0}+2\tau\del_{\lambda}\textbf{u}^{00}-\textbf{q}_{\lambda}).
\end{eqnarray*}
Since
$
\del_{\tau}\textbf{u}^{00}=\frac{\textbf{u}_{0}^{00}+\textbf{u}^{00}}{2\tau}
$
and
$
\del_{i}\textbf{u}^{00}=\textbf{u}^{00}_{i}
$
we obtain
\begin{eqnarray*}
\del_{\lambda}(g^{ij}-\eta^{ij})
&\sim& \textbf{u}_{\lambda}^{ij}
+\frac{2\dot{w}\eta^{ij}\delta_{\lambda}^{0}}{\tau w^{4}}(2\tau\textbf{u}^{00}-\textbf{q})\\
&&+\frac{\eta^{ij}}{w^{2}}(2\textbf{u}^{00}\delta_{\lambda}^{0}
+2\tau \textbf{u}^{00}_{i}\delta_{\lambda}^{i}+(\textbf{u}_{0}^{00}+\textbf{u}^{00})\delta^{0}_{\lambda}-\textbf{q}_{\lambda}).
\end{eqnarray*}
According to \eqref{4.50} and \eqref{4.51}, we have
\begin{equation}\label{4.52}
g_{ij}=\frac{(g^{ij})^{*}}{det(g^{ij})}=\frac{(g^{ij})^{*}}{e^{\frac{3(2\tau \textbf{u}^{00}-\textbf{q})}{w^{2}}}}.
\end{equation}
From \eqref{4.52}, we easily see
\begin{eqnarray*}
g_{ij}-\eta_{ij}&=&\frac{(g^{ij})^{*}}{e^{\frac{3(2\tau \textbf{u}^{00}-\textbf{q})}{w^{2}}}}-(\eta^{ij})^{*}\\
&=&\frac{(g^{ij})^{*}-(\eta^{ij})^{*}+(\eta^{ij})^{*}(1-e^{\frac{3(2\tau \textbf{u}^{00}-\textbf{q})}{w^{2}}})}{e^{\frac{3(2\tau \textbf{u}^{00}-\textbf{q})}{w^{2}}}}\\
&\sim& \sum_{k=1}^{2}F^{k}(\eta,(g^{ij}-\eta^{ij}))+(\eta^{ij})^{*}\left(\frac{3(2\tau \textbf{u}^{00}-q)}{w^{2}}\right).
\end{eqnarray*}
In the second equality, we have used Lemma \ref{lemma:3.5}, since $(g^{ij})^{*}=(g^{-1})\times(g^{-1})$.

Before we estimating $g_{0\mu}-\eta_{0\mu}$, we have to estimate $d^{2}$, obviously, $\overline{d}^{2}=0$, thus
\begin{equation}\label{4.53}
d^{2}-\overline{d}^{2}=g_{ij}g^{0i}g^{0j}-\eta_{ij}\eta^{0i}\eta^{0j}=\sum_{k=1}^{3}F^{k}(\eta,\eta^{-1},(g_{ij}-\eta_{ij}),(g^{0i}-\eta^{0i})).
\end{equation}
Utilizing \eqref{4.53} and Lemma \ref{lemma:3.5}, we have
\begin{eqnarray*}
g_{00}-\eta_{00}&=&\frac{1}{g^{00}-d^{2}}-\frac{1}{\eta^{00}-\overline{d}^{2}}\\
&=&\frac{\eta^{00}-g^{00}-\overline{d}^{2}+d^{2}}{(g^{00}-d^{2})(\eta^{00}-\overline{d}^{2})}\\
&\sim&\frac{-2\tau \textbf{u}^{00}+\sum_{k=1}^{3}F^{k}(\eta,\eta^{-1},(g_{ij}-\eta_{ij}),(g^{0i}-\eta^{0i}))}{(g^{00}-\eta^{00}+\eta^{00}-d^{2})(\eta^{00}-\overline{d}^{2})}\\
&\sim&\frac{-2\tau \textbf{u}^{00}+\sum_{k=1}^{3}F^{k}(\eta,\eta^{-1},(g_{ij}-\eta_{ij}),(g^{0i}-\eta^{0i}))}{(\eta^{00})^{2}}
\end{eqnarray*}
and
\begin{eqnarray*}
g_{0i}-\eta_{0i}&=&\frac{g_{ij}g^{0j}}{d^{2}-g^{00}}-\frac{\eta_{ij}\eta^{0j}}{\overline{d}^{2}-\eta^{00}}\\
&=&\frac{g^{00}\eta_{ij}\eta^{0j}-\eta^{00}g_{ij}g^{0j}+g_{ij}g^{0i}\overline{d}^{2}-\eta_{ij}\eta^{0i}d^{2}}{(d^{2}-g^{00})(\overline{d}^{2}-\eta^{00})}\\
&\sim&\frac{\sum_{k=1}^{3}F^{k}(\eta,\eta^{-1},(g^{00}-\eta^{00}),(g_{ij}-\eta_{ij}),(g^{0j}-\eta^{0j}))}{(\eta^{00})^{2}}.
\end{eqnarray*}
For the fluid variables, we have the following
\begin{equation*}
\del^{\tau}\Psi-\del^{\tau}\overline{\Psi}=\del^{\tau}\Theta=(g^{0\nu}-\eta^{0\nu})\del_{\nu}\Theta+\eta^{0\nu}\del_{\nu}\Theta
=2\tau \textbf{u}^{00}\textbf{p}_{\nu}+\eta^{0\nu}\textbf{p}_{\nu}.
\end{equation*}
\begin{equation*}
\del^{i}\Psi-\del^{i}\overline{\Psi}=(g^{i\nu}-\eta^{i\nu})\del_{\nu}\Theta+\eta^{i\nu}\del_{\nu}\Theta
=(g^{i\nu}-\eta^{i\nu})\textbf{p}_{\nu}+\eta^{i\nu}\textbf{p}_{\nu},
\end{equation*}
and
\begin{eqnarray*}
h-\overline{h}=-g_{\mu\nu}\del^{\mu}\Psi\del^{\nu}\Psi+\eta_{\mu\nu}\del^{\mu}\overline{\Psi}\del^{\nu}\overline{\Psi}
=\sum_{k=1}^{3}F^{k}(\eta,\del^{\mu}\overline{\Psi},(g_{\mu\nu}-\eta_{\mu\nu}),(\del^{\mu}\Psi-\del^{\mu}\overline{\Psi})).
\end{eqnarray*}
\end{proof}
\begin{remark}\label{remark:3.9}
Above lemma shows that when $\|(\textbf{u}^{0\mu},\textbf{u}^{0\mu}_{\nu},\textbf{u}^{ij},\textbf{u}^{ij}_{\mu},\textbf{q},\textbf{q}_{\mu},\textbf{p}_{\mu})\|_{L^{\infty}(\mathbb T^{3})}$ is sufficiently small, the differences between the unknowns and the background solutions are small and depend only on the unknowns and $\tau$. Thus, for convenience of analysis, we denote the differences by
$$H(\tau,\textbf{u}):=\{f(\tau,\textbf{u})|f(\tau,0)=0\}$$
with $\textbf{u}=(\textbf{u}^{0\mu},\textbf{u}^{0\mu}_{\nu},\textbf{u}^{ij},\textbf{u}^{ij}_{\mu},\textbf{q},\textbf{q}_{\mu},\textbf{p}_{\mu})$ and $f(\tau,\textbf{u})$ denote smooth functions, which are regular with $\tau\in[0,1]$.
\end{remark}
Now, we turn to estimate the source terms, from the representation of the source terms of the Einstein equations, the most difficult part is to estimate the following two terms:
\begin{equation}\label{4.54}
\frac{I-\frac{1}{2}(\widetilde{h})^{\frac{\alpha+1}{2\alpha}}}{(I-(\widetilde{h})^{\frac{\alpha+1}{2\alpha}})^{\frac{1}{\alpha+1}}}
-\frac{I-\frac{1}{2}(\overline{\widetilde{h}})^{\frac{\alpha+1}{2\alpha}}}{(I-(\overline{\widetilde{h}})^{\frac{\alpha+1}{2\alpha}})^{\frac{1}{\alpha+1}}}
\end{equation}
and
\begin{equation}\label{4.55}
\frac{(\widetilde{h})^{\frac{1-\alpha}{2\alpha}}\widetilde{\del}^{\mu}\Psi\widetilde{\del}^{\mu}\Psi}{(I-(\widetilde{h})^{\frac{\alpha+1}{2\alpha}})^{\frac{1}{\alpha+1}}}
-\frac{(\overline{\widetilde{h}})^{\frac{1-\alpha}{2\alpha}}\widetilde{\del}^{\mu}\overline{\Psi}\widetilde{\del}^{\mu}\overline{\Psi}}{(I-(\overline{\widetilde{h}})^{\frac{\alpha+1}{2\alpha}})^{\frac{1}{\alpha+1}}}.
\end{equation}
Based on Lemmas \ref{lemma:3.5}-\ref{lemma:3.8}, we estimate \eqref{4.54}-\eqref{4.55} as follows.

\begin{lemma}\label{lemma:3.10}
Assume that $\|\textbf{u}\|_{L^{\infty}(\mathbb T^{3})}$ is sufficiently small, we have
\begin{eqnarray*}
\frac{I-\frac{1}{2}(\widetilde{h})^{\frac{\alpha+1}{2\alpha}}}{(I-(\widetilde{h})^{\frac{\alpha+1}{2\alpha}})^{\frac{1}{\alpha+1}}}
-\frac{I-\frac{1}{2}(\overline{\widetilde{h}})^{\frac{\alpha+1}{2\alpha}}}{(I-(\overline{\widetilde{h}})^{\frac{\alpha+1}{2\alpha}})^{\frac{1}{\alpha+1}}}
\sim \tau^{3(\alpha+1)}H(\tau,\textbf{u})
\end{eqnarray*}
and
\begin{eqnarray*}
\frac{(\widetilde{h})^{\frac{1-\alpha}{2\alpha}}\widetilde{\del}^{\mu}\Psi\widetilde{\del}^{\mu}\Psi}{(I-(\widetilde{h})^{\frac{\alpha+1}{2\alpha}})^{\frac{1}{\alpha+1}}}
-\frac{(\overline{\widetilde{h}})^{\frac{1-\alpha}{2\alpha}}\widetilde{\del}^{\mu}\overline{\Psi}\widetilde{\del}^{\mu}\overline{\Psi}}{(I-(\overline{\widetilde{h}})^{\frac{\alpha+1}{2\alpha}})^{\frac{1}{\alpha+1}}}
\sim(\tau^{3\alpha+5}+\tau^{6\alpha+8})H(\tau,\textbf{u}).
\end{eqnarray*}
\end{lemma}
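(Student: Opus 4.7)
\textbf{Proposal for Lemma \ref{lemma:3.10}.}
The plan is to extract explicit $\tau$-power prefactors by exploiting the specific scaling built into the choice $\lambda=3\alpha+1$, and then reduce the remaining non-rescaled differences to quantities controlled by Lemmas \ref{lemma:3.6}--\ref{lemma:3.8}. To begin, set
\[
\xi:=\widetilde{h}^{(\alpha+1)/(2\alpha)},\qquad \overline{\xi}:=\overline{\widetilde{h}}^{(\alpha+1)/(2\alpha)}.
\]
From \eqref{4.30} and $\lambda=3\alpha+1$ one has $\widetilde{h}=\tau^{6\alpha}h$ and $\overline{\widetilde{h}}=\tau^{6\alpha}\overline{h}$, so $\xi=\tau^{3(\alpha+1)}h^{(\alpha+1)/(2\alpha)}$ and analogously for $\overline{\xi}$. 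By Lemma \ref{lemma:2.2} the background quantity $\overline{\xi}=IK\tau^{3(\alpha+1)}/(1+K\tau^{3(\alpha+1)})$ stays in a compact subset of $[0,I)$ and $\overline{h}$ is bounded above and below by positive constants on $[0,1]$; for $\|\mathbf{u}\|_{L^{\infty}}$ small the same holds for $h$ and $\xi$ by Lemma \ref{lemma:3.8}. Applying Lemma \ref{lemma:3.7} to $y\mapsto y^{(\alpha+1)/(2\alpha)}$ at the base point $\overline{h}>0$ and using $h-\overline{h}\in H(\tau,\mathbf{u})$ from Lemma \ref{lemma:3.8}, one obtains
\[
\xi-\overline{\xi}=\tau^{3(\alpha+1)}\bigl(h^{(\alpha+1)/(2\alpha)}-\overline{h}^{(\alpha+1)/(2\alpha)}\bigr)=\tau^{3(\alpha+1)}H(\tau,\mathbf{u}).
\]

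For the first estimate, introduce $F(\xi):=(I-\xi/2)(I-\xi)^{-1/(\alpha+1)}$, which is smooth on the compact neighbourhood of $[0,I/2]$ in which $\xi,\overline{\xi}$ lie. The fundamental theorem of calculus gives
\[
F(\xi)-F(\overline{\xi})=(\xi-\overline{\xi})\int_{0}^{1}F'\bigl(\overline{\xi}+t(\xi-\overline{\xi})\bigr)\,dt,
\]
and the integral is a bounded smooth function of $(\tau,\mathbf{u})$; combined with the displayed identity for $\xi-\overline{\xi}$ this yields $\tau^{3(\alpha+1)}H(\tau,\mathbf{u})$, as claimed.

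For the second estimate, I first factor out the overall $\tau$-prefactor. From $\widetilde{\partial}^\mu\Psi=\tau^{3\alpha+1}\partial^\mu\Psi$ and $\widetilde{h}^{(1-\alpha)/(2\alpha)}=\tau^{3(1-\alpha)}h^{(1-\alpha)/(2\alpha)}$, the numerator in each ratio carries a factor $\tau^{3\alpha+5}$. Writing the two cleaned ratios as $N_i/D_i$ with $N_i=h_i^{(1-\alpha)/(2\alpha)}\partial^\mu\Psi_i\partial^\nu\Psi_i$ and $D_i=(I-\xi_i)^{1/(\alpha+1)}$ (where $i=1$ denotes the perturbed fields and $i=2$ the background), I use the identity
\[
\frac{N_{1}}{D_{1}}-\frac{N_{2}}{D_{2}}=\frac{(N_{1}-N_{2})D_{2}+N_{2}(D_{2}-D_{1})}{D_{1}D_{2}}.
\]
Splitting $N_1-N_2$ into a telescoping sum via Lemma \ref{lemma:3.6} and expressing each basic difference ($h^{(1-\alpha)/(2\alpha)}-\overline{h}^{(1-\alpha)/(2\alpha)}$ via Lemma \ref{lemma:3.7}, and $\partial^\mu\Psi-\partial^\mu\overline{\Psi}$ via Lemma \ref{lemma:3.8}) in terms of $H(\tau,\mathbf{u})$ shows $N_1-N_2\in H(\tau,\mathbf{u})$; this piece contributes $\tau^{3\alpha+5}H(\tau,\mathbf{u})$. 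The difference $D_{2}-D_{1}$ is controlled by the same FTOC argument as above applied to $\xi\mapsto(I-\xi)^{1/(\alpha+1)}$, giving $D_{2}-D_{1}=\tau^{3(\alpha+1)}H(\tau,\mathbf{u})$; since $N_{2}$ is independent of $\mathbf{u}$ and bounded and $D_{1}D_{2}$ is bounded away from zero uniformly on $[0,1]$, this piece contributes $\tau^{3\alpha+5}\cdot\tau^{3(\alpha+1)}H=\tau^{6\alpha+8}H(\tau,\mathbf{u})$. Summing the two contributions yields the claimed $(\tau^{3\alpha+5}+\tau^{6\alpha+8})H(\tau,\mathbf{u})$.

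The main obstacle is bookkeeping within the product expansion furnished by Lemma \ref{lemma:3.6}: one has to organise the telescoping so that every elementary difference that appears can indeed be absorbed into $H(\tau,\mathbf{u})$, while tracking which remaining factors are bounded uniformly and which contribute additional powers of $\tau$. In particular the two qualitatively different sources of an extra $\tau^{3(\alpha+1)}$--improvement must be distinguished, namely the denominator perturbation $D_{2}-D_{1}$ (which explains the second term $\tau^{6\alpha+8}$) versus the numerator perturbation $N_{1}-N_{2}$ (which supplies only the leading $\tau^{3\alpha+5}$); and one must verify that the exponents $(1-\alpha)/(2\alpha)$ and $1/(\alpha+1)$ stay in the smooth regime for the Lemma \ref{lemma:3.7} expansion on the compact intervals guaranteed by the smallness of $\|\mathbf{u}\|_{L^{\infty}}$ and by Lemma \ref{lemma:2.2}.
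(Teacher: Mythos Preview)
Your proposal is correct and follows essentially the same route as the paper: extract the $\tau$-prefactors via $\widetilde{h}=\tau^{6\alpha}h$, $\widetilde{\partial}^{\mu}\Psi=\tau^{3\alpha+1}\partial^{\mu}\Psi$, then split each difference into a numerator contribution and a denominator contribution using the identity $N_{1}/D_{1}-N_{2}/D_{2}=\bigl((N_{1}-N_{2})D_{2}+N_{2}(D_{2}-D_{1})\bigr)/(D_{1}D_{2})$, and finally reduce the elementary differences to $H(\tau,\mathbf{u})$ via Lemmas~\ref{lemma:3.6}--\ref{lemma:3.8}. The only cosmetic difference is that for the first assertion you package the whole expression as $F(\xi)$ and apply the fundamental theorem of calculus once, whereas the paper splits it explicitly into the two pieces and invokes Lemma~\ref{lemma:3.7} separately for $x\mapsto x^{1/(\alpha+1)}$ and $x\mapsto x^{(\alpha+1)/(2\alpha)}$; the content is the same.
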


\begin{proof}
From \eqref{4.29}, \eqref{4.30} and $\lambda=3\alpha+1$, we see
that $
\widetilde{h}=\tau^{6\alpha}h,\quad \overline{\widetilde{h}}=\tau^{6\alpha}\overline{h}
$
and
$$
\widetilde{\del}^{\mu}\Psi=\tau^{3\alpha+1}\del^{\mu}\Psi,\quad \widetilde{\del}^{\mu}\overline{\Psi}=\tau^{3\alpha+1}\del^{\mu}\overline{\Psi}.
$$
Then direct calculations give
\begin{eqnarray*}
&&\frac{I-\frac{1}{2}(\widetilde{h})^{\frac{\alpha+1}{2\alpha}}}{(I-(\widetilde{h})^{\frac{\alpha+1}{2\alpha}})^{\frac{1}{\alpha+1}}}
-\frac{I-\frac{1}{2}(\overline{\widetilde{h}})^{\frac{\alpha+1}{2\alpha}}}{(I-(\overline{\widetilde{h}})^{\frac{\alpha+1}{2\alpha}})^{\frac{1}{\alpha+1}}}\\
&=&\frac{-\frac{1}{2}\tau^{3(\alpha+1)}[h^{\frac{\alpha+1}{2\alpha}}-\overline{h}^{\frac{\alpha+1}{2\alpha}}]}{(I-(\tau^{6\alpha}h)^{\frac{\alpha+1}{2\alpha}})^{\frac{1}{\alpha+1}}}
+(I-\tau^{3(\alpha+1)}\overline{h}^{\frac{\alpha+1}{2\alpha}})\\
&&\times\frac{(I-\tau^{3(\alpha+1)}\overline{h}^{\frac{\alpha+1}{2\alpha}})^{\frac{1}{\alpha+1}}
-[I-\tau^{3(\alpha+1)}\overline{h}^{\frac{\alpha+1}{2\alpha}}+\tau^{3(\alpha+1)}(\overline{h}^{\frac{\alpha+1}{2\alpha}}-h^{\frac{\alpha+1}{2\alpha}})]^{\frac{1}{\alpha+1}}}
{[(I-\tau^{3(\alpha+1)}h^{\frac{\alpha+1}{2\alpha}})(I-\tau^{3(\alpha+1)}\overline{h}^{\frac{\alpha+1}{2\alpha}})]^{\frac{1}{\alpha+1}}}\\
&\sim&\frac{-\frac{1}{2}\tau^{3(\alpha+1)}[h^{\frac{\alpha+1}{2\alpha}}-\overline{h}^{\frac{\alpha+1}{2\alpha}}]}{(I-(\tau^{6\alpha}h)^{\frac{\alpha+1}{2\alpha}})^{\frac{1}{\alpha+1}}}
+\frac{(I-\tau^{3(\alpha+1)}\overline{h}^{\frac{\alpha+1}{2\alpha}})\frac{\tau^{3(\alpha+1)}}{\alpha+1}(\overline{h}^{\frac{\alpha+1}{2\alpha}}-h^{\frac{\alpha+1}{2\alpha}})}
{[(I-\tau^{3(\alpha+1)}h^{\frac{\alpha+1}{2\alpha}})(I-\tau^{3(\alpha+1)}\overline{h}^{\frac{\alpha+1}{2\alpha}})]^{\frac{1}{\alpha+1}}}\\
&\sim&\tau^{3(\alpha+1)}H(\tau,\textbf{u}).
\end{eqnarray*}
In the fifth line, we have used Lemma \ref{lemma:3.7} with $f(x)=x^{\frac{1}{\alpha+1}}$ and in the last line we have used $f(x)=x^{\frac{\alpha+1}{2\alpha}}$ and Lemma \ref{lemma:3.8}.

For \eqref{4.55}, we have
\begin{eqnarray*}
&&\frac{(\widetilde{h})^{\frac{1-\alpha}{2\alpha}}\widetilde{\del}^{\mu}\Psi\widetilde{\del}^{\nu}\Psi}{(I-(\widetilde{h})^{\frac{\alpha+1}{2\alpha}})^{\frac{1}{\alpha+1}}}
-\frac{(\overline{\widetilde{h}})^{\frac{1-\alpha}{2\alpha}}\widetilde{\del}^{\mu}\overline{\Psi}\widetilde{\del}^{\nu}\overline{\Psi}}{(I-(\overline{\widetilde{h}})^{\frac{\alpha+1}{2\alpha}})^{\frac{1}{\alpha+1}}}\\
&=&\frac{\tau^{3\alpha+5}[h^{\frac{1-\alpha}{2\alpha}}\del^{\mu}\Psi\del^{\nu}\Psi-\overline{h}^{\frac{1-\alpha}{2\alpha}}\del^{\mu}\overline{\Psi}\del^{\nu}\overline{\Psi}]}
{(I-\tau^{3(\alpha+1)}h^{\frac{\alpha+1}{2\alpha}})^{\frac{1}{\alpha+1}}}\\
&&+\frac{\tau^{3\alpha+5}\overline{h}^{\frac{1-\alpha}{2\alpha}}\del^{\mu}\overline{\Psi}\del^{\nu}\overline{\Psi}
[(I-\tau^{3(\alpha+1)}\overline{h}^{\frac{\alpha+1}{2\alpha}})^{\frac{1}{\alpha+1}}
-(I-\tau^{3(\alpha+1)}h^{\frac{\alpha+1}{2\alpha}})^{\frac{1}{\alpha+1}}]}
{[(I-\tau^{3(\alpha+1)}\overline{h}^{\frac{\alpha+1}{2\alpha}})(I-\tau^{3(\alpha+1)}h^{\frac{\alpha+1}{2\alpha}})]^{\frac{1}{\alpha+1}}}\\
&\sim&\frac{\tau^{3\alpha+5}[h^{\frac{1-\alpha}{2\alpha}}\del^{\mu}\Psi\del^{\nu}\Psi-\overline{h}^{\frac{1-\alpha}{2\alpha}}\del^{\mu}\overline{\Psi}\del^{\nu}\overline{\Psi}]}
{(I-\tau^{3(\alpha+1)}h^{\frac{\alpha+1}{2\alpha}})^{\frac{1}{\alpha+1}}}\\
&&+\frac{\tau^{6\alpha+8}\overline{h}^{\frac{1-\alpha}{2\alpha}}\del^{\mu}\overline{\Psi}\del^{\nu}\overline{\Psi}
[\overline{h}^{\frac{1+\alpha}{2\alpha}}-h^{\frac{1+\alpha}{2\alpha}}]}
{(\alpha+1)[(I-\tau^{3(\alpha+1)}\overline{h}^{\frac{\alpha+1}{2\alpha}})(I-\tau^{3(\alpha+1)}h^{\frac{\alpha+1}{2\alpha}})]^{\frac{1}{\alpha+1}}}\\
&\sim&(\tau^{3\alpha+5}+\tau^{6\alpha+8})H(\tau,\textbf{u}).
\end{eqnarray*}
In the fifth line, we have used Lemma \ref{lemma:3.7} with $f(x)=x^{\frac{1}{\alpha+1}}$ and in the last line we have used Lemma \ref{lemma:3.6} and \ref{lemma:3.8}.
\end{proof}

Since $Q^{\mu\nu}(g,\del g)$ is quadratic in $(\del g^{\mu\nu})$ and analytical in $(g^{\mu\nu})$, by Lemmas \ref{lemma:3.6} and \ref{lemma:3.8}, we easily get the following statement.

\begin{lemma}\label{lemma:3.11}
We have \begin{equation*}
Q^{\mu\nu}(g,\del g)-Q^{\mu\nu}(\eta,\del\eta)\sim H(\tau,\textbf{u}).
\end{equation*}
\end{lemma}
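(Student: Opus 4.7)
The plan is to exploit the two structural facts about $Q^{\mu\nu}(g,\del g)$ that the paper has already recorded: it is quadratic in the first derivatives $\del g$ and analytic in the components of $g$. Thus, after writing it schematically as a finite sum
\begin{equation*}
Q^{\mu\nu}(g,\del g)=\sum_{I}F_{I}^{\mu\nu}(g)\,\del_{\kappa_{1}}g^{\alpha_{1}\beta_{1}}\,\del_{\kappa_{2}}g^{\alpha_{2}\beta_{2}},
\end{equation*}
with each coefficient $F_{I}^{\mu\nu}$ analytic in a neighborhood of $\eta$, the same formula with $g$ replaced by $\eta$ holds for $Q^{\mu\nu}(\eta,\del\eta)$, and I would attack the difference term-by-term.

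For each multi-index $I$, Lemma \ref{lemma:3.6} lets me rewrite
\begin{equation*}
F_{I}^{\mu\nu}(g)\,\del_{\kappa_{1}}g^{\alpha_{1}\beta_{1}}\,\del_{\kappa_{2}}g^{\alpha_{2}\beta_{2}}-F_{I}^{\mu\nu}(\eta)\,\del_{\kappa_{1}}\eta^{\alpha_{1}\beta_{1}}\,\del_{\kappa_{2}}\eta^{\alpha_{2}\beta_{2}}
\end{equation*}
as a telescoping sum of three products, each carrying exactly one of the differences $F_{I}^{\mu\nu}(g)-F_{I}^{\mu\nu}(\eta)$, $\del_{\kappa_{1}}(g^{\alpha_{1}\beta_{1}}-\eta^{\alpha_{1}\beta_{1}})$, or $\del_{\kappa_{2}}(g^{\alpha_{2}\beta_{2}}-\eta^{\alpha_{2}\beta_{2}})$, times factors that stay bounded when $\|\textbf{u}\|_{L^{\infty}(\mathbb T^{3})}$ is small. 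Then I would substitute the representations from Lemma \ref{lemma:3.8}: both $g^{\alpha\beta}-\eta^{\alpha\beta}$ and $\del_{\lambda}(g^{\alpha\beta}-\eta^{\alpha\beta})$ are expressed as smooth functions of $\tau$ and $\textbf{u}$ that vanish identically when $\textbf{u}=0$. For the coefficient difference $F_{I}^{\mu\nu}(g)-F_{I}^{\mu\nu}(\eta)$, analyticity of $F_{I}^{\mu\nu}$ together with Lemma \ref{lemma:3.7} produces a smooth function of $g-\eta$ vanishing at the origin, which is then again rewritten via Lemma \ref{lemma:3.8} as a function of $\tau$ and $\textbf{u}$ vanishing at $\textbf{u}=0$. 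Since a finite sum of elements of the class $H(\tau,\textbf{u})$ is again in $H(\tau,\textbf{u})$, the desired conclusion follows.

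The one point requiring a little care is the book-keeping for the coefficient $F_{I}^{\mu\nu}(g)$: it depends on all ten components $g^{\mu\nu}$, while the variables in $\textbf{u}$ cover $g^{0\mu}-\eta^{0\mu}$ through $\textbf{u}^{0\mu}$ and the spatial block $g^{ij}-\eta^{ij}$ through the pair $(\textbf{u}^{ij},\textbf{q})$; I would verify that Lemma \ref{lemma:3.8} together with the definitions \eqref{4.12}, \eqref{4.15}, \eqref{4.17} covers every component and that the resulting smooth function is regular uniformly in $\tau\in[0,1]$, using that $w(\tau)$ is bounded away from zero on $[0,1]$ by Lemma \ref{lemma:2.2}. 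Once this is checked, the claim is a direct consequence of Lemmas \ref{lemma:3.6}, \ref{lemma:3.7} and \ref{lemma:3.8}, and this is the only potentially delicate step; the rest of the argument is algebraic bookkeeping.
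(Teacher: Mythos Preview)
Your proposal is correct and follows the same approach as the paper: the paper's own proof is a single sentence stating that the result follows from Lemmas~\ref{lemma:3.6} and~\ref{lemma:3.8} because $Q^{\mu\nu}$ is quadratic in $\del g$ and analytic in $g$, and your argument is a faithful expansion of precisely that reasoning. One small remark: you invoke Lemma~\ref{lemma:3.7} for the coefficient difference $F_{I}^{\mu\nu}(g)-F_{I}^{\mu\nu}(\eta)$, but that lemma is stated for scalar functions with nonvanishing derivative; what you actually need (and what the paper implicitly uses) is just that an analytic function satisfies $F(g)-F(\eta)=$ (smooth in $(g,\eta)$) $\cdot(g-\eta)$ by Taylor's theorem, after which Lemma~\ref{lemma:3.8} applies---no hypothesis on the derivative is required.
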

The other terms in $\hat{M}^{\mu\nu}$, $\acute{M}^{ij}$ and $\hat{R}^{\textbf{q}}$ can be easily expressed by the unknowns $\textbf{u}$, we neglect the detailed analysis for those terms but give the lemma to conclude the analysis of these source terms.
\begin{lemma}\label{lemma:3.12}
Assume that $\|\textbf{u}\|_{L^{\infty}(\mathbb T^{3})}$ is sufficiently small, then we have the following equivalent relationship.
\begin{eqnarray*}
\hat{M}^{0\mu}&\sim& 2\tau \textbf{u}^{00}\del^{2}_{\tau}w^{2}\delta^{0}_{\mu}-\frac{4\dot{w}}{\tau}\textbf{u}^{00}\delta^{\mu}_{0}-\frac{\dot{w}}{\tau}(\textbf{u}^{0\mu}_{0}
+3\textbf{u}^{0\mu})
-\frac{4\dot{w}}{\tau}\textbf{u}^{0i}\delta_{0}^{(\mu}\delta^{0)}_{i}\nonumber\\
&&-\frac{2\dot{w}}{\tau}\textbf{u}^{0\mu}-4\del_{\tau}(w^{2}+\frac{\dot{w}}{2})(\textbf{u}^{0\mu}\delta^{\nu}_{0}+\textbf{u}^{0\nu}\delta^{\mu}_{0})\\
&&+(2\tau^{3\alpha+1}-2\tau^{6\alpha+4})A^{\frac{2}{\alpha+1}}I^{-\frac{\alpha}{\alpha+1}}H(\tau,\textbf{u})+H(\tau,\textbf{u})\\
&\sim&H(\tau,\textbf{u}),\\
\hat{M}^{ij}&\sim&-\frac{\dot{w}}{\tau}\del_{\tau}(g^{ij}-\eta^{ij})-\frac{\dot{w}}{\tau^{2}}(g^{ij}-\eta^{ij})\nonumber\\
&&\nonumber\\
&&+(2\tau^{3\alpha+1}-2\tau^{6\alpha+4})A^{\frac{2}{\alpha+1}}I^{-\frac{\alpha}{\alpha+1}}H(\tau,\textbf{u})+H(\tau,\textbf{u})\\
&\sim& H(\tau,\textbf{u}),\\
\acute{M}^{ij}&=&(det(\check{g}_{ab}))^{\frac{1}{3}}\textbf{L}_{lm}^{ij}\hat{M}^{lm}
-g^{\kappa\lambda}\del_{\kappa}((det(\check{g}_{ab}))^{\frac{1}{3}}\textbf{L}_{lm}^{ij})\del_{\lambda}(g^{lm}-\eta^{lm})\\
&\sim& H(\tau,\textbf{u}),\\
\hat{R}^{\textbf{q}}&\sim& \hat{M}^{00}-\frac{w^{2}}{3}g_{pq}\hat{M}^{pq}+\frac{4w\del_{\tau}w}{\tau}(2\tau \textbf{u}^{00}-\textbf{q})\nonumber\\
&&+g^{\kappa\lambda}(-\frac{w^{2}}{3}\del_{\kappa}g_{pq}\del_{\lambda}(g^{pq}-\eta^{pq})
-\frac{2w\del_{\tau}w\delta^{\kappa}_{0}}{3}g_{pq}\del_{\lambda}(g^{pq}-\eta^{pq})\\
&&-
\frac{2w\del_{\tau}w}{3}\delta_{0}^{\lambda}g_{pq}\del_{\kappa}(g^{pq}-\eta^{pq}))
+g^{\kappa\lambda}\left(\frac{2(\del_{\tau}w)^{2}+2w\del_{\tau}^{2}w}{w^{2}}\delta_{0}^{\lambda}\delta_{0}^{\kappa}(2\tau \textbf{u}^{00}-\textbf{q})\right)\\
&\sim&H(\tau,\textbf{u}).
\end{eqnarray*}
\end{lemma}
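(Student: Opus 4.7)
The plan is to verify each of the four equivalences by expanding the definitions in (4.8), (4.20), and (4.24) term by term, and checking that every summand has the form of a smooth coefficient in $\tau\in[0,1]$ times a quantity that vanishes at $\textbf{u}=0$. By Remark 3.9, this is exactly what membership in $H(\tau,\textbf{u})$ requires. The proof therefore amounts to (i) applying Lemma \ref{lemma:3.8} to every metric/background difference, (ii) applying Lemma \ref{lemma:3.10} to the two thermodynamic pieces \eqref{4.54}--\eqref{4.55}, (iii) applying Lemma \ref{lemma:3.11} to the $Q$-difference, and (iv) invoking Corollary \ref{corollary:2.1} to control the $\tau$-prefactors.

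For $\hat{M}^{0\mu}$, I would read off the eight summands in \eqref{4.8}. The "linear" pieces carry scalar prefactors of the form $\dot{w}/\tau^{k}$ or $\del_{\tau}\Lambda(\tau)/\tau$ multiplying a metric difference; Lemma \ref{lemma:3.8} reduces every such difference to an element of $H(\tau,\textbf{u})$, and Corollary \ref{corollary:2.1} together with $\del_{\tau}w=-\dot{w}/(\tau w)$ gives that $\dot{w}/\tau,\ \dot{w}/\tau^{2},\ \del_{\tau}\Lambda/\tau$ extend smoothly to $[0,1]$ with leading order $\tau^{3(\alpha+1)-k}$, bounded because $3(\alpha+1)>k$ for all the values of $k$ that appear. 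The two thermodynamic pieces \eqref{4.54}--\eqref{4.55} are precisely what Lemma \ref{lemma:3.10} treats, producing factors $\tau^{3(\alpha+1)}$ or $\tau^{3\alpha+5}+\tau^{6\alpha+8}$ times $H(\tau,\textbf{u})$, which remains in $H$. The $Q$-difference is covered by Lemma \ref{lemma:3.11}. Combining yields the intermediate expansion of $\hat{M}^{0\mu}$ displayed in the lemma and the conclusion $\hat{M}^{0\mu}\sim H(\tau,\textbf{u})$. The calculation for $\hat{M}^{ij}$ is simpler but identical in structure, with the $g^{0i}$-type terms absent.

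For $\acute{M}^{ij}$, its definition splits into $(\det\check{g})^{1/3}\textbf{L}^{ij}_{lm}\hat{M}^{lm}$, which is a bounded coefficient times $\hat{M}^{lm}\in H$, and the correction $-g^{\kappa\lambda}\del_{\kappa}[(\det\check{g})^{1/3}\textbf{L}^{ij}_{lm}]\del_{\lambda}g^{lm}$; here $l,m$ are spatial, so $\del_{\lambda}\eta^{lm}=\del_{\lambda}\delta^{lm}=0$, hence $\del_{\lambda}g^{lm}=\del_{\lambda}(g^{lm}-\eta^{lm})$ and Lemma \ref{lemma:3.8} places this inside $H$, while the remaining factor is a smooth function of $(\tau,\textbf{u})$. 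For $\hat{R}^{\textbf{q}}$ I would use its definition together with \eqref{4.50}, which gives $\ln(\det g^{pq})=3(2\tau\textbf{u}^{00}-\textbf{q})/w^{2}$, visibly vanishing at $\textbf{u}=0$; every summand of $R^{\textbf{q}}$ is then a product of factors of the form $\del g$, $g_{pq}\del g^{pq}$, or $\ln(\det g^{pq})$, all expressible via \eqref{4.11} and Lemma \ref{lemma:3.8} as elements of $H$, with bounded $\tau$-prefactors from Corollary \ref{corollary:2.1}.

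The main obstacle is handling \eqref{4.54}--\eqref{4.55} cleanly: since $\overline{\widetilde{h}}\to 0$ as $\tau\to 0^{+}$, a naive Taylor expansion of $(I-\widetilde{h}^{(\alpha+1)/(2\alpha)})^{-1/(\alpha+1)}$ about $\overline{\widetilde{h}}$ would be worthless because the linear coefficient is evaluated at a degenerate point. Lemma \ref{lemma:3.10} sidesteps this by first extracting the explicit factor $\widetilde{h}^{(\alpha+1)/(2\alpha)}=\tau^{3(\alpha+1)}h^{(\alpha+1)/(2\alpha)}$ and only then applying Lemma \ref{lemma:3.7} to functions whose argument stays in a compact subset of $(0,I)$ uniformly in $\tau\in[0,1]$. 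Once this extraction is done at each occurrence, the rest is elementary bookkeeping --- checking every $\tau$-prefactor against the decay rates of Corollary \ref{corollary:2.1} to ensure no negative power of $\tau$ survives --- but the bookkeeping must be done term by term, since a single overlooked $\dot{w}/\tau^{3}$ would spoil the conclusion.
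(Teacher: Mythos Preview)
Your proposal is correct and follows essentially the same approach as the paper. The paper itself does not give a detailed proof of Lemma~\ref{lemma:3.12}; it simply states that ``the other terms in $\hat{M}^{\mu\nu}$, $\acute{M}^{ij}$ and $\hat{R}^{\textbf{q}}$ can be easily expressed by the unknowns $\textbf{u}$'' and defers regularity in $\tau$ to Corollary~\ref{corollary:2.1} (see Remark~\ref{remark:3.13}), which is precisely the term-by-term verification via Lemmas~\ref{lemma:3.8}, \ref{lemma:3.10}, \ref{lemma:3.11} that you outline.
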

\begin{remark}\label{remark:3.13}
The regularity of above source terms with respect to $\tau$ can be derived directly from Corollary \ref{corollary:2.1}.
\end{remark}

Now what remains is to estimate the source term of the fluid equation $\hat{T}-\hat{\overline{T}}$.

\begin{lemma}\label{lemma:3.14}
Suppose that $\|\textbf{u}\|_{L^{\infty}(\mathbb T^{3})}$ is sufficiently small, we have
\begin{equation*}
\hat{T}-\hat{\overline{T}}\sim H(\tau,\textbf{u}).
\end{equation*}
\end{lemma}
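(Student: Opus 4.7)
The starting observation is Remark 3.4, which tells us that the background pair $(\eta,\overline{\Psi})$ satisfies the conformal fluid equation \eqref{4.36}, and hence the functional defined by \eqref{4.44} vanishes identically when evaluated on the background, i.e., $\hat{\overline{T}}=0$. Therefore proving the lemma amounts to showing that $\hat{T}$ itself, viewed as a functional of $(g,\Psi,\del\Psi)$, belongs to the class $H(\tau,\textbf{u})$ of Remark 3.9, meaning it is smooth in $(\tau,\textbf{u})$ on $[0,1]\times\{\|\textbf{u}\|_\infty<\delta\}$ and vanishes at $\textbf{u}=0$.

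The plan is to split $\hat{T}$ into its natural groups and treat each as a difference from its background value. For the Christoffel group, involving $\Gamma^\kappa(\del_\kappa\Theta+g_{\kappa0}f(\tau))$ and $\frac{1-\alpha}{\alpha}\frac{\del^\mu\Psi\del^\nu\Psi}{h}\Gamma^\kappa_{\mu\nu}(\del_\kappa\Theta+g_{\kappa0}f(\tau))$ multiplied by $[I-(\tau^{2(\lambda-1)}h)^{(\alpha+1)/(2\alpha)}]$, I apply Lemma 3.6 to factor the difference with the background as a sum of products in which at least one factor is a difference of the form $\Gamma^\kappa-\overline{\Gamma}^\kappa$, $\del^\mu\Psi-\del^\mu\overline{\Psi}$, $g_{\kappa0}-\eta_{\kappa0}$, $h-\overline{h}$, or $\del_\kappa\Theta=\textbf{p}_\kappa$ itself. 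By the wave-gauge construction \eqref{4.5} (so that $\Gamma^\kappa$ is expressible through $Y^\kappa$ and $g^{\mu 0}-\eta^{\mu 0}$) together with Lemma 3.8, every such difference lies in $H(\tau,\textbf{u})$; and the remaining (non-difference) factors are smooth functions of the background quantities and of quantities that are close to background, hence bounded uniformly in $\tau$.

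For the second group, $\frac{1}{\alpha}(\tau^{2(\lambda-1)}h)^{(1-\alpha)/(2\alpha)}\bigl[\tau^{2(\lambda-1)}\del^\mu\Psi\del^\nu\Psi\Gamma^\kappa_{\mu\nu}(\del_\kappa\Theta+g_{\kappa0}f(\tau))+(\lambda-1)\tau^{2\lambda-3}h\bigr]$, the same factoring strategy works, but one must mind the apparently ``stand-alone'' summand $(\lambda-1)\tau^{2\lambda-3}h$ which does not contain an explicit $\Theta$-factor. This summand is compensated by its counterpart in $\hat{\overline{T}}=0$, so one writes $\tau^{2\lambda-3}h-\tau^{2\lambda-3}\overline{h}$ and invokes the last identity of Lemma 3.8, giving a term in $H(\tau,\textbf{u})$. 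With $\lambda=3\alpha+1$ the explicit $\tau$-prefactors are non-negative powers and, because of Remark 2.1, the denominator $I-(\tau^{2(\lambda-1)}h)^{(\alpha+1)/(2\alpha)}$ stays bounded away from zero, so Lemma 3.7 is applicable. The third group, $-B^{00}\del_\tau(g_{00}f)-2B^{0i}\del_i(g_{00}f)-B^{ij}\del_i(g_{j0}f)$, is handled analogously: the $B^{\mu\nu}$ are analytic in $(g^{\mu\nu},\del^\mu\Psi,h,\tau)$, so $B^{\mu\nu}-\overline{B}^{\mu\nu}\in H(\tau,\textbf{u})$ by Lemmas 3.6 and 3.8, while $\del_\mu(g_{\kappa0}f)-\del_\mu(\eta_{\kappa0}f)$ splits by the Leibniz rule into pieces whose non-$f$ factors are in $H(\tau,\textbf{u})$.

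The main obstacle I anticipate is not algebraic but a bookkeeping one: verifying that the coefficient functions are truly smooth in $\tau$ down to $\tau=0$, since $f(\tau)=(IK)^{\alpha/(\alpha+1)}w(\tau)/(1+K\tau^{3(\alpha+1)})^{\alpha/(\alpha+1)}$ and its $\tau$-derivative involve the background quantities $w,\dot w,\del_\tau\dot w$. By Corollary 2.1 these are all smooth on $[0,1]$, with the stated vanishing rates at $\tau=0$ that exactly match the $\tau^{2\lambda-3}=\tau^{6\alpha-1}$ and $\tau^{2(\lambda-1)}=\tau^{6\alpha}$ weights appearing in the formulas. Combining this regularity with the factored difference structure gives $\hat{T}-\hat{\overline{T}}\sim H(\tau,\textbf{u})$ as claimed.
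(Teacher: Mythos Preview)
Your proposal is correct and follows essentially the same approach as the paper: split $\hat{T}-\hat{\overline{T}}$ into its constituent differences, apply Lemmas \ref{lemma:3.6}--\ref{lemma:3.8} to each, and verify $\tau$-regularity of the coefficient functions (in particular $\del_\tau f(\tau)$) via Corollary \ref{corollary:2.1}. The paper's proof simply lists the seven specific differences to be checked rather than organizing them into your three groups, but the substance is identical.
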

\begin{proof}
As before, this can be derived by checking the following terms
\begin{eqnarray*}
\Gamma^{\kappa}(g_{\kappa0}f(\tau))&-&\overline{\Gamma}^{\kappa}(\eta_{\kappa0}f(\tau)),\vspace{2mm}\\
\tau^{3\alpha+2}(h^{\frac{1+\alpha}{2\alpha}}&-&\overline{h}^{\frac{1+\alpha}{2\alpha}}),\vspace{2mm}\\
B^{00}\del_{\tau}(g_{00}f(\tau))&-&\overline{B}^{00}\del_{\tau}(\eta_{00}f(\tau)),\vspace{2mm}\\
B^{0i}\del_{i}(g_{00}f(\tau))&-&\overline{B}^{0i}\del_{i}(\eta_{00}f(\tau)),\vspace{2mm}\\
B^{ij}\del_{i}(g_{0j}f(\tau))&-&\overline{B}^{ij}\del_{i}(\eta_{0j}f(\tau)),\vspace{2mm}\\
\frac{\del^{\mu}\Psi\del^{\nu}\Psi}{h}\Gamma^{\kappa}_{\mu\nu}(g_{\kappa0}f(\tau))&-&
\frac{\del^{\mu}\overline{\Psi}\del^{\nu}\overline{\Psi}}{\overline{h}}\overline{\Gamma}^{\kappa}_{\mu\nu}(\eta_{\kappa0}f(\tau)),\vspace{2mm}\\
\tau^{3(\alpha+1)}[h^{\frac{1-\alpha}{2\alpha}}\del^{\mu}\Psi\del^{\nu}\Psi\Gamma^{\kappa}_{\mu\nu}(g_{\kappa0}f(\tau))
&-&\overline{h}^{\frac{1-\alpha}{2\alpha}}\del^{\mu}\overline{\Psi}\del^{\nu}\overline{\Psi}\Gamma^{\kappa}_{\mu\nu}(\eta_{\kappa0}f(\tau)].
\end{eqnarray*}
Above seven terms are easy to analyze based on Lemmas \ref{lemma:3.6}-\ref{lemma:3.8}. We need to analyze the regularity of the following term with respect to $\tau$
$$
\del_{\tau}(f(\tau))=\del_{\tau}\left(\frac{(IK)^{\frac{\alpha}{\alpha+1}}w}{(I+K\tau^{3(\alpha+1)})^{\frac{\alpha}{\alpha+1}}}\right)=
-\frac{(IK)^{\frac{\alpha}{\alpha+1}}\dot{w}}{\tau w(I+K\tau^{3(\alpha+1)})^{\frac{\alpha}{\alpha+1}}}-
\frac{(IK)^{\frac{\alpha}{\alpha+1}}3\alpha Kw\tau^{3\alpha+2}}{(I+K\tau^{3(\alpha+1)})^{\frac{2\alpha+1}{\alpha+1}}}.
$$
According to Corollary \ref{corollary:2.1}, this term is regular when $\tau\rightarrow0$.
\end{proof}

We conclude sections \ref{section:3.2}-\ref{section:3.4} by an important proposition.

\begin{proposition}\label{prop:3.15}
Under the wave coordinates $Z^{\mu}=0$, the whole system \eqref{1.11} is equivalent to the following symmetric hyperbolic system
\begin{equation}\label{4.56}
\aligned
& B^{\mu}\del_{\mu}\textbf{u}=\frac{1}{\tau}\textbf{BP}\textbf{u}+H(\tau,\textbf{u})\;[1,0)\times\mathbb{T}^{3},
\\
&\textbf{u}=\textbf{u}_{1}\quad\quad\quad\qquad\quad \quad\quad\quad\text{in}\;{1}\times\mathbb T^{3}, 
\endaligned
\end{equation}
where $B^{\mu}$, $\textbf{B}$ and \textbf{P} are defined by \eqref{4.26}-\eqref{4.28}, \eqref{4.45} and satisfy the constraints of the general symmetric hyperbolic system discussed in Section \ref{section:3.5}. 
\end{proposition}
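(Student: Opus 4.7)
The plan is to assemble Proposition \ref{prop:3.15} from the work already laid out in Sections \ref{section:3.2}--\ref{section:3.4}, treating it essentially as a bookkeeping statement. The equation for the unknown $\textbf{u}=(\textbf{u}^{0\mu},\textbf{u}^{0\mu}_{\nu},\textbf{u}^{ij},\textbf{u}^{ij}_{\mu},\textbf{q},\textbf{q}_{\mu},\textbf{p}_{\mu})$ in \eqref{4.56} is obtained by stacking the three reduced Einstein subsystems \eqref{4.26}--\eqref{4.28} together with the fluid subsystem \eqref{4.45}. I would first verify that, after this stacking, the combined coefficient matrix $B^{\mu}$ is block-diagonal with the blocks $A^{\mu}$ (from the Einstein blocks) and $B^{\mu}$ (from the fluid block), so checking symmetry reduces to inspecting each block individually. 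For the Einstein blocks the symmetry of $A^{0}$ and $A^{k}$ is immediate from their explicit forms in Section \ref{section:3.2}; for the fluid block the symmetry of $B^{0}$ and $B^{i}$ follows from the symmetry of $B^{\mu\nu}$ in its two indices, which is built into the definitions at the end of Section \ref{section:3.3}.

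Next I would verify the positivity of $B^{0}$ on a neighborhood of the background. For the Einstein block $A^{0}$, using $g^{00}<0$ and $(g^{ij})$ positive definite for $g$ close to $\eta$ gives the needed positive definiteness. For the fluid block $B^{0}$ the positivity follows from $B^{00}>0$ together with positive definiteness of $(-B^{jk})$ for perturbations small enough to keep $I-(\tau^{2(\lambda-1)}h)^{(\alpha+1)/(2\alpha)}>0$ (Remark \ref{remark:2.1}); here the choice $\lambda=3\alpha+1$ from Section \ref{section:3.3} is what guarantees that the $\tau$-dependent factors entering $B^{\mu\nu}$ remain nondegenerate at $\tau=0$, so positivity is uniform down to the boundary.

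Having symmetry and positivity, the remaining step is to identify the right-hand side with $\frac{1}{\tau}\textbf{BP}\textbf{u}+H(\tau,\textbf{u})$. The singular contributions $\frac{1}{\tau}\textbf{AP}$, $\frac{1}{\tau}(-2g^{00})\Pi$ and $\frac{1}{\tau}(-2g^{00})\Pi$ already appear explicitly in \eqref{4.26}--\eqref{4.28}; assembling them into a single block-diagonal matrix $\textbf{BP}$ with projection $\textbf{P}$ acting as in \eqref{4.26} on the $g^{0\mu}$ block and as $\Pi$ on the remaining blocks gives the singular term as stated. The regular contributions $F^{0\mu}$, $F^{ij}$, $F^{\textbf{q}}$ and $\hat{T}-\hat{\overline{T}}$ must then be shown to belong to the class $H(\tau,\textbf{u})$ of Remark \ref{remark:3.9}; this is precisely the content of Lemmas \ref{lemma:3.10}--\ref{lemma:3.12} and \ref{lemma:3.14}, which moreover guarantee that no negative powers of $\tau$ appear in these remainders thanks to Corollary \ref{corollary:2.1}.

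The main obstacle, and in effect the whole substance hidden behind this proposition, is the verification that the nonlinear remainders genuinely vanish on the background and are regular up to $\tau=0$. This has already been carried out in Section \ref{section:3.4}, but the step requires the precise cancellations between the fluid source $\widetilde{T}^{\mu\nu}$ and the background cosmological-constant-like contribution encoded in $\Lambda(\tau)=3w^{2}+\tfrac{3\dot{w}}{2}$, which in turn relies on the FRW identities of Lemma \ref{lemma:2.2} together with the conformal choice $\lambda=3\alpha+1$. Once these ingredients are cited, the proposition follows by simply writing out the assembled system and pointing to the corresponding lemmas for each piece; no further computation is needed beyond collecting the structural constraints required in Section \ref{section:3.5}, namely symmetry of $B^{\mu}$, positivity of $B^{0}$, the projection identity $\textbf{P}^{2}=\textbf{P}$, and commutativity $[\textbf{P},\textbf{B}]=0$, all of which can be read off the explicit forms given above.
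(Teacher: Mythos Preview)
Your proposal is essentially correct and matches the paper's approach: Proposition~\ref{prop:3.15} is stated in the paper without proof, as a summary of Sections~\ref{section:3.2}--\ref{section:3.4}, and your plan of assembling \eqref{4.26}--\eqref{4.28} and \eqref{4.45} block-diagonally while citing Lemmas~\ref{lemma:3.10}--\ref{lemma:3.14} for the regularity of the source terms is exactly what is intended. Two small points: first, your sign in the fluid positivity check is off---from the definition one has $B^{00}<0$ (not $>0$), so it is $-B^{0}$ that is positive definite, consistent with the time reversal $\tau\to-\tau$; second, the paper actually postpones this fluid positivity verification to Section~\ref{section:3.6}, where it is carried out using the a~priori smallness from Proposition~\ref{pro:3.16} rather than as part of the proposition itself.
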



\subsection{A class of symmetric hyperbolic systems}\label{section:3.5}

Consider the following symmetric hyperbolic system.
\begin{eqnarray}
B^{\mu}\del_{\mu}u&=&\frac{1}{t}\textbf{BP}u+H\quad\text{in}\;[T_{0},T_{1}]\times\mathbb{T}^{n},\\
u&=&u_{0}\quad\quad\quad\qquad\text{in}\;{T_{0}}\times\mathbb T^{n},
\end{eqnarray}
where

(i)\;$T_{0}<T_{1}\leq0$,

(ii)\;$\textbf{P}$ is a constant, symmetric projection operator, i.e., $\textbf{P}^{2}=\textbf{P}$, $\textbf{P}^{T}=\textbf{P}$,

(iii)\;$u=u(t,x)$ and $H(t,u)$ are $\mathbb R^{N}$-valued maps, $H\in C^{0}([T_{0},0],C^{\infty}(\mathbb R^{N}))$ and satisfies $H(t,0)=0$,

(iv)\; $B^{\mu}=B^{\mu}(t,u)$ and $\textbf{B}=\textbf{B}(t,u)$ are $\mathbb M_{N\times N}$-valued maps, and $B^{\mu},\,\textbf{B}\in C^{0}([T_{0},0],C^{\infty}(\mathbb R^{N}))$ and they satisfy
$$
(B^{\mu})^{T}=B^{\mu},\quad [\textbf{P}, \textbf{B}]=\textbf{PB}-\textbf{BP}=0,
$$

(v)\,there exists constants $\kappa,\,\gamma_{1},\,\gamma_{2}$ such that
$$
\frac{1}{\gamma_{1}}\mathbb I\leq B^{0}\leq \frac{1}{\kappa}\textbf{B}\leq\gamma_{2}\mathbb I
$$
for all $(t,u)\in[T_{0},0]\times\mathbb R^{N}$,

(vi)\;for all $(t,u)\in[T_{0},0]\times\mathbb R^{N}$, we have
 $$
 \textbf{P}^{\bot}B^{0}\textbf{P}=\textbf{P}B^{0}\textbf{P}^{\bot}=0,
$$
where
$
\textbf{P}^{\bot}=\mathbb I-\textbf{P}
$
is the orthogonal projection operator.

We will be able to conclude our argument with tthe help of the following result, whose proof relies on standard energy estimates; see  \cite{O}.

\begin{proposition}\label{pro:3.16}
Suppose that $k\geq \frac{n}{2}+1$, $u_{0}\in H^{k}(\mathbb T^{n})$ and assumptions (i)-(vi) are fulfilled. Then there exists a $T_{\ast}\in (T_{0},0)$, and a unique classical solution $u\in C^{1}([T_{0},T_{\ast}]\times\mathbb T^{n})$ that satisfies $u\in C^{0}([T_{0},T_{\ast}],H^{k})\cap C^{1}([T_{0},T_{\ast}],H^{k-1})$ and the energy estimate
$$
\|u(t)\|_{H^{k}}^{2}-\int_{T_{0}}^{t}\frac{1}{\tau}\|\textbf{P}u\|_{H^{k}}^{2}\leq
Ce^{C(t-T_{0})}(\|u(T_{0})\|_{H^{k}}^{2})
$$
for all $T_{0}\leq t<T_{\ast}$, where
$
C=C(\|u\|_{L^{\infty}([T_{0},T_{\ast}),H^{k})},\gamma_{1},\gamma_{2},\kappa),
$
and can be uniquely continued to a larger time interval $[T_{0},T^{\ast})$ for all $T^{\ast}\in(T_{\ast},0]$ provided $\|u\|_{L^{\infty}([T_{0},T_{\ast}),W^{1,\infty})}<\infty$.

Moreover, there exists a $\delta>0$ such that if $\|u_{0}\|_{H^{k}}\leq \delta$, then the solution exists on the time interval $[T_{0},0)$ and can be uniquely extended to $[T_{0},0]$ as an element of $C^{0}([T_{0},0],H^{k-1})$ satisfying
\begin{eqnarray*}
\|\textbf{P}u(\tau)\|_{H^{k-1}}\leq C\delta\left\{\begin{array}{ll}
-t &\text{if}\,\kappa>1,\\
t\ln(\frac{t}{T_{0}}) & \text{if}\, \kappa=1,\\
(-t)^{k}&\text{if}\, \kappa<1
\end{array}
\right.
\end{eqnarray*}
and
\begin{eqnarray*}
\|\textbf{P}^{\bot}u(\tau)-\textbf{P}^{\bot}u(0)\|_{H^{k-1}}\leq C\delta\left\{\begin{array}{ll}
-t &\text{if}\,\kappa\geq1\,\text{or}\,[B^{0},\textbf{P}]=0\\
-t+(-t)^{2\kappa} & \text{if}\, \kappa<1
\end{array}
\right.
\end{eqnarray*}
for $T_{0}\leq t\leq 0$.
\end{proposition}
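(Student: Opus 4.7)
The result is a local-and-global existence theorem for a Fuchsian-type symmetric hyperbolic system approaching the singular time $t=0$, and I would prove it by adapting the classical symmetric-hyperbolic energy method to handle the singular coefficient $\tfrac{1}{t}\mathbf{B}\mathbf{P}$, following Oliynyk \cite{O}. First I would establish local existence on an interval $[T_0,T_\ast]$ with $T_\ast\in(T_0,0)$: since $B^0$ is uniformly positive definite and the $B^i$ are symmetric (hypotheses (iv)--(v)), while the coefficient $\tfrac{1}{t}\mathbf{B}(t,u)\mathbf{P}$ is smooth and bounded away from $t=0$, Kato's theory for symmetric hyperbolic systems yields a unique classical solution $u\in C^0([T_0,T_\ast],H^k)\cap C^1([T_0,T_\ast],H^{k-1})$ together with the standard $W^{1,\infty}$ continuation criterion.

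The energy estimate is obtained by differentiating $E_k(t)=\sum_{|\beta|\leq k}\langle\partial^\beta u,B^0\partial^\beta u\rangle_{L^2}$ in $t$ after commuting $\partial^\beta$ through the equation. Using symmetry of $B^\mu$, the principal terms become $\tfrac{2}{t}\sum_{|\beta|\leq k}\langle\partial^\beta u,\mathbf{B}\mathbf{P}\partial^\beta u\rangle$ plus commutator contributions controlled by Moser and Kato--Ponce estimates, the divergence $\langle u,(\partial_\mu B^\mu)u\rangle$, and the source $\langle u,H\rangle$. The key observation exploits hypotheses (iv) and (v): from $[\mathbf{P},\mathbf{B}]=0$, $\mathbf{P}^2=\mathbf{P}$, $\mathbf{P}^T=\mathbf{P}$ one has $\langle\partial^\beta u,\mathbf{B}\mathbf{P}\partial^\beta u\rangle=\langle\mathbf{P}\partial^\beta u,\mathbf{B}\mathbf{P}\partial^\beta u\rangle\geq\kappa\langle\mathbf{P}\partial^\beta u,B^0\mathbf{P}\partial^\beta u\rangle$, and since $t<0$ this produces the positive term $-\tfrac{2\kappa}{\tau}\|\mathbf{P}u\|_{H^k}^2$ on the right-hand side, which after rearrangement is exactly the $-\int_{T_0}^t\tfrac{1}{\tau}\|\mathbf{P}u\|^2$ gain term claimed in the energy inequality. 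Gr\"onwall's lemma then closes the estimate and continuation in $H^k$ follows from the $W^{1,\infty}$ criterion.

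Global existence up to $[T_0,0)$ under smallness $\|u_0\|_{H^k}\leq\delta$ follows by a bootstrap: the energy bound together with $H(t,0)=0$ and smoothness of $H$ in $u$ yields $\|u\|_{W^{1,\infty}}\leq C\delta$, preventing blow-up before $t=0$. For the asymptotic rates I would project the equation onto $\mathbf{P}$ and $\mathbf{P}^\perp$: hypothesis (vi) ensures that $\mathbf{P}B^0\mathbf{P}^\perp=\mathbf{P}^\perp B^0\mathbf{P}=0$, so in the $B^0$ inner product the two components couple only through lower-order terms. The equation for $\mathbf{P}u$ then reduces to a Fuchsian ODE of the form $\partial_t(\mathbf{P}u)=-\tfrac{\kappa}{t}\mathbf{P}u+(\text{regular remainder})$, which, tested against a weighted energy of the form $(-t)^{-2\kappa}\|\mathbf{P}u\|_{H^{k-1}}^2$, yields the decay rates $-t$, $t\ln(t/T_0)$, and $(-t)^\kappa$ according to whether $\kappa>1$, $\kappa=1$, or $\kappa<1$. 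Continuation to $t=0$ and the existence of the limit $\mathbf{P}^\perp u(0)$ follow by integrating $\partial_t(\mathbf{P}^\perp u)=(\text{source})$ in $t$, the source being $L^1$ near $0$ thanks to the just-established decay of $\mathbf{P}u$ and of $H(t,u)$.

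The main obstacle is extracting the precise $\kappa$-dependent rates. One must carefully use condition (vi) to decouple $\mathbf{P}u$ from $\mathbf{P}^\perp u$ at principal order so that the Fuchsian ODE for $\mathbf{P}u$ has a clean leading coefficient, then control the quadratic and nonlinear remainders uniformly as $t\to 0^-$; the borderline case $\kappa=1$ is delicate and must be handled separately, producing the logarithmic correction. Away from this Fuchsian asymptotic analysis, the argument is the standard symmetric hyperbolic machinery, and full details appear in Oliynyk \cite{O}.
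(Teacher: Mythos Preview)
Your proposal is correct and aligns with the paper's treatment: the paper does not give an independent proof of this proposition but simply states that it ``relies on standard energy estimates; see \cite{O}'', and your sketch is precisely an outline of Oliynyk's argument in \cite{O}. You in fact supply more detail than the paper does, and the key structural points you identify---the sign gain from $t<0$ in $\tfrac{1}{t}\langle \partial^\beta u,\mathbf{B}\mathbf{P}\partial^\beta u\rangle$ via $[\mathbf{P},\mathbf{B}]=0$ and $\mathbf{B}\geq\kappa B^0$, and the use of (vi) to decouple the $\mathbf{P}$ and $\mathbf{P}^\perp$ components at principal order for the Fuchsian asymptotics---are exactly the mechanism behind the result.
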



\subsection{Completion of the proof of the main result}\label{section:3.6}

Based on the analysis of subsections \ref{section:3.1}-\ref{section:3.5}, we have transformed the Einstein-Euler equations of GCG  into a symmetric hyperbolic system \eqref{4.56}. After a simple coordinate transformation $\tau\rightarrow -\tau$, the derived system is just the one considered in Section \ref{section:3.5}, thus we can use the main result of the Section \ref{section:3.5} to get Theorems \ref{theorem:1.2} and \ref{theorem:1.4} directly under the assumption that the matrix $-B^{0}$ of subsection \ref{section:3.3} is positive definite. Thus, we need to prove that $-B^{0}$ is positive definite based on the a priori estimates of Proposition \ref{pro:3.16}.

Obviously, from the smallness of the unknowns of Proposition \ref{pro:3.16}, we have the following equivalent relationships
\begin{eqnarray}\label{4.57}
h&=&-g_{\mu\nu}\del^{\mu}\Psi\del^{\nu}\Psi\sim -g_{00}f^{2}(\tau)\sim\overline{h},\\\label{4.58}
\del^{\tau}\Psi&=&g^{0\mu}\del_{\mu}\Psi\sim \del^{\tau}\overline{\Psi}\sim f(\tau),\\\label{4.59}
\del^{i}\Psi&=&g^{i\mu}\del_{\mu}\Psi\sim C\epsilon.
\end{eqnarray}
Thus, we have
\begin{eqnarray*}
-B^{00}&\sim&-(I-(\tau^{2(\lambda-1)}\overline{h})^{\frac{1+\alpha}{2\alpha}})(1+\frac{1-\alpha}{\alpha})\eta^{00}\nonumber\\
&&+\frac{1}{\alpha}(\tau^{2(\lambda-1)}\overline{h})^{\frac{1-\alpha}{2\alpha}}\tau^{2(\lambda-1)}(f(\tau))^{2}>0,\\
B^{jk}&\sim&(I-(\tau^{2(\lambda-1)}h)^{\frac{1+\alpha}{2\alpha}})\left(\delta^{jk}-\frac{1-\alpha}{\alpha}\frac{C^{2}\epsilon^{2}}{-g_{00}f^{2}(\tau)}\right)\nonumber\\
&&-\frac{1}{\alpha}(\tau^{2(\lambda-1)}\overline{h})^{\frac{1-\alpha}{2\alpha}}\tau^{2(\lambda-1)}C^{2}\epsilon^{2}
\sim(I-(\tau^{2(\lambda-1)}\overline{h})^{\frac{1+\alpha}{\alpha}})\delta^{jk}.
\end{eqnarray*}
Clearly, from \eqref{4.57}-\eqref{4.59} and Remarks \ref{remark:2.1}, \ref{remark:2.3}, $-B^{0}$ is a positive definite matrix. Hence, all the assumptions of Section \ref{section:3.5} are satisfied. Then we have by Proposition \ref{pro:3.16}
$$
\|\textbf{u}(\tau)\|_{H^{k}}\leq C\epsilon.
$$
By Lemma \ref{lemma:3.8}, we can equivalently get
\begin{equation}\label{4.60}
\|g^{\mu\nu}(\tau)-\eta^{\mu\nu}(\tau)\|_{H^{k+1}}+\|\del_{\kappa}g^{\mu\nu}(\tau)-\del_{\kappa}\eta^{\mu\nu}(\tau)\|_{H^{k}}
+\|\del_{\mu}\Psi(\tau)-\del_{\mu}\overline{\Psi}(\tau)\|_{H^{k}}\leq C\epsilon.
\end{equation}
Moreover, based on \eqref{4.56}, we have the following asymptotic behavior:
\begin{eqnarray}\label{4.61}
\|\textbf{P}(\textbf{u}^{0\mu}_{0}(\tau),\textbf{u}_{i}^{0\mu}(\tau),\textbf{u}^{0\mu}(\tau))^{T}\|_{H^{k-1}}&\leq& -C\epsilon \tau ln(\tau),\\\label{4.62}
\|\Pi(\textbf{u}_{0}^{ij}(\tau),\textbf{u}_{k}^{ij}(\tau),\textbf{u}^{ij}(\tau))^{T},\Pi(\textbf{u}_{0}(\tau),\textbf{u}_{k}(\tau),
\textbf{u}(\tau))^{T}\|_{H^{k-1}}&\leq&C\epsilon\tau.
\end{eqnarray}
And
\begin{eqnarray}\label{4.63}
\|(\mathbb I-\textbf{P})(\textbf{u}^{0\mu}_{0},\textbf{u}_{i}^{0\mu},\textbf{u}^{0\mu})^{T}(\tau)-(\mathbb I-\textbf{P})(\textbf{u}^{0\mu}_{0}(0),\textbf{u}_{i}^{0\mu}(0),\textbf{u}^{0\mu}(0))^{T}\|_{H^{k-1}}&\leq&C\epsilon \tau,\\\label{4.64}
\|(\mathbb I-\Pi)(\textbf{u}_{0}^{ij},\textbf{u}_{k}^{ij},\textbf{u}^{ij})^{T}(\tau)-(\mathbb I-\Pi)(\textbf{u}_{0}^{ij}(0),\textbf{u}_{k}^{ij}(0),\textbf{u}^{ij}(0))^{T}\|_{H^{k-1}}&\leq&C\epsilon \tau,\\\label{4.65}
\|(\mathbb I-\Pi)(\textbf{u}_{0},\textbf{u}_{k},\textbf{u})^{T}(\tau)-(\mathbb I-\Pi)(\textbf{u}_{0}(0),\textbf{u}_{k}(0),\textbf{u}(0))^{T}\|_{H^{k-1}}&\leq&C\epsilon \tau,\\\label{4.66}
\|\textbf{p}_{\mu}-\textbf{p}_{\mu}(0)\|_{H^{k-1}}&\leq&C\epsilon \tau.
\end{eqnarray}
Where in \eqref{4.61} and \eqref{4.62}, we have used $\kappa=1$ and $\kappa=2$ respectively.
From \eqref{4.60}-\eqref{4.66}, we have
\begin{eqnarray*}
\|\del_{\tau}(g^{0\mu}(\tau)-\eta^{0\mu}(\tau))-\frac{1}{\tau}(g^{0\mu}(\tau)-\eta^{0\mu}(\tau))\|_{H^{k-1}}+\|\del_{i}g^{0\mu}(\tau)\|_{H^{k-1}}&\leq&-C\epsilon \tau ln(\tau),\\
\|\del_{\tau}\textbf{q}(\tau)\|_{H^{k-1}}+\|\del_{\tau}\textbf{g}^{ij}(\tau)\|_{H^{k-1}}&\leq&C\epsilon \tau,\\
\|\del_{\tau}(g^{0\mu}(\tau)-\eta^{0\mu}(\tau))-\frac{2}{\tau}(g^{0\mu}(\tau)-\eta^{0\mu}(\tau))+\gamma^{\mu}\|_{H^{k-1}}&\leq&C\epsilon \tau,\\
\|\del_{i}\textbf{q}(\tau)-\del_{i}\textbf{q}(0)\|_{H^{k-1}}+\|\textbf{q}(\tau)-\textbf{q}(0)\|_{H^{k-1}}
\\+\|\del_{i}\textbf{g}^{ij}(\tau)
-\del_{i}\textbf{g}^{ij}(0)\|_{H^{k-1}}+\|\textbf{g}^{ij}(\tau)-\textbf{g}^{ij}(0)\|_{H^{k-1}}&\leq&C\epsilon \tau,\\
\|\del_{\mu}\Psi(\tau)-\del_{\mu}\Psi(0)\|_{H^{k-1}}&\leq&C\epsilon \tau,\\
\|\del_{\mu}\Psi(0)-\del_{\mu}\overline{\Psi}(0)\|_{H^{k-1}}&\leq&C\epsilon,\\
\|\del_{i}\Psi(0)\|_{H^{k-1}}&\leq&C\epsilon,
\end{eqnarray*}
where we have set
$$\gamma^{\mu}=\lim_{\tau\rightarrow0}-
[\del_{\tau}(g^{0\mu}(\tau)-\eta^{0\mu}(\tau))-\frac{2}{\tau}(g^{0\mu}(\tau)-\eta^{0\mu}(\tau))]
=\lim_{\tau\rightarrow 0}(\textbf{u}^{0\mu}(\tau)-\textbf{u}_{0}^{0\mu}(\tau)).
$$
On the other hand, we have
$$
\rho\sim\frac{A^{\frac{2}{\alpha+1}}I^{\frac{1}{\alpha+1}}}{\left(I-\tau^{3(\alpha+1)}(-\eta_{00}f^{2}(\tau))^{\frac{\alpha+1}{2\alpha}}\right)^{\frac{1}{\alpha+1}}}.
$$
Thus, when $\tau\rightarrow0$, we have
$\rho\rightarrow A^{\frac{2}{\alpha+1}}$
and
$
p_{\rho}=\frac{\alpha A^{2}}{\rho^{\alpha+1}}\rightarrow \alpha.
$
The proof of Theorem \ref{theorem:1.4} is completed.


\section*{Acknowledgement}

This paper was written  in the Fall 2015 
at the Institut Henri Poincar\'e (IHP, Paris) during the Trimester Program ``Mathematical General Relativity'', which was organized by L. Andersson, S. Klainerman, and P.G. LeFloch.
The authors would like to thank Chao Liu for helpful discussions on the subject of this paper.
The first author (PGLF) gratefully acknowledges support from the Agence Nationale de la Recherche through grant ANR SIMI-1-003-01. The second author (CW) is grateful to the Fondation des Sciences Math\'{e}matiques de Paris (FSMP) for financial support during his stay at IHP.


\end{document}